\renewcommand\@biblabel[1]{#1.}
\newcommand{\real}{\mathbb{R}}
\newcommand{\integer}{\mathbb{Z}}
\newcommand{\phenom}[0]{local suboptimality}
\newcommand{\Phenom}[0]{Local suboptimality}
\newcommand{\PheNom}[0]{Local Suboptimality}
\newcommand{\Nash}[0]{\textrm{Nash}}
\newcommand{\V}[1]{\bm{#1}}
\newcommand{\vsigma}{\V{\sigma}}
\newcommand{\va}{\V{a}}
\newcommand{\vc}{\V{c}}
\newcommand{\vmu}{\V{\mu}}
\newcommand{\gls}{\mathcal{G}_{LS}}
\newcommand{\glo}{\mathcal{G}_{LO}}
\newcommand{\glspp}{\mathcal{G}_{LS}^{p,p}}
\newcommand{\glopp}{\mathcal{G}_{LO}^{p,p}}
\newcommand{\glsmm}{\mathcal{G}_{LS}^{m,m}}
\newcommand{\glomm}{\mathcal{G}_{LO}^{m,m}}
\newcommand{\glsmp}{\mathcal{G}_{LS}^{m,p}}
\newcommand{\glomp}{\mathcal{G}_{LO}^{m,p}}
\newcommand{\spe}[0]{\textrm{SPE}}
\newcommand{\vsmin}{\vsigma^{\min}}
\newcommand{\vsmax}{\vsigma^{\max}}
\newcommand{\condSet}[2]{\left\{#1 \;\middle|\; #2\right\}}
\tikzset{cross/.style={cross out, draw=black, minimum size=2*(#1-\pgflinewidth),
    inner sep=0pt, outer sep=0pt},
cross/.default={1pt}
}
\algrenewcommand\algorithmicrequire{\textbf{Input:}}
\algrenewcommand\algorithmicensure{\textbf{Output:}}
\algnewcommand{\algorithmicand}{\textbf{ and }}
\algnewcommand{\algorithmicor}{\textbf{ or }}
\algnewcommand{\OR}{\algorithmicor}
\algnewcommand{\AND}{\algorithmicand}
\crefname{question}{Question}{Questions}
\begin{document}
\title{Emergence of Locally Suboptimal Behavior in Finitely Repeated Games}
%
%
\author{Yichen Yang \and
Martin Rinard
}
%
\authorrunning{Yichen Yang \and
Martin Rinard
}
%
\institute{Department of Electrical Engineering and Computer Science, \\
Massachusetts Institute of Technology, USA \\
\email{\{yicheny,rinard\}@csail.mit.edu}}
%
\maketitle              

\begin{abstract}
We study the emergence of locally suboptimal behavior in finitely repeated games. Locally suboptimal behavior refers to players play suboptimally in some rounds of the repeated game (i.e., not maximizing their payoffs in those rounds) while maximizing their total payoffs in the whole repeated game. The central research question we aim to answer is when locally suboptimal behavior can arise from rational play in finitely repeated games. 
In this research, we focus on the emergence of locally suboptimal behavior in subgame-perfect equilibria (SPE) of finitely repeated games with complete information.
We prove the first sufficient and necessary condition on the stage game $G$ that ensure that, for all $T$ and all subgame-perfect equilibria of the repeated game $G(T)$, the strategy profile at every round of $G(T)$ forms a Nash equilibrium of the stage game $G$.
We prove the sufficient and necessary conditions for three cases: 1) only pure strategies are allowed, 2) the general case where mixed strategies are allowed, and 3) one player can only use pure strategies and the other player can use mixed strategies. 
Based on these results, we obtain complete characterizations on when allowing players to play mixed strategies will change whether \phenom{} can ever occur in some repeated game.
Furthermore, we present an algorithm for the computational problem of, given an arbitrary stage game, deciding if locally suboptimal behavior can arise in the corresponding finitely repeated games. This addresses the practical side of the research question.
\end{abstract}


\section{Introduction}

We study the emergence of locally suboptimal behavior in finitely repeated games. Locally suboptimal behavior refers to players play suboptimally in some rounds of the repeated game (i.e., not maximizing their payoffs in those rounds) while maximizing their total payoffs in the whole repeated game. The emergence of locally suboptimal behavior reflects some fundamental psychological and social phenomena, such as delayed gratification, threats, and enforced cooperation. 

We focus on the emergence of locally suboptimal behavior in subgame-perfect equilibria (SPE) of finitely repeated games with complete information.
A widely known result that appears in many textbooks and lecture notes is that if the stage game $G$ has a unique Nash equilibrium payoff for every player, then in any SPE of any finitely repeated game $G(T)$ with any $T$ rounds, the strategy profile at each round forms an NE of the stage game $G$ \cite{fudenberg1991game,gibbons1992primer,osborne1994course}. This is proved using backward induction. 
It is also known that there are stage games $G$ where in some SPEs of the repeated game $G(T)$ for some $T$, the strategy profile at some round does not form a stage-game NE (\Cref{sec:example-off-nash} presents an example).
Such off-(stage-game)-Nash play occurs due to {\em `threats'} between players that are stated implicitly through players' strategies.

We define such off-(stage-game)-Nash plays in repeated games as {\em \phenom{}}.
As we have seen, for some stage games, \phenom{} can occur in some SPE of some repeated games; for other stage games, \phenom{} can never occur in any SPE of any repeated games.
Therefore, we can partition the set of all stage games $\mathcal{G}$ into two disjoint subsets $\gls$ and $\glo$. $\gls$ is the set of stage games $G$ where \phenom{} occurs in some SPE of $G(T)$ for some $T$; $\glo$ is the set of stage games $G$ where \phenom{} never occurs in any SPE of $G(T)$ for any $T$ (LO stands for locally optimal).
Our goal in this research is to completely characterize which stage games belong to $\gls$ and $\glo$. The central research question we aim to tackle is:

\begin{question}
\label{ques:cond}
What is a sufficient and necessary condition on the stage game $G$ that ensures that, for all $T$ and all subgame-perfect equilibria of the repeated game $G(T)$, the strategy profile at every round of $G(T)$ forms a Nash equilibrium of the stage game $G$?
\end{question}

The answer to \Cref{ques:cond} completely characterizes $\gls$ and $\glo$. As we have discussed, a sufficient condition for \Cref{ques:cond} is widely known (uniqueness of Nash equilibrium payoff for each player). However, this condition is not necessary; in fact, no previous work establishes a sufficient and necessary condition. A large body of work focuses on Folk Theorems \cite{Benoit1985,benoit1987,gossner1995,Smith1995,gonzalez2006finitely}, where the property of interest is: all feasible (i.e., the payoff profile lies in the convex hull of the set of all possible payoff profiles of the stage game) and individually rational (i.e., the payoff of each player is at least their minmax payoff in the stage game) payoff profiles can be attained in the equilibrium of the repeated game. As we show in \Cref{sec:example-diff}, the property considered in Folk Theorems and the \phenom{} property we consider in this work are different, and the two properties do not have direct implications in either direction. Therefore, the conditions established for Folk Theorems in the literature do not solve the problem we consider.

In addition to the complete mathematical characterization of the partitioning between $\gls$ and $\glo$, we also consider the computational aspect of the problem: 

\begin{question}
\label{ques:comp}
    Given an arbitrary stage game $G$, how to (algorithmically) decide if there exists some $T$ and some SPE of $G(T)$ where \phenom{} occurs? Is this problem decidable?
\end{question}

A naive approach is to enumerate over $T$, solve for all SPEs for each $G(T)$, and check if off-Nash behavior occurs. Such an approach is not only computationally inefficient, but also not guaranteed to terminate due to the unboundedness of $T$. In fact, we show that there are stage games where \phenom{} only occurs in repeated games with very large $T$, and we can construct games where this minimum $T$ for \phenom{} to occur can be arbitrarily large (\Cref{sec:example-largeT}). These facts motivate the study of \Cref{ques:comp}.

\subsection{Summary of Results}

\subsubsection{Sufficient and Necessary Conditions for 2-Player Games}

A main theoretical contribution of this part is that we prove sufficient and necessary conditions for \Cref{ques:cond} for 2-player games. We prove the conditions for three cases: 1) only pure strategies are allowed (\Cref{thm:pure}), 2) the general case where mixed strategies are allowed (\Cref{thm:general}), and 3) one player can only use pure strategies and the other player can use mixed strategies (\Cref{thm:pure-vs-mixed}). 
This is the first sufficient and necessary condition for off-(stage-game)-Nash plays to occur in SPEs of 2-player finitely repeated games.

From the perspective of partitioning the set of stage games $\mathcal{G}$, denote $\glspp$ as the set of stage games $G$ where \phenom{} occurs in some SPE of $G(T)$ for some $T$ when both players can only use pure strategies, $\glopp$ as the set of stage games $G$ where \phenom{} never occurs in any SPE of $G(T)$ for any $T$ when both players can only use pure strategies, $\glsmm$ and $\glomm$ as the corresponding partitioning when both players can use mixed strategies, and $\glsmp$ and $\glomp$ as the corresponding partitioning when player 1 can use mixed strategies and player 2 can only use pure strategies.
Essentially, we obtain complete mathematical characterizations of the partitioning of $\mathcal{G}$ for cases (1), (2), and (3) above: 1) $\glspp$ and $\glopp$, 2) $\glsmm$ and $\glomm$, and 3) $\glsmp$ and $\glomp$.

\subsubsection{Effect of Changing from Pure Strategies to Mixed Strategies on the Emergence of Local Suboptimality}

Based on the above results, we further study the effect of changing from pure strategies to mixed strategies on the emergence of \phenom{}. We aim to answer the following question: under what conditions on the stage game $G$ will allowing players to play mixed strategies change whether \phenom{} can ever occur in some repeated game $G(T)$? Essentially, we aim to study the relationships between $\glspp$, $\glsmp$, and $\glsmm$.

We prove that $\glspp \subseteq \glsmp \subseteq \glsmm$ (\Cref{thm:pp-to-mp-1,thm:mp-to-mm-1,thm:pp-to-mm-1}), i.e., if \phenom{} can occur before the change, then after changing any player (or both players) from pure-strategies-only to mixed-strategies-allowed, \phenom{} can still occur. This is because we prove that any SPE of the repeated game before the change is still an SPE after the change, and any strategy profile that is not a stage-game NE before the change is still not a stage-game NE after the change.

On the other hand, we show that $\glspp \neq \glsmp$ and $\glsmp \neq \glsmm$ (so $\glspp$ is a proper subset of $\glsmp$ and $\glsmp$ is a proper subset of $\glsmm$), i.e., there are games where \phenom{} can never occur before the change, but after changing one player (or both players) from pure-strategies-only to mixed-strategies-allowed, \phenom{} can occur. We present complete characterizations of the sets $\glsmp \setminus \glspp$, $\glsmm \setminus \glsmp$, and $\glsmm \setminus \glspp$, by proving sufficient and necessary conditions on the stage game $G$ such that \phenom{} can never occur before the change but can occur after the change (\Cref{thm:pp-to-mp-2,thm:mp-to-mm-2,thm:pp-to-mm-2}). Our characterizations are fine-grained based on $|V_1|$ and $|V_2|$, the number of payoff values attainable at stage-game NEs for each player. For example, we show that under certain preconditions on $|V_1|$ and $|V_2|$, $\glspp = \glsmp$; under other preconditions on $|V_1|$ and $|V_2|$, $\glspp \neq \glsmp$, and for each of such cases, we present an example stage game $G$ where $G\notin \glspp$ and $G\in \glsmp$. These examples demonstrate different mechanisms of how changing a player from pure-strategies-only to mixed-strategies-allowed can lead to the emergence of \phenom{}. We perform the same fine-grained analyses on $\glsmm \setminus \glsmp$ and $\glsmm \setminus \glspp$ as well.

\subsubsection{Computational Aspects}

We propose an algorithm for deciding \Cref{ques:comp} for 2-player games for the general case where mixed strategies are allowed and analyze the computational complexity of this algorithm. This shows that \Cref{ques:comp} is decidable for 2-player games where mixed strategies are allowed. This algorithm provides a method for computationally deciding if \phenom{} can ever happen for a given stage game. The algorithm is based on the sufficient and necessary condition established in \Cref{thm:general}. We design several efficient methods for checking different parts of the condition by utilizing properties we prove for general games. Naive methods for checking these parts of the condition take exponential time in the worst case, whereas our methods for checking these parts of the condition take polynomial time in the worst case.

\subsubsection{Generalization to $n$-Player Games}

We prove a separate sufficient condition and necessary condition for \Cref{ques:cond} for $n$-player games. These conditions are both tighter than what is previously known in the literature (again, only a sufficient condition is known previously, i.e., there is a unique Nash equilibrium payoff for each player \cite{fudenberg1991game,gibbons1992primer,osborne1994course}). 

The proof of a sufficient and necessary condition for the 2-player case relies on some properties that we prove to hold for 2-player games (\Cref{lemma:matrix} and the subsequent arguments in the proof of \Cref{thm:general} that uses \Cref{lemma:matrix} to show there exists a connected component of off-Nash strategy profiles). It is not clear whether similar properties hold for $n$-player games. Therefore, the questions of 1) what is a sufficient and necessary condition for $n$-player games, and 2) is \Cref{ques:comp} decidable for $n$-player games, remain open.

\section{The Model}
\label{sec:model}

A stage game $G = (n, A_1, \dots, A_n, u_1, \dots, u_n)$ in normal form consists of $n$ players, each player $i$'s strategy space $A_i$, and each player $i$'s payoff function $u_i: A \rightarrow \real$, where $A = A_1 \times A_2 \times \dots \times A_n$. We assume $n$ and $A$ are finite. Throughout this chapter, we use $a$ to denote pure strategies (or actions) in the stage game and $\sigma$ to denote mixed strategies in the stage game, e.g. $a_i\in A_i$ denotes a pure strategy for player $i$ and $\sigma_i \in \Delta A_i$ denotes a mixed strategy for player $i$, both for the stage game, where $\Delta S$ denotes the set of probability distributions over set $S$. We use $S_{\sigma_i} = \condSet{a}{a\in A_i, \sigma_i(a) > 0}$ to denote the support for mixed strategy $\sigma_i$. A strategy profile $\vsigma = (\sigma_1, \dots, \sigma_n)$ is a set of strategies for all players. In general, we use bold symbols to represent collections over players. For convenience, we use $u_i(\vsigma)$ to denote the expected payoff of player $i$ under the (mixed) strategy profile $\vsigma$. A strategy $\sigma_i$ is a \textit{best response} to the strategy profile of the other players $\vsigma_{-i}$ if $u_i(\sigma_i, \vsigma_{-i}) = \max_{\sigma_i' \in \Delta A_i} u_i(\sigma_i', \vsigma_{-i})$. A strategy profile $\vsigma$ is a \textit{Nash equilibrium} (NE) if for all player $i\in[n]$ ($[n]$ denotes the set $\{1,\dots, n\}$), $\sigma_i$ is a best response to $\vsigma_{-i}$. We use $\textrm{Nash}(G)$ to denote the set of all Nash equilibria of the stage game $G$. We use $V_i = \condSet{u_i(\vsigma)}{\vsigma\in\Nash(G)}$ to denote the set of payoff values attainable at Nash equilibria for player $i$.

We use $G(T)$ to denote the game where $G$ is played repeatedly for $T$ rounds, where $T$ is a positive integer. Denote the \textit{outcome} in round $t\in[T]$ as $\va^t \in A$. Player $i$'s total payoff in the repeated game $G(T)$ is $U_i = \sum_{t=1}^T u_i(\va^t)$. A strategy of player $i$ in $G(T)$ specifies which actions to take in each round given any history of play in the previous rounds. Formally, denote a \textit{history} of play in the first $k$ rounds as $h(k) = (\va^1, \dots, \va^k)$, and the set of all possible $k$-round histories as $H(k) = A^k$ ($H(0)$ denotes the singleton set containing the empty history). A (mixed) strategy of player $i$ in $G(T)$ can be represented as $\mu_i: H \rightarrow \Delta A_i$, where $H = \cup_{k=0}^{T-1} H(k)$ is the set of all histories. This form of representation is also commonly known as \textit{behavior strategies}. We use $\vmu = (\mu_1,\dots,\mu_n)$ to denote strategy profiles of $G(T)$, and the concept of best response and Nash equilibrium are defined in the same way as for the stage game. 

In this paper, we focus on \textit{subgame-perfect equilibria} (SPE) of $G(T)$. SPE was originally introduced by \cite{Selten1965,Selten1975} to eliminate NEs that involve non-credible threats off the equilibrium path. Given a strategy $\mu_i$ of player $i$ for $G(T)$, denote $\mu_{i|h(k)}$ as the resulting strategy for subgame $G(T-k)$ obtained by conditioning $\mu_i$ on some history $h(k)$.
Formally, given $h(k) = (\va^1, \dots, \va^k)$, $\mu_{i|h(k)}$ is given by: 1) $\mu_{i|h(k)}(h(0)) = \mu_i(\va^1, \dots, \va^k)$; 2) for any $t<T-k$ and any $(\vc^1, \vc^2, \dots, \vc^t)\in H(t)$, $\mu_{i|h(k)}(\vc^1, \vc^2, \dots, \vc^t) = \mu_i(\va^1, \dots, \va^k, \vc^1, \vc^2, \dots, \vc^t)$.
And denote $\vmu_{|h(k)} = (\mu_{1|h(k)}, \dots, \mu_{n|h(k)})$. A strategy profile $\vmu$ is an SPE if for all $0\leq k < T$ and all $h(k)\in H(k)$, $\vmu_{|h(k)}$ is an NE of $G(T-k)$. We use $\spe(G,T)$ to denote the set of all SPEs of the repeated game $G(T)$.

The phenomenon we are interested in is when in some SPE of the repeated game $G(T)$, the behavior strategy profile in some round does not form an NE of the stage game $G$. In other words, some player uses a locally suboptimal strategy in some round, in the sense that the strategy is not a best response for that round, as part of an SPE in the repeated game. We formally define this phenomenon of \phenom{} as follows.
\begin{definition}[\Phenom{}]
\label{def:local-sub}
\Phenom{} occurs in some SPE $\vmu$ of some repeated game $G(T)$ if there exists some $0\leq k < T$ and play history $h(k)\in H(k)$ where $\vmu(h(k)) = \Big(\mu_1(h(k)), \dots, \mu_n(h(k))\Big) \notin \textrm{Nash}(G)$, i.e. the behavior strategy profile at some round does not form an NE of the stage game. 
\end{definition}
We refer to such behavior strategy profiles that do not form an NE of the stage game as {\em off-(stage-game)-Nash} plays, or {\em off-Nash} plays in short.

Denote the set of all stage games $G$ as $\mathcal{G}$ ($\mathcal{G}$ is an infinite set). $\mathcal{G}$ can be partitioned into two disjoint subsets $\gls$ and $\glo$. $\gls$ is the set of stage games $G$ where \phenom{} occurs in some SPE of $G(T)$ for some $T$; $\glo$ is the set of stage games $G$ where \phenom{} never occurs in any SPE of $G(T)$ for any $T$ (LO stands for locally optimal). Our central research questions stated in \Cref{ques:cond} and \Cref{ques:comp} are essentially about solving the following problems: 1) completely characterize $\gls$ (thus $\glo$) using mathematical conditions, and 2) given any stage game $G$, algorithmically determine if $G$ is in $\gls$ or $\glo$.

\section{Motivating Examples}
\label{sec:suboptimal-example}

In this section, we present several example games that motivate our study on the emergence of \phenom{} in finitely repeated games. 

\subsection{Example Game where \PheNom{} Occurs}
\label{sec:example-off-nash}

We first present a simple example game where \phenom{} occurs to give a flavor of how such phenomena arise.

\begin{example}
\label{example:off-nash}

\begin{table}[ht]
    \centering
    \begin{tabular}{c|c|c|}
    & $a_2$ & $b_2$ \\
    \midrule
    $a_1$ & (3,1) & (0,1) \\
       \midrule
   $b_1$ & (2,1) & (1,1) \\
   \midrule
    \end{tabular}
    \caption{Example stage game $G$ where \phenom{} occurs in an SPE of $G(2)$. }
    \label{tab:off_nash}
\end{table}

\Cref{tab:off_nash} presents an example stage game $G$ where \phenom{} occurs in an SPE of the repeated game $G(2)$. The game is represented in matrix form. Row player chooses from actions $a_1$ and $b_1$, column player chooses from actions $a_2$ and $b_2$. In each entry of the matrix, the first value is the payoff of the row player, and the second value is the payoff of the column player. 

The strategy profile $(b_1, a_2)$ is not an NE of the stage game $G$. However, the following is an SPE of the 2-round repeated game $G(2)$, in which the strategy profile in the first round is $(b_1, a_2)$:
\begin{itemize}
    \item In the second round, if the row player plays $a_1$ in the first round, play $(b_1, b_2)$; else, play $(a_1, a_2)$.
    \item In the first round, play $(b_1, a_2)$.
\end{itemize}
Notice that although the row player can obtain an addition payoff of 1 in the first round by switching to play $a_1$ in the first round, they will lose a payoff of 2 in the second round. This is why the above strategy profile is an SPE of the repeated game. Intuitively, the column player `threatens' the row player by stating (implicitly through the column player's strategy) that if the row player deviates in the first round, the column player will play according to the stage-game NE that gives a lower payoff to the row player in the second round.
\end{example}

\subsection{Example Game where SPE with \PheNom{} Strictly Dominates SPEs without \PheNom{}}

Here we present an example game where in the repeated game, some SPE in which \phenom{} occurs strictly dominates all SPEs where \phenom{} does not occur. 

\begin{example}

\begin{table}[ht]
    \centering
    \begin{tabular}{c|c|c|c|}
    & $a_2$ & $b_2$ & $c_2$ \\
    \midrule
    $a_1$ & (3,3) & (0,4) & (0,0) \\
      \midrule
     $b_1$ & (4,0) & (2,2) & (0,1) \\
       \midrule
   $c_1$ & (0,0) & (1,0) & (1,1) \\
   \midrule
    \end{tabular}
    \caption{Example stage game where in the repeated game, some SPE in which \phenom{} occurs strictly dominates all SPEs where \phenom{} does not occur.}
    \label{tab:off_nash_dom}
\end{table}

\Cref{tab:off_nash_dom} presents the example stage game in matrix form. This stage game $G$ has three Nash equilibria: $(b_1, b_2)$, $(c_1, c_2)$, and a mixed NE $(\sigma_1, \sigma_2)$ where $\sigma_1(b_1) = \sigma_1(c_1) = \sigma_2(b_2) = \sigma_2(c_2) = 0.5$. The payoffs of each of the above NEs are: $(2,2)$, $(1,1)$, and $(1,1)$ respectively. Therefore, for a $T$-round repeated game $G(T)$, in any SPE where \phenom{} does not occur, the total payoff of each player is at most $2T$. We argue that for any $T>2$, the following strategy profile is an SPE of $G(T)$:
\begin{itemize}
    \item In the first $T-2$ rounds, the row player plays $a_1$ and the column player plays $a_2$ (note that this strategy profile is not an NE of the stage game $G$). If any player deviates to other actions in any round, the two players immediately switch to play $(c_1, c_2)$ for the rest of the game.
    \item In the last 2 rounds, players play $(b_1, b_2)$.
\end{itemize}

The total payoff of each player under the above SPE is $3T-2$. For all $T>2$, $3T-2 > 2T$. Therefore, the above SPE in which \phenom{} occurs strictly dominates any SPE in which \phenom{} does not occur.
\end{example}

\subsection{Example Games Demonstrating Difference Between \PheNom{} and the Property in Folk Theorems}
\label{sec:example-diff}

Under the theme of analyzing equilibrium solutions in repeated games, a large body of work focuses on Folk Theorems \cite{Benoit1985,benoit1987,gossner1995,Smith1995,gonzalez2006finitely}, where the property of interest is: all {\em feasible} (i.e., the payoff profile lies in the convex hull of the set of all possible payoff profiles of the stage game) and {\em individually rational} (i.e., the payoff of each player is at least their minmax payoff in the stage game) payoff profiles can be attained in equilibria of the repeated game. Here we show that the property considered in Folk Theorems and the \phenom{} property we consider in this research are different, and the two properties do not have direct implications in either direction. We present 1) an example game where  \phenom{} can occur, but not all feasible and individually rational payoffs can be attained in the repeated game, and 2) an example game where all feasible and individually rational payoffs can be attained in the repeated game, but \phenom{} cannot occur.

\begin{example}
\Cref{tab:example-no-folk} presents an example stage game $G$ where \phenom{} can occur in the repeated game, but not all feasible and individually rational payoffs can be attained in the repeated game. This example is taken from \cite{Benoit1985}. This game contains 3 players. Player 1 selects rows ($a_1, b_1, c_1$), player 2 selects columns ($a_2, b_2$), and player 3 selects matrices ($a_3, b_3$). While \cite{Benoit1985} analyzes this example with only pure strategies allowed, we consider the general case where mixed strategies are allowed. There are three Nash equilibria: (i) $(a_1, a_2, a_3)$; (ii) $(a_1, b_2, b_3)$; (iii) $(a_1, \sigma_2, \sigma_3)$ where $\sigma_2(a_2) = \sigma_2(b_2) = 0.5$, $\sigma_3(a_3) = 0.25$, $\sigma_3(b_3) = 0.75$. These equilibria achieve payoffs of $(3,3,3)$, $(2,2,2)$, $(1.5,1.5,1.5)$ respectively. Following a similar idea in \Cref{example:off-nash}, it is easy to construct an SPE in a repeated game where \phenom{} occurs. For example, the following strategy profile is an SPE of $G(4)$:
\begin{itemize}
    \item In the first round, play $(a_1, a_2, b_3)$.
    \item In the last three rounds, if players play in the first round is $(a_1, a_2, b_3)$, play $(a_1, a_2, a_3)$; otherwise, play $(a_1, b_2, b_3)$.
\end{itemize}
In this SPE, the first round play does not form a stage-game NE. Therefore, \phenom{} occurs.

\begin{table}[ht]
    \centering
    \begin{tabular}{c|c|c|}
     & $a_2$ & $b_2$ \\
    \midrule
     $a_1$ & (3,3,3) & (0,0,0) \\
    \midrule
    $b_1$ & (0,0,0) & (0,0,0) \\
    \midrule
    $c_1$ & (0,1,1) & (0,0,0) \\
   \midrule
   \multicolumn{3}{c}{$a_3$}
    \end{tabular}
    \quad
    \begin{tabular}{c|c|c|}
     & $a_2$ & $b_2$ \\
    \midrule
     $a_1$ & (1,1,1) & (2,2,2) \\
    \midrule
    $b_1$ & (0,1,1) & (0,1,1) \\
    \midrule
    $c_1$ & (0,1,1) & (0,0,0) \\
   \midrule
   \multicolumn{3}{c}{$b_3$}
    \end{tabular}
    \caption{Example stage game where \phenom{} can occur, but not all feasible and individually rational payoffs can be attained in the repeated game. }
    \label{tab:example-no-folk}
\end{table}

For this stage game $G$, each player's minmax payoff is 0. We follow a similar argument as \cite{Benoit1985}. Denote $w_i(T)$ as the worst payoff that player $i$ can get in any SPE of $G(T)$, the $T$-round repeated game. We claim that for $i=2,3$, $w_i(T)/T\geq 0.5$, therefore not all feasible and individually rational payoffs can be approximated. We use induction. The claim is true for $T=1$. Suppose $w_i(T-1)\geq 0.5(T-1)$. Consider the strategy profile $\vmu$ in $G(T)$ that attains $w_2(T)$ and $w_3(T)$. Notice that player 2 and 3 always get the same payoff in this game, so $w_2(T)$ and $w_3(T)$ will be attained at the same time. Consider the behavior strategy profile in the first round $\vsigma = (\sigma_1, \sigma_2, \sigma_3)$ as specified in $\vmu$. If $\sigma_1(c_1)\cdot \sigma_2(b_2) \leq 0.5$, then player 3 playing $b_3$ in the first round gives them at least a total payoff of $0.5 + w_3(T-1)$. This implies $w_3(T)\geq 0.5 + w_3(T-1)$ and we are done by the induction hypothesis. If $\sigma_1(c_1)\cdot \sigma_2(b_2) > 0.5$, then player 2 playing $a_2$ in the first round gives them at least a total payoff of $0.5+w_2(T-1)$. This implies $w_2(T)\geq 0.5 + w_2(T-1)$ and we are done by the induction hypothesis.
\end{example}

\begin{example}
\Cref{tab:every_ne} presents an example stage game $G$ where all feasible and individually rational payoffs can be attained in the repeated game, but \phenom{} cannot occur. For this stage game $G$, the set of all feasible and individually rational payoff profiles is $\condSet{(u_1, u_2)}{0\leq u_1 \leq 1, 0\leq u_2 \leq 1}$. All these feasible and individually rational payoffs can be attained in the stage game $G$ itself, which is also a one-round repeated game $G(1)$. But every possible strategy profile in $G$ is an NE of $G$, so \phenom{} cannot occur.

\begin{table}[ht]
    \centering
    \begin{tabular}{c|c|c|}
    & $a_2$ & $b_2$ \\
    \midrule
    $a_1$ & (0,0)   & (1,0) \\
    \midrule
    $b_1$ & (0,1)  & (1,1) \\
   \midrule
    \end{tabular}
    \caption{Example stage game where every possible strategy profile is an NE.}
    \label{tab:every_ne}
\end{table}

\end{example}

\subsection{Example Game where \PheNom{} Only Occurs with Large $T$}
\label{sec:example-largeT}

One of the research questions we consider is in the computational aspect: given an arbitrary stage game $G$, how to (algorithmically) decide if there exists some $T$ and some SPE of $G(T)$ where \phenom{} occurs? A naive approach is to enumerate over $T$, solve for all SPEs for each $G(T)$, and check if off-(stage-game)-Nash behavior occurs. Here, we present a construction of games where \phenom{} only occurs with arbitrarily large $T$. This means that the naive approach above might need to check an arbitrarily large number of $T$'s before returning a result.

\begin{example}

\begin{table}[ht]
    \centering
    \begin{tabular}{c|c|c|}
    & $a_2$ & $b_2$ \\
    \midrule
    $a_1$ & (3,2)   & ($\alpha$,1) \\
      \midrule
     $b_1$ & (3,2)   & (2,2) \\
      \midrule
      $c_1$ & ($\alpha$,1)  & (2,2) \\
       \midrule
    \end{tabular}
    \caption{Example stage game where \phenom{} only occurs with large $T$.}
    \label{tab:large_T}
\end{table}

\Cref{tab:large_T} presents a construction of stage games where \phenom{} only occurs with arbitrarily large $T$. We claim that for any $\alpha < 2$, \phenom{} cannot occur with any $T < \frac{1}{2}(2-\alpha)$, and \phenom{} can occur with any $T > 3 - \alpha$. Therefore, as $\alpha$ becomes smaller, we have games where \phenom{} only occurs with arbitrarily large $T$. We present the proof as follows.

It is easy to see that the set of Nash equilibria of this stage game $G$ is
\begin{itemize}
    \item $(\sigma_1, a_2)$ where $\sigma_1(a_1) = \lambda, \sigma_1(b_1) = 1 - \lambda$ for all $0\leq \lambda \leq 1$,
    \item $(b_1, \sigma_2)$ where $\sigma_2(a_2) = \lambda, \sigma_2(b_2) = 1 - \lambda$ for all $0\leq \lambda \leq 1$,
    \item $(\sigma_1, b_2)$ where $\sigma_1(b_1) = \lambda, \sigma_1(c_1) = 1 - \lambda$ for all $0\leq \lambda \leq 1$.
\end{itemize}
It follows that $V_1 = [2,3]$, the continuous range from 2 to 3, and $V_2 = \{2\}$. 

First, we show that for any $T > 3 - \alpha$, \phenom{} can occur. Here is an SPE where \phenom{} occurs:
\begin{itemize}
    \item The first round strategy profile is $(\hat{\sigma}_1, b_2)$ where $\hat{\sigma}_1(a_1) = \frac{1}{4}$ and $\hat{\sigma}_1(c_1) = \frac{3}{4}$. $(\hat{\sigma}_1, b_2)\notin \Nash(G)$ since player 1 is not playing a best response (but player 2 is playing a best response).
    \item If player 1's first round play is $b_1$ or $c_1$, we let the players play a stage game Nash equilibrium that achieves $u_1=2$ (minimum payoff for player 1) in all the remaining $T-1$ rounds. 
    \item If player 1's first round play is $a_1$, we let the players play a sequence of stage game Nash equilibria that achieves a total payoff $U_1 = 2(T-1) + 2 - \alpha$ in the remaining $T-1$ rounds. This is possible since $T-1 > 2 - \alpha$ and $V_1$ contains the continuous interval between 2 and 3.
\end{itemize}

Now let $T^*$ be the smallest $T$ such that \phenom{} can occur in $G(T)$. Let $\vmu^*$ to be any SPE of $G(T^*)$ where \phenom{} occurs. Denote $\vsigma^* = (\sigma_1^*, \sigma_2^*)$ to be the first round strategy profile in $\vmu^*$. It follows that $\vsigma^*\notin \Nash(G)$, and all strategy profiles in all later rounds in $\vmu^*$ belongs to $\Nash(G)$. Since $|V_2| = 1$, player 2 must play a best response in $\vsigma^*$. Therefore, $\sigma_1^*$ must assign positive probabilities in both $a_1$ and $c_1$, since otherwise $\vsigma^*\in \Nash(G)$. For $\sigma_2^*$, either $\sigma_2^*(a_2)\geq 0.5$ or $\sigma_2^*(b_2)\geq 0.5$. If $\sigma_2^*(b_2)\geq 0.5$, we have $u_1(b_1, \sigma_2^*) - u_1(a_1, \sigma_2^*) \geq 0.5(2 - \alpha)$. Denote $U_1(\vmu^*_{|b_1})$ as the expected total payoff for player 1 in the last $T^*-1$ rounds given player 1 plays $b_1$ in the first round, and similarly for $U_1(\vmu^*_{|a_1})$. For $\vmu^*$ to be an SPE, we must have $U_1(\vmu^*_{|a_1}) - U_1(\vmu^*_{|b_1}) \geq u_1(b_1, \sigma_2^*) - u_1(a_1, \sigma_2^*) \geq 0.5(2-\alpha)$. But we also have $U_1(\vmu^*_{|a_1}) - U_1(\vmu^*_{|b_1}) \leq 3(T^*-1) - 2(T^* - 1) = T^* - 1$, so $T^* > 0.5(2-\alpha)$. The same argument can be applied to the case where $\sigma_2^*(a_2)\geq 0.5$. Therefore, \phenom{} cannot occur with any $T < \frac{1}{2}(2-\alpha)$.

\end{example}

\section{The Pure Strategy Case}
\label{sec:pure}

We start by considering 2-player games where players can only use pure strategies. In this case, a strategy of player $i$ in $G(T)$ is $\mu_i: H\rightarrow A_i$. 
We use $\Nash^{p,p}(G)$, $\spe^{p,p}(G,T)$, and $V_i^{p,p}$ to denote the corresponding concepts of $\Nash(G)$, $\spe(G,T)$, and $V_i$ when both players can only use pure strategies.
We use $\glspp$ and $\glopp$ to denote the partition of the set of all stage games $\mathcal{G}$ when both players can only use pure strategies. For any stage game $G$ where $\Nash^{p,p}(G) = \emptyset$, there is no SPE in the repeated game $G(T)$ for any $T$, so $G\in \glopp$ since \phenom{} can never occur. The following theorem presents a complete mathematical characterization of $\glspp$. As we will see, the key requirement for \phenom{} to occur is the ability of some player to `threaten' the other player to play off stage-game NEs in some rounds.

\begin{theorem}[2-player, pure strategy]
\label{thm:pure}
For 2-player pure-strategy-only games, a sufficient and necessary condition on the stage game $G$ for there exists some $T$ and some SPE of $G(T)$ where \phenom{} occurs is:
\begin{enumerate}
    \item $|V_1^{p,p}| > 1$, $|V_2^{p,p}| > 1$, and there exists some $\hat{a}_1\in A_1, \hat{a}_2\in A_2$ where $(\hat{a}_1, \hat{a}_2) \notin \Nash^{p,p}(G)$, OR
    \item $|V_1^{p,p}| > 1$, $|V_2^{p,p}| = 1$, and there exists $\hat{a}_1, a_1' \in A_1, \hat{a}_2 \in A_2$ where $u_1(\hat{a}_1, \hat{a}_2) < u_1(a_1', \hat{a}_2)$ and $\hat{a}_2$ is a best response to $\hat{a}_1$, OR
    \item $|V_1^{p,p}| = 1$, $|V_2^{p,p}| > 1$, and there exists $\hat{a}_1 \in A_1, \hat{a}_2, a_2' \in A_2$ where $u_2(\hat{a}_1, \hat{a}_2) < u_2(\hat{a}_1, a_2')$ and $\hat{a}_1$ is a best response to $\hat{a}_2$.
\end{enumerate}
\end{theorem}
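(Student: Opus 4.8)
\section*{Proof proposal for \Cref{thm:pure}}

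The plan is to prove the two directions separately. First I would record the organizing observation that the three clauses are mutually exclusive and are indexed by the pair $(|V_1^{p,p}|,|V_2^{p,p}|)$, which takes the values $(>1,>1)$, $(>1,=1)$, and $(=1,>1)$ in Conditions~1, 2, and 3 respectively. So it suffices to show, for each such type, that \phenom{} can occur exactly when the matching clause holds; the leftover regimes $(=1,=1)$ and $\Nash^{p,p}(G)=\emptyset$ must then yield ``never''. The necessity direction rests on a reduction to a canonical SPE together with a ``threat lemma'', and the sufficiency direction on explicit punishment constructions.

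For necessity, suppose \phenom{} occurs in some SPE $\vmu$ of $G(T)$. I would first reduce to a canonical form: among all histories $h$ with $\vmu(h)\notin\Nash^{p,p}(G)$, pick one of \emph{maximal} length $k^*$ and pass to $\vmu_{|h}$, which is an SPE of $G(T-k^*)$ whose first-round profile $(\hat a_1,\hat a_2)$ is off-Nash while all later-round profiles are on-Nash (any later history has length exceeding $k^*$, hence is on-Nash by maximality). The core is then a threat lemma: if player $i$ fails to best-respond in this off-Nash first round, then $|V_i^{p,p}|>1$. Indeed, the SPE condition forbidding player $i$'s profitable round-one deviation forces player $i$'s on-path continuation value to strictly exceed the off-path one; but both are sums of the remaining rounds' stage-game NE payoffs and thus lie in $[(T-k^*-1)\min V_i^{p,p},\,(T-k^*-1)\max V_i^{p,p}]$, so a strictly positive gap needs $\min V_i^{p,p}<\max V_i^{p,p}$. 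Since $(\hat a_1,\hat a_2)$ is off-Nash at least one player is not best-responding, and a short split on which player(s) deviate, combined with the threat lemma, pins down exactly which clause holds for the given type: both deviating forces $(>1,>1)$ (Condition~1); only player $1$ deviating gives $|V_1^{p,p}|>1$ and makes $\hat a_2$ a best response to $\hat a_1$, which is Condition~2 when $|V_2^{p,p}|=1$ and Condition~1 when $|V_2^{p,p}|>1$; the mirror case gives Condition~3 or Condition~1.

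For sufficiency I would, in each case, build an SPE of $G(1+m)$ for large $m$ that plays the witnessing off-Nash $(\hat a_1,\hat a_2)$ in round one and then, depending only on the observed first-round outcome, plays a history-independent sequence of stage-game NEs: a ``reward'' block $R$ after the on-path outcome, a ``punishment'' block $P_i$ repeating an NE that attains $\min V_i^{p,p}$ after a unilateral deviation by player $i$ (continuation value $m\min V_i^{p,p}$ for that player), and an arbitrary NE sequence otherwise. Since every block is a fixed NE sequence, it is automatically an SPE of its subgame and no player can profit by deviating inside it; hence checking that the whole profile is an SPE reduces to the two round-one incentive inequalities $R_i-m\min V_i^{p,p}\ge D_i$, where $D_i$ is player $i$'s largest one-round deviation gain against $\hat a_{-i}$. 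In Conditions~2 and~3 only one player is threatened while the other best-responds in round one and, having a single NE payoff value, is globally indifferent across continuations; so a reward that simply repeats a $\max V_i^{p,p}$-NE makes the inequality hold for large $m$.

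The step I expect to be the genuine obstacle is the reward block in Condition~1, where \emph{both} players must simultaneously receive a continuation strictly above their minimum NE value, even though the NE maximizing one player's payoff may hand the other player exactly its minimum. I would resolve this by \emph{splitting} the reward block: play a $\max V_1^{p,p}$-NE for half the rounds and a $\max V_2^{p,p}$-NE for the other half, giving $R_1\ge \tfrac{m}{2}(\max V_1^{p,p}+\min V_1^{p,p})>m\min V_1^{p,p}$ and symmetrically $R_2>m\min V_2^{p,p}$, with both margins growing linearly in $m$ and so eventually dominating $D_1$ and $D_2$. Finally I would dispatch the degenerate regimes directly: if $\Nash^{p,p}(G)=\emptyset$ there is no SPE at all, and if $(|V_1^{p,p}|,|V_2^{p,p}|)=(=1,=1)$ the threat lemma forbids any off-Nash first round in the canonical form and hence rules out \phenom{} entirely, consistent with none of the three clauses being satisfiable.
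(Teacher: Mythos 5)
Your proposal is correct, and both halves reach the paper's destination; the interesting difference is in how you organize necessity. Your sufficiency constructions are essentially the paper's: the same scheme of one off-Nash first round followed by outcome-contingent, history-independent sequences of stage-game NEs, with the round-one incentive inequality closed by taking the horizon large. Your half/half reward block for Condition 1 is a cosmetic variant of the paper's alternating on-path sequence $(\va_1,\va_2,\va_1,\va_2,\dots)$ with per-player punishments $(\va_1',\va_2,\dots)$ and $(\va_1,\va_2',\dots)$ — both resolve the obstacle you correctly flag, namely rewarding two players simultaneously above their punishment values. For necessity, the paper argues the contrapositive: it enumerates the regimes of $(|V_1^{p,p}|,|V_2^{p,p}|)$ and runs a backward induction in each, the key step being that a player with a unique NE payoff has a history-independent continuation value and therefore must best-respond in every round. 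You instead argue directly: pass to a history of maximal length with off-Nash play, so that all continuations consist of stage-game NEs, and invoke your ``threat lemma'' — a non-best-responding player there must have $|V_i^{p,p}|>1$, since both continuation values lie in $\left[(T-k^*-1)\min V_i^{p,p},\,(T-k^*-1)\max V_i^{p,p}\right]$ and a profitable threat needs a positive gap. These are contrapositives of the same insight, but your version handles all regimes uniformly, reads off the witnessing profile $(\hat a_1,\hat a_2)$ for each clause explicitly, and is in fact the technique the paper itself deploys later for the $n$-player necessary condition (\Cref{thm:n-player-nec}); the paper's per-regime backward induction instead yields the slightly stronger conclusion that \emph{every} round of every SPE is on-Nash. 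One caveat worth noting: your threat lemma silently uses the fact that, with pure strategies, continuation payoffs after the last off-Nash round are deterministic sums of stage-NE payoffs. That is immediate here, but it is precisely the step that requires genuine work (cf.\ \Cref{lemma:spe_payoff}) once mixed strategies are allowed, so the argument does not transfer verbatim to \Cref{thm:general}.
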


\begin{proof}

First we show the condition is sufficient, by showing if the condition is satisfied, then we can construct some $T$ and some SPE where \phenom{} occurs. 

If the condition is satisfied, then at least one of (1),(2),(3) must be satisfied. If (1) is satisfied, there exists $\va_1, \va_1'\in\Nash^{p,p}(G)$ where  $u_1(\va_1) > u_1(\va'_1)$ and $\va_2, \va_2'\in\Nash^{p,p}(G)$ where $u_2(\va_2) > u_2(\va'_2)$ (note that $\va_1, \va_1'$ do not need to be different from $\va_2, \va_2'$). From the given $(\hat{a}_1, \hat{a}_2) \notin \Nash^{p,p}(G)$, let $\delta_1 = \max_{a_1\in A_1} u_1(a_1, \hat{a}_2) - u_1(\hat{a}_1, \hat{a}_2)$ and $\delta_2 = \max_{a_2\in A_2} u_2(\hat{a}_1, a_2) - u_2(\hat{a}_1, \hat{a}_2)$. We set $T = 2k + 1$ and $k \geq \max \Big( \frac{\delta_1}{u_1(\va_1)-u_1(\va'_1)}, \frac{\delta_2}{u_2(\va_2)-u_2(\va'_2)} \Big)$. We argue that the following strategy profile is an SPE of $G(T)$:
\begin{itemize}
    \item In the first round, play $(\hat{a}_1, \hat{a}_2)$,
    \item For later $2k$ rounds, if the first round play is $(\hat{a}_1, \hat{a}_2)$, players play their corresponding strategy according to $(\va_1, \va_2, \va_1, \va_2, \dots)$; if the first round play is $(a_1', \hat{a}_2)$ where $a_1' \neq \hat{a}_1$, players play their corresponding strategy according to $(\va_1', \va_2, \va_1', \va_2, \dots)$; if the first round play is $(\hat{a}_1, a_2')$ where $a_2' \neq \hat{a}_2$, players play their corresponding strategy according to $(\va_1, \va_2', \va_1, \va_2', \dots)$; otherwise, players play their corresponding strategy according to $(\va_1, \va_1, \dots)$ (or any sequence of NEs).
\end{itemize}
For every subgame in the game tree starting from the second round or later, the above strategy profile forms an NE, since a stage game NE is played in every round. So we only need to show that the above strategy profile forms an NE for the root game $G(T)$. The total payoff for player 1 under the above strategy profile is $U_1 = u_1(\hat{a}_1, \hat{a}_2) + k\cdot(u_1(\va_1) + u_1(\va_2))$. If player 1 unilaterally deviates on the first round play (and possibly later rounds as well), the new total payoff $U_1' \leq \max_{a_1\in A_1} u_1(a_1, \hat{a}_2) + k\cdot(u_1(\va_1') + u_1(\va_2)) \leq U_1$ due to our choice of $k$. The same argument applies for player 2, so the above strategy profile forms an NE for the root game. \Phenom{} occurs here since the first round behavior strategy profile $(\hat{a}_1, \hat{a}_2) \notin \Nash^{p,p}(G)$.

If (2) is satisfied, there exists $\va_1, \va_1'\in\Nash^{p,p}(G)$ where $u_1(\va_1) > u_1(\va'_1)$. Let $\delta = \max_{a_1} u_1(a_1, \hat{a}_2) - u_1(\hat{a}_1, \hat{a}_2)$. We set $T=k+1$ and $k \geq \frac{\delta}{u_1(\va_1) - u_1(\va'_1)}$. We argue that the following strategy profile is an SPE of $G(T)$: 
\begin{itemize}
    \item In the first round, play $(\hat{a}_1, \hat{a}_2)$,
    \item For the later $k$ rounds, if player 1's first round play is $\hat{a}_1$, players play their corresponding strategy according to $(\va_1, \va_1, \dots)$; if player 1's first round play is $a_1'\neq \hat{a}_1$, players play their corresponding strategy according to $(\va_1', \va_1', \dots)$.
\end{itemize}
Again, for every subgame in the game tree starting from the second round or later, the above strategy profile forms an NE. So we only need to show that the above strategy profile forms an NE for the root game. Player 2 cannot deviate to get a higher total payoff since $\hat{a}_2$ is the best response to $\hat{a}_1$ and $u_2$ is the same under all $\Nash^{p,p}(G)$. For player 1, the total payoff under the above strategy profile is $U_1 = u_1(\hat{a}_1, \hat{a}_2) + k\cdot u_1(\va_1)$. If player 1 unilaterally deviates, the new total payoff $U_1' \leq \max_{a_1\in A_1} u_1(a_1, \hat{a}_2) + k\cdot u_1(\va_1') \leq U_1$ due to our choice of $k$. So the above strategy profile forms an NE for the root game. \Phenom{} occurs here since $(\hat{a}_1, \hat{a}_2) \notin \Nash^{p,p}(G)$.

The same construction exchanging player 1 and 2 works for the case when (3) is satisfied. This finishes the proof that the condition is sufficient.
\\
\\
To prove this condition is necessary, we prove that if the condition is not satisfied, then for any $T$ and any SPE $\vmu$ of $G(T)$, \phenom{} does not occur, i.e., the strategy profile at each round must form an NE of the stage game. The condition is not satisfied means all of (1),(2),(3) are false. This can be divided into the following disjoint cases.

\textbf{1. $|V_1^{p,p}| = 0$, $|V_2^{p,p}|=0$.} Here, $\Nash^{p,p}(G) = \emptyset$, so there is no SPE in the repeated game $G(T)$ for any $T$. Therefore, \phenom{} can never occur.

\textbf{2. $|V_1^{p,p}| = 1$, $|V_2^{p,p}|=1$.} 
Using backward induction, we know that in any SPE, the strategy profile at each round must form an NE of the stage game.

\textbf{3. $|V_1^{p,p}| > 1$, $|V_2^{p,p}|>1$.}
Since (1) is false, there does not exist $\hat{a}_1, \hat{a}_2$ where $(\hat{a}_1, \hat{a}_2)\notin \Nash^{p,p}(G)$, i.e., $A = \Nash^{p,p}(G)$ (\Cref{example:every_ne} shows an example of such games). Therefore, it trivially follows that in any SPE of $G(T)$, the strategy profile at each round must form an NE of the stage game.

\textbf{4. $|V_1^{p,p}| > 1$, $|V_2^{p,p}|=1$. }
Since (2) is false, there does not exist $\hat{a}_1, a_1'\in A_1, \hat{a}_2\in A_2$ where $u_1(\hat{a}_1, \hat{a}_2) < u_1(a_1', \hat{a}_2)$ and $\hat{a}_2$ is a best response to $\hat{a}_1$ (\Cref{example:no_best_response} shows an example of such games). This means that for any $\hat{a}_1, \hat{a}_2$ where $\hat{a}_2$ is a best response to $\hat{a}_1$, $\hat{a}_1$ is a best response to $\hat{a}_2$, and thus $(\hat{a}_1, \hat{a}_2) \in \Nash^{p,p}(G)$. Now we can use backward induction to prove that in any SPE, the strategy profile at each round must form an NE of the stage game. The strategy profiles in the last round must form NEs. Given that the strategy profiles in the last $k$ rounds must all form NEs, consider the $(k+1)$-to-last round. Player 2 must play a best response in this round, since their play in this round does not affect the total payoff they get in the final $k$ rounds. And since all strategy profiles where player 2 plays best response is an NE of the stage game, the induction step is complete.

\textbf{5. $|V_1^{p,p}| = 1$, $|V_2^{p,p}|>1$. } The same proof for case 4 applies here.

This finishes the proof that the above condition is necessary.
\end{proof}

A reader may wonder whether there exists stage games $G$ that belong to cases 3 and 4 in the above proof for the necessity of the condition. We present here example games that belong to each case.

\begin{example}
\label{example:every_ne}
\Cref{tab:every_ne} presents an example stage game $G$ where  $|V_1^{p,p}| > 1$, $|V_2^{p,p}|>1$, and there does not exist $\hat{a}_1, \hat{a}_2$ where $(\hat{a}_1, \hat{a}_2)\notin \Nash^{p,p}(G)$, i.e., $A = \Nash^{p,p}(G)$.
\end{example}

\begin{example}
\label{example:no_best_response}
\Cref{tab:no_best_response} presents an example stage game $G$ where $|V_1^{p,p}| > 1$, $|V_2^{p,p}|=1$, and there does not exist $\hat{a}_1, a_1'\in A_1, \hat{a}_2\in A_2$ where $u_1(\hat{a}_1, \hat{a}_2) < u_1(a_1', \hat{a}_2)$ and $\hat{a}_2$ is a best response to $\hat{a}_1$.
The set of pure Nash equilibria of $G$ is: $(a_1, a_2)$, $(b_1, a_2)$, $(b_1, b_2)$, and $(c_1, b_2)$. Therefore, $|V_1^{p,p}| > 1$ and $|V_2^{p,p}|=1$. We can see that for all strategy profiles where player 2 plays a best response, player 1 also plays a best response.

\begin{table}[ht]
    \centering
    \begin{tabular}{c|c|c|}
    & $a_2$ & $b_2$ \\
    \midrule
    $a_1$ & (3,2)   & (1,1) \\
      \midrule
    $b_1$ & (3,2)   & (2,2) \\
      \midrule
    $c_1$ & (1,1)  & (2,2) \\
       \midrule
    \end{tabular}
    \caption{Example stage game in matrix form, row player is player 1, column player is player 2. For all $a\in A_1$, for all $\sigma_2\in \Delta A_2$ that is a best response to $a$, $a$ is also a best response to $\sigma_2$. }
    \label{tab:no_best_response}
\end{table}
\end{example}

From the constructions of SPEs where \phenom{} occurs used in the above proof, we can obtain the following corollary regarding the value of $T$ above which \phenom{} can occur if the condition in \Cref{thm:pure} is satisfied:

\begin{corollary}
For 2-player pure-strategy-only games, given a stage game $G$:
\begin{enumerate}
    \item If $|V_1^{p,p}| > 1$, $|V_2^{p,p}| > 1$, and there exists some $\hat{a}_1\in A_1, \hat{a}_2\in A_2$ where $(\hat{a}_1, \hat{a}_2) \notin \textrm{Nash}(G)$, then for all $T \geq 2\cdot \max\Big( \frac{\delta_1}{\max (V_1^{p,p})-\min (V_1^{p,p})}, \frac{\delta_2}{\max (V_2^{p,p})- \min (V_2^{p,p})} \Big) +1$ where $\delta_1 = \max_{a_1\in A_1} u_1(a_1, \hat{a}_2) - u_1(\hat{a}_1, \hat{a}_2)$ and $\delta_2 = \max_{a_2\in A_2} u_2(\hat{a}_1, a_2) - u_2(\hat{a}_1, \hat{a}_2)$, there exists some SPE of $G(T)$ where \phenom{} occurs.
    \item If $|V_1^{p,p}| > 1$, $|V_2^{p,p}| = 1$, and there exists $\hat{a}_1, a_1' \in A_1, \hat{a}_2 \in A_2$ where $u_1(\hat{a}_1, \hat{a}_2) < u_1(a_1', \hat{a}_2)$ and $\hat{a}_2$ is a best response to $\hat{a}_1$, then for all $T\geq \frac{\max_{a_1\in A_1} u_1(a_1, \hat{a}_2) - u_1(\hat{a}_1, \hat{a}_2)}{\max (V_1^{p,p})-\min (V_1^{p,p})} + 1$, there exists some SPE of $G(T)$ where \phenom{} occurs.
    \item If $|V_1^{p,p}| = 1$, $|V_2^{p,p}| > 1$, and there exists $\hat{a}_1 \in A_1, \hat{a}_2, a_2' \in A_2$ where $u_2(\hat{a}_1, \hat{a}_2) < u_2(\hat{a}_1, a_2')$ and $\hat{a}_1$ is a best response to $\hat{a}_2$, then for all $T\geq \frac{\max_{a_2\in A_2} u_2(\hat{a}_1, a_2) - u_2(\hat{a}_1, \hat{a}_2)}{\max (V_2^{p,p})-\min (V_2^{p,p})} + 1$, there exists some SPE of $G(T)$ where \phenom{} occurs.
\end{enumerate}
\end{corollary}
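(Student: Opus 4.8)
The plan is to read off the required lower bounds on $T$ directly from the explicit SPE constructions used in the sufficiency direction of \Cref{thm:pure}, after choosing the reward and punishment equilibria as favorably as possible. Recall that in each of the three cases the construction devotes the first round to the off-Nash profile $(\hat{a}_1,\hat{a}_2)$ and uses the remaining $T-1$ rounds to build threats out of stage-game Nash equilibria, so that no unilateral first-round deviation is profitable. The per-round strength of the threat against player $i$ is exactly the payoff gap $u_i(\va_i) - u_i(\va_i')$ between the reward equilibrium $\va_i$ and the punishment equilibrium $\va_i'$. Since these gaps enter the bound on $T$ only through their reciprocals, the tightest bound is obtained by taking $\va_i$ to attain $\max(V_i^{p,p})$ and $\va_i'$ to attain $\min(V_i^{p,p})$; this replaces $u_i(\va_i)-u_i(\va_i')$ in the constructions by $\max(V_i^{p,p}) - \min(V_i^{p,p})$ and yields precisely the denominators in the statement.

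For cases (2) and (3) only one player (say player 1) has a non-singleton equilibrium-payoff set and hence needs to be deterred; player 2's payoff is constant across $\Nash^{p,p}(G)$, so $\hat{a}_2$ being a best response to $\hat{a}_1$ already removes any incentive for player 2 to deviate. Here I would keep player 1's threat budget growing linearly in the number of continuation rounds: play the reward equilibrium (attaining $\max(V_1^{p,p})$) by default and switch to the punishment equilibrium (attaining $\min(V_1^{p,p})$) for the rest of the game the moment player 1 deviates in round one. The total deterrence available is then $(T-1)\bigl(\max(V_1^{p,p})-\min(V_1^{p,p})\bigr)$, which must dominate $\delta = \max_{a_1} u_1(a_1,\hat{a}_2) - u_1(\hat{a}_1,\hat{a}_2)$; this holds for every integer $T \ge \delta/(\max(V_1^{p,p})-\min(V_1^{p,p})) + 1$, matching the statement, with no parity restriction.

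Case (1) requires more care, because both players must be deterred simultaneously while the continuation is itself an SPE (so every continuation round is a stage-game equilibrium). I would use the alternating default $(\va_1,\va_2,\va_1,\va_2,\dots)$ from the proof of \Cref{thm:pure}, in which odd continuation rounds carry player 1's threat (reward $\va_1$ versus punishment $\va_1'$) and even rounds carry player 2's threat, so that the two punishment schemes do not interfere. Taking $k = \lceil \max(B_1,B_2)\rceil$ with $B_i = \delta_i/(\max(V_i^{p,p})-\min(V_i^{p,p}))$ makes both accumulated threats exceed $\delta_1$ and $\delta_2$ respectively, giving an SPE for $T = 2k+1$. To obtain an SPE for every larger $T$ and to cover even values, I would append the remaining rounds as a block of history-independent equilibrium rounds: constant continuation play is a stage-game NE in each round and cancels out of every first-round deviation comparison, so it preserves both the SPE property and the deterrence established by the first $2k$ threat rounds. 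This upgrades the single value $T=2k+1$ to all $T$ above the stated threshold.

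The main obstacle is exactly this simultaneity in case (1): unlike cases (2) and (3), one cannot in general find a single continuation round delivering the maximal per-round threat $\max(V_i^{p,p})-\min(V_i^{p,p})$ to both players at once, since the equilibrium attaining $\max(V_1^{p,p})$ need not attain $\max(V_2^{p,p})$. The alternating partition of the continuation rounds resolves this by spending disjoint rounds on the two players' threats, at the cost of the factor of $2$ appearing in the bound; the accompanying integrality bookkeeping (the ceiling in $k$ and the parity of $T$, handled by the padding block) is the only delicate point, and I would verify that the padding argument indeed reaches every integer $T \ge 2\max(B_1,B_2)+1$.
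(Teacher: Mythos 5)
Your proposal takes essentially the same route as the paper: the paper states this corollary with no separate argument, simply reading the thresholds off the sufficiency constructions in \Cref{thm:pure} with the reward and punishment equilibria chosen to attain $\max(V_i^{p,p})$ and $\min(V_i^{p,p})$, which is exactly your plan. Your handling of cases (2) and (3) is complete and matches the stated bounds exactly, since there the threat accumulates linearly over all $T-1$ continuation rounds and no integrality or parity issue arises.

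The one genuine gap is the deferred verification at the end of your case (1) argument, and it does not go through. Writing $B_i = \delta_i/(\max(V_i^{p,p})-\min(V_i^{p,p}))$ and $M = \max(B_1,B_2)$, your construction with $k = \lceil M\rceil$ plus padding yields SPEs for every $T \geq 2\lceil M\rceil + 1$, whereas the statement claims every integer $T \geq 2M+1$. When $M = m + f$ with $m$ integer and $0 < f \leq 1/2$, the even value $T = 2m+2$ satisfies the stated threshold but is not reached by your padded construction. Reallocating the $T-1 = 2m+1$ continuation rounds does not help: if both $B_1, B_2 > m$, deterring player $i$ requires at least $m+1$ rounds devoted to player $i$'s threat, and in games where the equilibrium attaining $\max(V_1^{p,p})$ attains $\min(V_2^{p,p})$ and vice versa (so that no continuation round can threaten both players at once), the two demands cannot fit into $2m+1$ rounds. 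So this family of even $T$ cannot be covered by the alternating-threat construction at all. To be fair, this slack is inherited from the statement itself: the paper's own construction only produces odd $T = 2k+1$ and silently passes over the parity and ceiling issues that you at least make explicit; but as a proof of the corollary as literally stated, your argument (like the paper's) leaves this one case open.
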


\section{The General Case}
\label{sec:mixed}

Now we consider the general case for 2-player games where mixed strategies are allowed. 
We use $\Nash^{m,m}(G)$, $\spe^{m,m}(G,T)$, and $V_i^{m,m}$ to denote the corresponding concepts of $\Nash(G)$, $\spe(G,T)$, and $V_i$ when both players can use mixed strategies.
Since mixed Nash equilibrium always exists for $G$ \cite{nash1951non}, $|V_1^{m,m}|\geq 1$, $|V_2^{m,m}|\geq 1$. We use $\glsmm$ and $\glomm$ to denote the partition of the set of all stage games $\mathcal{G}$ when both players can use mixed strategies. The following theorem presents a complete mathematical characterization of $\glsmm$.

\begin{theorem} [2-player, general case]
\label{thm:general}
For general 2-player games (mixed strategies allowed),
a sufficient and necessary condition on the stage game $G$ for there exists some $T$ and some SPE of $G(T)$ where \phenom{} occurs is:
\begin{enumerate}
    \item $|V_1^{m,m}| > 1$, $|V_2^{m,m}| > 1$, and there exists some $\hat{\sigma}_1 \in \Delta A_1, \hat{\sigma}_2\in \Delta A_2$ where $(\hat{\sigma}_1, \hat{\sigma}_2) \notin \Nash^{m,m}(G)$, OR
    \item $|V_1^{m,m}| > 1$, $|V_2^{m,m}| = 1$, and there exists $\hat{\sigma}_1\in \Delta A_1, \hat{\sigma}_2 \in \Delta A_2, a_1'\in A_1$ where $u_1(\hat{\sigma}_1, \hat{\sigma}_2) < u_1(a_1', \hat{\sigma}_2)$ and $\hat{\sigma}_2$ is a best response to $\hat{\sigma}_1$, OR
    \item same as (2) but exchange player 1 and 2.
\end{enumerate}
\end{theorem}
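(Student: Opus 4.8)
The plan is to prove sufficiency and necessity separately, in both directions organizing the argument by the partition of $G$ according to the pair $(|V_1^{m,m}|, |V_2^{m,m}|)$, exactly mirroring the structure used for \Cref{thm:pure}. Since a mixed Nash equilibrium always exists, each of $|V_1^{m,m}|$ and $|V_2^{m,m}|$ is at least $1$, so there are four cases: both equal $1$; both exceed $1$ (only clause~(1) can hold); and the two cases in which exactly one equals $1$ (only clause~(2), resp.\ (3), can hold).

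For necessity I would assume the condition fails and show by backward induction that every round of every SPE is a stage-game equilibrium. When $|V_1^{m,m}| = |V_2^{m,m}| = 1$ this is the classical unique-equilibrium-payoff argument. When both exceed $1$, the failure of clause~(1) means $\Delta A_1 \times \Delta A_2 = \Nash^{m,m}(G)$, i.e.\ every profile is already a stage equilibrium, so the statement is immediate. The one informative case is, say, $|V_2^{m,m}| = 1$ and $|V_1^{m,m}| > 1$: the negation of clause~(2) says precisely that whenever $\hat{\sigma}_2$ is a best response to $\hat{\sigma}_1$ the profile $(\hat{\sigma}_1, \hat{\sigma}_2)$ is a stage equilibrium. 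Because $|V_2^{m,m}| = 1$, under the inductive hypothesis that all later rounds are stage equilibria player~$2$'s continuation payoff is a fixed multiple of its unique equilibrium value and hence independent of the current round; player~$2$ therefore plays a current-round best response, and the negated clause~(2) upgrades this to a full stage equilibrium. The symmetric argument handles clause~(3).

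For sufficiency I would, in each case, exhibit a $T$ and an SPE whose first-round profile is off-Nash, reusing the threat constructions of \Cref{thm:pure} with the stage equilibria there replaced by mixed stage equilibria of differing payoff, which exist precisely because the relevant $|V_i^{m,m}| > 1$. The only new issue is that the condition is phrased in terms of mixed profiles, whereas a threat must detect deviations, which is transparent only for pure play. For clause~(1) this is handled cheaply: if every pure profile were a stage equilibrium then every player's payoff would be independent of its own action, hence every mixed profile would be a stage equilibrium too, contradicting the existence of an off-Nash profile; so there is a pure off-Nash profile, and the pure-case alternating-threat construction applies verbatim.

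The hard part is sufficiency for clauses~(2) and~(3), where only one player (say player~1) has several equilibrium values and must be the one threatened, while the unique-payoff player~2 is required to be best responding on the equilibrium path and hence, its continuation being fixed, has no incentive to deviate. Any single pure action of player~1 can be made its strict first-round optimum by rewarding that action with the high continuation and punishing every other action with the low one over enough rounds; so it would suffice to produce an off-Nash play in which player~2 best responds and player~1's prescribed action is pure and not a best response. The obstruction is that the witness guaranteed by clause~(2) may be \emph{genuinely mixed}: the continuation payoffs realizable over finitely many rounds are sums of stage-equilibrium values and therefore form a discrete set, so one cannot in general make player~1 exactly indifferent across the support of a mixed strategy, and a direct attempt to enforce such a strategy fails. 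This is where I expect the difficulty to concentrate, and where I would invoke \Cref{lemma:matrix}: from its structural statement about the payoff matrices I would argue that the witnessed off-Nash profile lies in a connected component of off-Nash profiles (with player~2 best responding) that contains a play the finite-round threat can actually enforce — in the cleanest form, one in which player~1's action is pure. Establishing that such an enforceable profile always exists in the component, and that the resulting repeated-game profile is a genuine SPE with an off-Nash first round, is the crux of the argument.
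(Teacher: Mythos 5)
Your necessity argument and your treatment of clause~(1) match the paper: the paper also disposes of clause~(1) by reducing a mixed off-Nash profile to a pure one (\Cref{lemma:mixed-pure}, which is exactly the contrapositive you sketch) and then reusing the threat construction from \Cref{thm:pure}, and its necessity proof is the same backward-induction case analysis you describe. The problem is that for clauses~(2) and~(3) you have only named the difficulty, and the one concrete resolution you propose --- locate, inside a connected component of off-Nash profiles, an enforceable witness ``in which player~1's action is pure'' --- is provably unavailable in general. The paper splits clause~(2) into three subcases: (a)~a pure-action witness $\hat{a}_1$ exists, in which case the pure construction applies; (b)~no such witness exists but some $a_1$ has two distinct pure best responses; (c)~neither holds. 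Cases (b) and (c) are nonempty: the game of \Cref{tab:no_best_response} satisfies clause~(2) only through a genuinely mixed $\hat{\sigma}_1$ (e.g.\ $0.9$ on $a_1$, $0.1$ on $c_1$ against $a_2$), while every pure action to which player~2 best-responds is itself a best response. So no reduction to a pure first-round action for player~1 can succeed; any correct proof must build an SPE whose off-Nash first round has player~1 genuinely mixing, and must solve the exact-indifference problem you correctly flagged.

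The paper's mechanism for this, which is the missing idea, is to manufacture a \emph{continuous} tuning parameter and apply the intermediate value theorem. In case~(b), the two pure best responses $b_1^*, b_2^*$ to some $a^*$ yield a continuum of stage equilibria $(a^*,\sigma_\lambda)$, which are inserted into the continuation play after each first-round action $i$ in the support of $\hat{\sigma}_1$; the weights $\lambda_i$ are then chosen by continuity so that $U_1(\vmu^i)+u_1(i,\hat{\sigma}_2)$ is constant across the support, making player~1 exactly indifferent. In case~(c), \Cref{lemma:matrix} is used not to produce a pure witness but to produce two pure stage equilibria $(i_1,j_1),(i_2,j_2)$ with $i_1\neq i_2$, $j_1\neq j_2$ and distinct $u_1$; mixing $\hat{\sigma}_\lambda$ over $i_1,i_2$ forces some $\lambda_1$ at which player~2 has two pure best responses $\hat{j}_1,\hat{j}_2$, and the first-round profile $(\hat{\sigma}_{\lambda_1},\hat{\sigma}_\rho)$ is made incentive-compatible by tuning player~2's weight $\rho$ continuously (again by the intermediate value theorem), with deviations by player~1 outside $\{i_1,i_2\}$ punished by the min/max equilibrium continuations. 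Without this continuity-plus-IVT device (or an equivalent one), the discreteness obstruction you identified is fatal to the construction, so as it stands your proposal does not close the sufficiency direction for clauses~(2) and~(3).
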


We first establish some useful lemmas.

\begin{lemma}
\label{lemma:mixed-pure}
For any two-player game $G$, if there exists $\sigma_1\in\Delta A_1$ and $\sigma_2\in\Delta A_2$ where $(\sigma_1, \sigma_2)\notin \Nash^{m,m}(G)$, then there exists $a_1\in A_1$ and $a_2\in A_2$ where $(a_1, a_2) \notin \Nash^{m,m}(G)$.
\end{lemma}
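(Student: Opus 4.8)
The plan is to prove this by a direct, constructive argument: starting from a mixed profile $(\sigma_1,\sigma_2)\notin\Nash^{m,m}(G)$, I would extract a pure profile $(a_1,a_2)\notin\Nash^{m,m}(G)$ by exploiting the bilinearity of the payoff functions to replace each mixed comparison with a pure one. Since $(\sigma_1,\sigma_2)$ is not an NE, at least one player---say player $1$, without loss of generality, as the statement is symmetric in the two players---is not best-responding, so $u_1(\sigma_1,\sigma_2) < \max_{a_1\in A_1} u_1(a_1,\sigma_2)$. Let $a_1^+$ attain this maximum. Because $u_1(\sigma_1,\sigma_2)=\sum_{a_1}\sigma_1(a_1)\,u_1(a_1,\sigma_2)$ is a convex combination lying strictly below the maximum, there must be some pure action $a_1^-$ in the support of $\sigma_1$ with $u_1(a_1^-,\sigma_2) < u_1(a_1^+,\sigma_2)$.

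Next I would de-mix on player $2$'s side. The inequality $u_1(a_1^-,\sigma_2) < u_1(a_1^+,\sigma_2)$ expands, again by bilinearity, into $\sum_{a_2}\sigma_2(a_2)\,[\,u_1(a_1^+,a_2)-u_1(a_1^-,a_2)\,] > 0$, so some pure $a_2^*$ satisfies $u_1(a_1^+,a_2^*) > u_1(a_1^-,a_2^*)$. Then in the pure profile $(a_1^-,a_2^*)$ player $1$ strictly prefers $a_1^+$, so $a_1^-$ is not a best response to $a_2^*$ and $(a_1^-,a_2^*)\notin\Nash^{m,m}(G)$, which is exactly the desired pure off-Nash profile.

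I expect the only subtle step to be the two de-mixing moves: arguing that a strict inequality between convex combinations forces a strict inequality between some pair of pure components. This is routine---both $u_1(\cdot,\sigma_2)$ and $u_1(a_1^{\pm},\cdot)$ are convex combinations of pure payoffs, and an average strictly below the maximum (resp.\ a positive average) must have at least one term strictly below (resp.\ strictly positive)---but it is the crux and should be stated carefully.

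As an alternative, I could prove the contrapositive in one stroke: if every pure profile $(a_1,a_2)$ is an NE, then comparing the best-response conditions across all pairs shows $u_1(a_1,a_2)$ is independent of $a_1$ and $u_2(a_1,a_2)$ is independent of $a_2$; hence each player's expected payoff $u_i(\sigma_1,\sigma_2)$ is independent of their own (possibly mixed) strategy, so every mixed profile is an NE. Either route establishes the lemma, and I would present the constructive version since it directly exhibits the pure off-Nash profile that later parts of the proof of \Cref{thm:general} will use.
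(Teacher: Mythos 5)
Your proposal is correct and follows essentially the same route as the paper's proof: de-mix player 1's strategy to find a support action strictly dominated against $\sigma_2$, then de-mix $\sigma_2$ to find a pure $a_2$ witnessing the strict gap, yielding a pure off-Nash profile. The two de-mixing steps you flag as the crux are exactly the steps the paper carries out, so no gap remains.
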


\begin{proof}
$(\sigma_1, \sigma_2)\notin \Nash^{m,m}(G)$ implies that there exists some $a_1'\in A_1$ where $u_1(\sigma_1, \sigma_2) < u_1(a_1', \sigma_2)$, or there exists some $a_2'\in A_2$ where $u_2(\sigma_1, \sigma_2) < u_2(\sigma_1, a_2')$. We consider the case of there exists some $a_1'$ where $u_1(\sigma_1, \sigma_2) < u_1(a_1', \sigma_2)$, and the same argument applies to the other case. Since $u_1(\sigma_1, \sigma_2) \geq \min_{a_1\in S_{\sigma_1}} u_1 (a_1, \sigma_2)$, there exists some $a_1, a_1', \sigma_2$ where $u_1(a_1, \sigma_2) < u_1(a_1', \sigma_2)$. So $u_1(a_1', \sigma_2) - u_1(a_1, \sigma_2) = \sum_{a_2\in S_{\sigma_2}} \sigma_2(a_2)\cdot \Big(u_1(a_1', a_2) - u_1(a_1, a_2) \Big) > 0$, which means there exists some $a_2$ where $u_1(a_1', a_2) - u_1(a_1, a_2) > 0$. This $(a_1, a_2)\notin \Nash^{m,m}(G)$, which finishes the proof.
\end{proof}

\begin{lemma}
\label{lemma:matrix}
For any two-player game $G$, define 
\begin{align*}
    I = \condSet{(i,j)}{i\in A_1, j\in A_2, u_2(i, j) = \max_{j'\in A_2} u_2(i, j')}
\end{align*}
as the set of pure strategy profiles where player 2 plays a best response. If
\begin{enumerate}
    \item $|V_1^{m,m}| > 1$, $|V_2^{m,m}|=1$, and
    \item there does not exist $\hat{a}_1\in A_1, \hat{\sigma}_2 \in \Delta A_2, a_1'\in A_1$ where $u_1(\hat{a}_1, \hat{\sigma}_2) < u_1(a_1', \hat{\sigma}_2)$ and $\hat{\sigma}_2$ is a best response to $\hat{a}_1$, and
    \item there does not exist $a_1\in A_1$ and $a_2, a_2' \in A_2$ where $a_2\neq a_2'$ and both $a_2$ and $a_2'$ are best responses to $a_1$,
\end{enumerate}
then, 
\begin{itemize}
    \item [(a).] $I\subseteq \Nash^{m,m}(G)$,
    \item [(b).] for each $i\in A_1$, there is a unique $j\in A_2$ such that $(i,j)\in I$,
    \item [(c).] there exists $b\in \real$ such that for all $(i,j)\in I$, $u_2(i, j) = b$, and for all $(i',j')\notin I$, $u_2(i', j') < b$,
    \item [(d).] there exists $(i,j), (i',j')\in I$ such that $u_1(i, j) \neq u_1(i',j')$, and $i\neq i'$, $j\neq j'$.
\end{itemize}
\end{lemma}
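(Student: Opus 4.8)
The plan is to prove the four claims in the order (b), (a), (c), (d), since each later part builds on the earlier ones, and to isolate (d) as the genuinely hard part. First I would dispatch (b) and (a). For (b), hypothesis (3) says player 2 has a unique pure best response to every pure strategy $i$ of player 1; since $A_2$ is finite and nonempty this best response exists and is unique, so writing $f(i)$ for it, $I$ is exactly the graph $\{(i, f(i)) : i \in A_1\}$, which is (b). For (a), take $(i,j) \in I$: the pure strategy $j$ maximizes $u_2(i, \cdot)$ over pure strategies, hence over mixed strategies (a mixed payoff is a convex combination), so player 2 is best-responding to $i$. Applying hypothesis (2) with $\hat a_1 = i$ and $\hat\sigma_2 = j$ (which is a best response to $i$) yields that there is no $a_1'$ with $u_1(i,j) < u_1(a_1', j)$, so $i$ is a best response to $j$ for player 1. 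Thus $(i,j) \in \Nash^{m,m}(G)$.

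Next, part (c). Since $|V_2^{m,m}| = 1$, all Nash equilibria give player 2 the same payoff $b$; by (a) every $(i,j) \in I$ is such an equilibrium, so $u_2(i,j) = b$ on $I$. For a profile $(i', j') \notin I$, $j'$ is not a best response to $i'$, so $u_2(i', j') < \max_{j''} u_2(i', j'') = u_2(i', f(i'))$, and $(i', f(i')) \in I$ gives $u_2(i', f(i')) = b$; hence $u_2(i', j') < b$, which is (c).

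Part (d) is the main obstacle, and I would attack it by first understanding the mixed equilibria. Consider any mixed Nash equilibrium $(\sigma_1, \sigma_2)$. By (b) and (c) we have $u_2(i, j) \le b$ for all $(i,j)$, with equality only at $(i, f(i))$, so $u_2(\sigma_1, j) = \sum_i \sigma_1(i) u_2(i,j) \le b$ with equality iff $j = f(i)$ for every $i \in S_{\sigma_1}$. Since player 2's equilibrium payoff is $b$, such a $j$ must exist, which forces $f$ to be constant $=: j^*$ on $S_{\sigma_1}$ and makes $j^*$ the unique pure best response to $\sigma_1$; hence $\sigma_2 = j^*$ is pure. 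Player 1's indifference condition then gives $u_1(\sigma_1, \sigma_2) = u_1(i, j^*) = u_1(i, f(i))$ for each $i \in S_{\sigma_1}$, so every equilibrium payoff of player 1 coincides with $u_1(i, f(i))$ for some $i$. Therefore $V_1^{m,m} = \{u_1(i, f(i)) : i \in A_1\}$, and $|V_1^{m,m}| > 1$ produces two indices with $u_1(i, f(i)) \ne u_1(i', f(i'))$, forcing $i \ne i'$ as well.

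It remains to secure $j = f(i) \ne f(i') = j'$, which I would obtain by contradiction. If every pair of indices with distinct player-1 payoffs mapped to the same column under $f$, then picking $i_1, i_2$ realizing two different payoff values forces $f(i_1) = f(i_2) =: j_0$, and any other $i$ differs in payoff from at least one of $i_1, i_2$ and hence also satisfies $f(i) = j_0$; so $f \equiv j_0$. But then every $(i, j_0) \in I$ is an equilibrium with $i$ a best response to $j_0$, making $u_1(\cdot, j_0)$ constant and contradicting the existence of two distinct values $u_1(i, f(i))$. This contradiction yields the required pair with $i \ne i'$ and $j \ne j'$, completing (d). The delicate point throughout (d) is the equilibrium-payoff bookkeeping — pinning down that player 2 must play pure and that player 1's payoff is tied to $u_1(i, f(i))$ — which is where hypotheses (1)–(3) are used together.
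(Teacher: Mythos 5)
Your proposal is correct. Parts (a), (b), and (c) follow the paper's proof essentially verbatim: (b) from hypothesis (3), (a) from hypothesis (2) applied at pure profiles of $I$, and (c) from $|V_2^{m,m}|=1$ combined with (a) and the definition of $I$. For part (d), however, you take a genuinely different route. The paper argues by a short contradiction: if $u_1$ were constant on $I$, then two equilibria with distinct player-1 payoffs (which exist since $|V_1^{m,m}|>1$) could not both have all their supporting pure profiles inside $I$, and any equilibrium placing positive mass outside $I$ gives player 2 strictly less than $b$ by (c), contradicting $|V_2^{m,m}|=1$; for the resulting pair in $I$ with distinct payoffs, $i\neq i'$ follows from (b) and $j\neq j'$ follows because two equilibria sharing the same column give player 1 equal (maximal) payoffs against that column. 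You instead use the same underlying mechanism (mass outside $I$ pushes player 2 below $b$) to first prove a stronger structural result: in \emph{every} Nash equilibrium, player 2 plays a pure strategy $j^*$ with $f\equiv j^*$ on $S_{\sigma_1}$, and hence $V_1^{m,m}=\{u_1(i,f(i)):i\in A_1\}$ exactly. This buys a full characterization of the equilibrium set (player 2 is pure in every equilibrium, and player 1's equilibrium payoffs are precisely the diagonal payoffs), at the cost of more bookkeeping. Your derivation of $j\neq j'$ is also more roundabout: you force $f$ to be globally constant and contradict that, whereas the paper applies the two-equilibria-one-column argument directly to the pair at hand; both steps rest on the same fact, and both are valid.
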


\begin{proof}
(2) directly implies (a). From the definition of $I$, for each $i\in A_1$, there is at least one $j\in A_2$ such that $(i,j)\in I$. This combines with (3) implies (b).

Since $|V_2^{m,m}|=1$ and $I\subseteq \Nash^{m,m}(G)$, for all $(i,j)\in I$, $u_2(i,j)=b$ where $b$ is the only element in $V_2^{m,m}$. It then follows from the definition of $I$ that for all $(i',j')\notin I$, $u_2(i', j') < b$. So (c) follows.

For (d), assume in contradiction that all $u_1(i,j)$ for $(i,j)\in I$ are the same. Since $|V_1^{m,m}|>1$, there exists $\vsigma, \vsigma'\in \Nash^{m,m}(G)$ such that $u_1(\vsigma) \neq u_1(\vsigma')$. Denote $\mathcal{S}_{\vsigma}$ as the set of pure strategy profiles $(i,j)$ that occur with non-zero probability under the strategy profile $\vsigma$. Then at least one of $\mathcal{S}_{\vsigma}$ and $\mathcal{S}_{\vsigma'}$ needs to contain elements not in $I$, since otherwise $u_1(\vsigma) = u_1(\vsigma')$. WLOG, let $\mathcal{S}_{\vsigma}$ contain elements not in $I$. By (c), $u_2(\vsigma) < b$, which contradicts with $|V_2^{m,m}|=1$. So there exists $(i,j), (i',j')\in I$ such that $u_1(i, j) \neq u_1(i',j')$. For such $(i,j), (i',j')$, if $i=i'$, then (b) implies that $j=j'$, which contradicts with $u_1(i, j) \neq u_1(i',j')$. So $i\neq i'$. And since $I\subseteq \Nash^{m,m}(G)$ (due to (a)), $(i,j), (i',j')$ are both NEs with different payoffs for player 1, so $j\neq j'$. Therefore, (d) follows. 

\end{proof}

Now we are ready to prove \Cref{thm:general}.

\begin{proof}[Proof of \Cref{thm:general}]

Following the same argument as the proof for the pure strategy case (\Cref{thm:pure}), we can show the condition is necessary.

We prove the condition is sufficient by showing if the condition is satisfied, we can always construct some $T$ and some SPE where \phenom{} occurs. If the condition is satisfied, then at least one of (1),(2),(3) must be satisfied. We consider each case here.

If (1) is satisfied, there exists some $\hat{\sigma}_1 \in \Delta A_1, \hat{\sigma}_2\in \Delta A_2$ where $(\hat{\sigma}_1, \hat{\sigma}_2) \notin \Nash^{m,m}(G)$. By \Cref{lemma:mixed-pure}, there exists $\hat{a}_1\in A_1$ and $\hat{a}_2\in A_2$ where $(\hat{a}_1, \hat{a}_2) \notin \Nash^{m,m}(G)$. Then we can use the same construction that is used in the proof of the pure strategy case here (see the proof of sufficiency in \Cref{thm:pure}, the part that handles the case where (1) is satisfied).

The rest of the proof focus on the case when (2) is satisfied. The same argument applies for the case where (3) is satisfied. We first notice that all games that satisfy (2) can be categorized into the following 3 disjoint cases: 
\begin{itemize}
    \item[(a).] There exists $\hat{a}_1\in A_1, \hat{\sigma}_2 \in \Delta A_2, a_1'\in A_1$ where $u_1(\hat{a}_1, \hat{\sigma}_2) < u_1(a_1', \hat{\sigma}_2)$ and $\hat{\sigma}_2$ is a best response to $\hat{a}_1$.
    \item[(b).] (a) is false, and there exists $a_1\in A_1$ and $a_2, a_2' \in A_2$ where $a_2\neq a_2'$ and both $a_2$ and $a_2'$ are best responses to $a_1$.
    \item[(c).] Both (a) and (b) are false.
\end{itemize}
We consider each case here.
\\
\\
\textbf{Case (a).} We can use the same construction that is used in the proof of sufficiency for the pure strategy case (\Cref{thm:pure}), the part that handles the case where (2) is satisfied.
\\
\\
\textbf{Case (b).} (a) is false implies that for all $a_1\in A_1$, for all $\sigma_2\in \Delta A_2$ that is a best response to $a_1$, $a_1$ is also a best response to $\sigma_2$. \Cref{tab:no_best_response} is an example of such games. We know that there exists some $a^*\in A_1$, $b_1^*, b_2^* \in A_2$ where $b_1^*\neq b_2^*$ and both $b_1^*$ and $b_2^*$ are best responses to $a^*$. Therefore, $a^*$ is a best response to both $b_1^*$ and $b_2^*$. WLOG, let $u_1(a^*, b_1^*) \geq u_1(a^*, b_2^*)$. Denote $\sigma_{\lambda}\in \Delta A_2$ as the mixed strategy for player 2 which assigns $\sigma_{\lambda}(b_1^*) = \lambda$ and $\sigma_{\lambda}(b_2^*) = 1 - \lambda$. Then for all $0\leq \lambda \leq 1$, $(a^*, \sigma_{\lambda})\in \Nash^{m,m}(G)$.

Since (2) is satisfied, there exists $\hat{\sigma}_1\in \Delta A_1, \hat{\sigma}_2 \in \Delta A_2, a_1'\in A_1$ where $u_1(\hat{\sigma}_1, \hat{\sigma}_2) < u_1(a_1', \hat{\sigma}_2)$ and $\hat{\sigma}_2$ is a best response to $\hat{\sigma}_1$. We construct $T = 1 + T_1 + T_2$ and a strategy profile $\vmu^*$ for $G(T)$ with the following structure:
\begin{itemize}
    \item In the first round, play $(\hat{\sigma}_1, \hat{\sigma}_2)$.
    \item For the later rounds, if player 1's first round play is $i\in S_{\hat{\sigma}_1}$, players play their corresponding strategies according to SPE $\vmu^i$ of $G(T-1)$; otherwise, players play their corresponding strategies according to SPE $\vmu^{\bot}$ of $G(T-1)$.
\end{itemize}

Since $|V_1^{m,m}| > 1$, let $\vsmin, \vsmax\in \Nash^{m,m}(G)$ such that $u_1(\vsmin) = \min (V_1^{m,m})$ and $u_1(\vsmax) = \max (V_1^{m,m})$, so $u_1(\vsmax) > u_1(\vsmin)$. We construct the SPEs $\vmu^{\bot}, \{\vmu^i\}_{i\in S_{\hat{\sigma}_1}}$ as follows:

\begin{itemize}
    \item $\vmu^{\bot}$ is players playing $\vsmin$ repeatedly for $T_1 + T_2$ rounds.
    \item For all $\vmu^i$, the last $T_2$ rounds consist of players repeated playing $\vsmax$. $T_2$ is chosen to be large enough such that $U_1(\vmu^i) - U_1(\vmu^{\bot}) > \max_{a,a'\in A_1} u_1(a,\hat{\sigma}_2) - u_1(a',\hat{\sigma}_2)$ for every $i\in S_{\hat{\sigma}_1}$. This makes sure that in $\vmu^*$, player 1 deviating to any $i\notin S_{\hat{\sigma}_1}$ in the first round will reduce their total payoff in $G(T)$.
    \item The first $T_1$ rounds strategies for each $\vmu^i$ adopt the following structure:
    \begin{itemize}
        \item In the first round, play $(a^*, \sigma_{\lambda_i})$, where $\lambda_i$ is a parameter to be set for each $i$.
        \item In the latter $T_1 - 1$ rounds, if player 2 plays $b_1^*$ in the first round, players repeatedly play $\vsmax$; otherwise, players repeatedly play $\vsmin$. 
    \end{itemize}
    Pick $i^m \in S_{\hat{\sigma}_1}$ such that $u_1(i^m, \hat{\sigma}_2) = \max_{i \in S_{\hat{\sigma}_1}} u_1(i, \hat{\sigma}_2)$. We set $\lambda_{i^m} = 0$. For each $i\in S_{\hat{\sigma}_1} \setminus \{i^m\}$, we set $\lambda_i$ such that $U_1(\vmu^i) + u_1(i,\hat{\sigma}_2) = U_1(\vmu^{i^m}) + u_1(i^m,\hat{\sigma}_2)$. This makes sure that in $\vmu^*$, player 1 choosing any $i\in S_{\hat{\sigma}_1}$ in the first round will obtain the same total payoff in $G(T)$. We argue that with large enough $T_1$, such choice of $\lambda_i$'s is always possible. Consider the difference between two sides of the equation as a function of $\lambda_i$, $f(\lambda_i) = U_1(\vmu^i) - U_1(\vmu^{i^m}) + u_1(i,\hat{\sigma}_2) - u_1(i^m,\hat{\sigma}_2)$. By choosing $T_1 \geq \frac{\max_{i\in S_{\hat{\sigma}_1}} u_1(i^m, \hat{\sigma}_2) - u_1(i, \hat{\sigma}_2) }{\max(V_1^{m,m}) - \min(V_1^{m,m})} + 1$, we have $f(0)\leq 0$ and $f(1) \geq 0$. Since $f(\lambda_i)$ is a continuous function, there must exist some $\lambda_i \in [0,1]$ such that $f(\lambda_i) = 0$ as desired.
\end{itemize}

One can easily verify that $\vmu^{\bot}$ and $\{\vmu^i\}_{i\in S_{\hat{\sigma}_1}}$ are SPEs of $G(T-1)$. In addition, their construction ensures that $\vmu^*$ is an NE of the root game $G(T)$. Therefore, $\vmu^*$ is an SPE of $G(T)$ where the first round strategy profile does not form an NE of the stage game $G$.
\\
\\
\textbf{Case (c).} Since both (a) and (b) are false, and $|V_1^{m,m}| > 1$, $|V_2^{m,m}| = 1$, applying \Cref{lemma:matrix}, we know that there exists $(i_1,j_1), (i_2,j_2)\in I$ such that $i_1\neq i_2$, $j_1\neq j_2$, where $I$ is the set of pure strategy profiles where player 2 plays a best response, as defined in \Cref{lemma:matrix}. Take such $(i_1,j_1), (i_2,j_2)$, \Cref{lemma:matrix} further implies that for all $j\neq j_1$, $u_2(i_1, j) < b$, and for all $j\neq j_2$, $u_2(i_2, j) < b$, where $b$ is the only element in $V_2^{m,m}$. Denote $\hat{\sigma}_{\lambda}\in \Delta A_1$ as the mixed strategy for player 1 which assigns $\hat{\sigma}_{\lambda}(i_1) = \lambda$ and $\hat{\sigma}_{\lambda}(i_2) = 1 - \lambda$. Denote $J(\lambda) = \condSet{a_2}{a_2\in A_2, a_2 \textrm{ is a best response to }\hat{\sigma}_{\lambda}}$ as the set of best response pure strategies for player 2 against $\hat{\sigma}_{\lambda}$. It is helpful to consider a geometric interpretation of $J(\lambda)$. For each $j\in A_2$, $u_2(\hat{\sigma}_{\lambda}, j) = \lambda\cdot u_2(i_1, j) + (1 - \lambda)\cdot u_2(i_2, j)$ is a linear function in $\lambda$. We can plot the function $f_j(\lambda)=u_2(\hat{\sigma}_{\lambda}, j)$ for each $j\in A_2$, which gives $|A_2|$ straight lines within domain $[0,1]$. $J(\lambda)$ is then the set of lines that attains the maximum value at $\lambda$. We know that $J(0) = \{j_2\}$ and $J(1) = \{j_1\}$, so there must exist some $\lambda_1\in (0,1)$ where $|J(\lambda_1)| > 1$, which corresponds to some intersection point. Take such $\lambda_1$ and $\hat{j}_1, \hat{j}_2 \in J(\lambda_1)$ where $\hat{j}_1\neq \hat{j}_2$. Denote $\hat{\sigma}_{\rho}\in \Delta A_2$ as the mixed strategy for player 2 which assigns $\hat{\sigma}_{\rho}(\hat{j}_2) = \rho$ and $\hat{\sigma}_{\rho}(\hat{j}_1) = 1 - \rho$. Then for all $\rho\in[0,1]$, $\hat{\sigma}_{\rho}$ is a best response to $\hat{\sigma}_{\lambda_1}$, which implies that $\hat{\sigma}_{\lambda_1}$ is not a best response to $\hat{\sigma}_{\rho}$. This is because if $\hat{\sigma}_{\lambda_1}$ is a best response to $\hat{\sigma}_{\rho}$, then $(\hat{\sigma}_{\lambda_1}, \hat{\sigma}_{\rho})\in \Nash^{m,m}(G)$, but $u_2(\hat{\sigma}_{\lambda_1}, \hat{\sigma}_{\rho}) < b$, which contradicts with $|V_2^{m,m}| = 1$.

Now we show that we can always construct some $T$ and some SPE $\vmu^*$ of $G(T)$ where the first round strategy profile is $(\hat{\sigma}_{\lambda_1}, \hat{\sigma}_{\rho})$ for some $\rho$. Since the above argument shows that $(\hat{\sigma}_{\lambda_1}, \hat{\sigma}_{\rho})\notin \Nash^{m,m}(G)$, \phenom{} occurs in $\vmu^*$. We treat the case where $u_1(i_1, \hat{j}_1) = u_1(i_2, \hat{j}_1)$ and $u_1(i_1, \hat{j}_1) \neq u_1(i_2, \hat{j}_1)$ separately.

If $u_1(i_1, \hat{j}_1) = u_1(i_2, \hat{j}_1)$, we construct $\vmu^*$ as: 
\begin{itemize}
    \item In the first round, play $(\hat{\sigma}_{\lambda}, \hat{j}_1)$. ($\hat{j}_1$ is $\hat{\sigma}_{\rho}$ with $\rho = 0$)
    \item For the later rounds, if player 1's first round play is $i_1$ or $i_2$, players repeatedly play $\vsmax$; otherwise, players repeatedly play $\vsmin$.
\end{itemize}
Again, $\vsmin, \vsmax\in \Nash^{m,m}(G)$ such that $u_1(\vsmin) = \min (V_1^{m,m})$ and $u_1(\vsmax) = \max (V_1^{m,m})$. $T$ is chosen to be large enough such that player 1 deviating to any $i\notin \{i_1, i_2\}$ in the first round will reduce their total payoff in $G(T)$. It can be easily checked that $\vmu^*$ is an SPE of $G(T)$.

If $u_1(i_1, \hat{j}_1) \neq u_1(i_2, \hat{j}_1)$, WLOG, assume $u_1(i_1, \hat{j}_1) > u_1(i_2, \hat{j}_1)$. We construct $T = 1 + T_1 + T_2$ and $\vmu^*$ with the following structure:
\begin{itemize}
    \item In the first round, play $(\hat{\sigma}_{\lambda}, \hat{\sigma}_{\rho})$.
    \item For the later rounds, if the first round play is $(i_1, \hat{j}_2)$, players play their corresponding strategy according to SPE $\vmu^1$ of $G(T-1)$; if the first round play is $(i_2, \hat{j}_2)$, $(i_1, \hat{j}_1)$ or $(i_2, \hat{j}_1)$, players play their corresponding strategy according to SPE $\vmu^2$ of $G(T-1)$; otherwise, players play their corresponding strategy according to SPE $\vmu^{\bot}$ of $G(T-1)$.
    \begin{itemize}
        \item $\vmu^{\bot}$ is players play $\vsmin$ repeatedly for $T_1 + T_2$ rounds.
        \item $\vmu^2$ is players play $\vsmax$ repeatedly for $T_1 + T_2$ rounds.
        \item $\vmu^1$ is players play $\vsmin$ repeatedly for $T_1$ rounds, and then $\vsmax$ for $T_2$ rounds.
    \end{itemize}
\end{itemize}
$T_2$ is chosen to be large enough such that player 1 deviating to any $i\notin \{i_1, i_2\}$ in the first round will reduce their total payoff in $G(T)$. In order for $\vmu^*$ to be an NE of the root game, player 1 choosing $i_1$ and $i_2$ in the first round need to yield the same total payoff in $G(T)$. The total payoff of player 1 achieved by choosing $i_1$ in the first round under $\vmu^*$ is $\rho\cdot \Big(u_1(i_1, \hat{j}_2) + U_1(\vmu^1) \Big) + (1-\rho)\cdot \Big( u_1(i_1, \hat{j}_1) + U_1(\vmu^2) \Big)$, and the total payoff by choosing $i_2$ is $\rho\cdot \Big(u_1(i_2, \hat{j}_2) + U_1(\vmu^2) \Big) + (1-\rho)\cdot \Big( u_1(i_2, \hat{j}_1) + U_1(\vmu^2) \Big)$. Consider the difference between these two quantities as a function of $\rho$, $g(\rho) = \rho \cdot \Big(U_1(\vmu^1) - U_1(\vmu^2)\Big) + \rho \cdot \Big(u_1(i_1, \hat{j}_2) - u_1(i_2, \hat{j}_2)\Big) + (1-\rho) \cdot \Big( u_1(i_1, \hat{j}_1) - u_1(i_2, \hat{j}_1) \Big)$. We have $g(0) > 0$. By choosing a large enough $T_1$, $g(1) < 0$. Since $g(\rho)$ is a continuous function, there must exist some $\rho \in (0,1)$ such that $g(\rho) = 0$. With this value of $\rho$, $\vmu^*$ is an SPE of $G(T)$ as desired.

\end{proof}

Again, from the constructions of SPEs where \phenom{} occurs used in the above proof, we can obtain the following corollary regarding the value of $T$ above which \phenom{} can occur if the condition in \Cref{thm:general} is satisfied:

\begin{corollary}
For general 2-player games (mixed strategies allowed), given a stage game $G$:
\begin{enumerate}
    \item If $|V_1^{m,m}| > 1$, $|V_2^{m,m}| > 1$, and there exists some $\hat{\sigma}_1 \in \Delta A_1, \hat{\sigma}_2\in \Delta A_2$ where $(\hat{\sigma}_1, \hat{\sigma}_2) \notin \Nash^{m,m}(G)$, then by \Cref{lemma:mixed-pure}, there exists $\hat{a}_1\in A_1$ and $\hat{a}_2\in A_2$ where $(\hat{a}_1, \hat{a}_2) \notin \Nash^{m,m}(G)$, for all $T \geq 2\cdot \max\Big( \frac{\delta_1}{\max (V_1^{m,m})-\min (V_1^{m,m})}, \frac{\delta_2}{\max (V_2^{m,m})- \min (V_2^{m,m})} \Big) +1$ where $\delta_1 = \max_{a_1\in A_1} u_1(a_1, \hat{a}_2) - u_1(\hat{a}_1, \hat{a}_2)$ and $\delta_2 = \max_{a_2\in A_2} u_2(\hat{a}_1, a_2) - u_2(\hat{a}_1, \hat{a}_2)$, there exists some SPE of $G(T)$ where \phenom{} occurs.
    \item If $|V_1^{m,m}| > 1$, $|V_2^{m,m}| = 1$, and there exists $\hat{\sigma}_1\in \Delta A_1, \hat{\sigma}_2 \in \Delta A_2, a_1'\in A_1$ where $u_1(\hat{\sigma}_1, \hat{\sigma}_2) < u_1(a_1', \hat{\sigma}_2)$ and $\hat{\sigma}_2$ is a best response to $\hat{\sigma}_1$, then
    \begin{enumerate}
        \item If there exists $\hat{a}_1\in A_1, \hat{\sigma}_2 \in \Delta A_2, a_1'\in A_1$ where $u_1(\hat{a}_1, \hat{\sigma}_2) < u_1(a_1', \hat{\sigma}_2)$ and $\hat{\sigma}_2$ is a best response to $\hat{a}_1$, then for all $T\geq \frac{\max_{a_1\in A_1} u_1(a_1, \hat{\sigma}_2) - u_1(\hat{a}_1, \hat{\sigma}_2)}{\max (V_1^{m,m})-\min (V_1^{m,m})} + 1$, there exists some SPE of $G(T)$ where \phenom{} occurs.
        \item If there does not exist $\hat{a}_1\in A_1, \hat{\sigma}_2 \in \Delta A_2, a_1'\in A_1$ where $u_1(\hat{a}_1, \hat{\sigma}_2) < u_1(a_1', \hat{\sigma}_2)$ and $\hat{\sigma}_2$ is a best response to $\hat{a}_1$, and there exists $a_1\in A_1$ and $a_2, a_2' \in A_2$ where $a_2\neq a_2'$ and both $a_2$ and $a_2'$ are best responses to $a_1$, then for all $T \geq 3 + \frac{\max_{a, a'\in S_{\hat{\sigma}_1}} u_1(a, \hat{\sigma}_2) - u_1(a', \hat{\sigma}_2) }{\max(V_1^{m,m}) - \min(V_1^{m,m})} + \frac{\max_{a,a'\in A_1} u_1(a,\hat{\sigma}_2) - u_1(a',\hat{\sigma}_2)}{\max(V_1^{m,m}) - \min(V_1^{m,m})}$, there exists some SPE of $G(T)$ where \phenom{} occurs.
        \item If there does not exist $\hat{a}_1\in A_1, \hat{\sigma}_2 \in \Delta A_2, a_1'\in A_1$ where $u_1(\hat{a}_1, \hat{\sigma}_2) < u_1(a_1', \hat{\sigma}_2)$ and $\hat{\sigma}_2$ is a best response to $\hat{a}_1$, and there does not exist $a_1\in A_1$ and $a_2, a_2' \in A_2$ where $a_2\neq a_2'$ and both $a_2$ and $a_2'$ are best responses to $a_1$. By \Cref{lemma:matrix}, there exists $(i_1,j_1), (i_2,j_2)\in I$ such that $i_1\neq i_2$, $j_1\neq j_2$, where $I$ is the set of pure strategy profiles where player 2 plays a best response. Denote $\hat{\sigma}_{\lambda}\in \Delta A_1$ as the mixed strategy for player 1 which assigns $\hat{\sigma}_{\lambda}(i_1) = \lambda$ and $\hat{\sigma}_{\lambda}(i_2) = 1 - \lambda$. By the proof of \Cref{thm:general}, there exists $\lambda_1\in (0,1)$, $\hat{j}_1, \hat{j}_2 \in A_2$ where $\hat{j}_1, \hat{j}_2$ are both best responses to $\hat{\sigma}_{\lambda_1}$ and $\hat{j}_1\neq \hat{j}_2$. If $u_1(i_1, \hat{j}_1) = u_1(i_2, \hat{j}_1)$, then for all $T\geq \frac{\max_{a_1\in A_1} u_1(a_1, \hat{j}_1) - u_1(i_1, \hat{j}_1)}{\max (V_1^{m,m})-\min (V_1^{m,m})} + 1$, there exists some SPE of $G(T)$ where \phenom{} occurs. If $u_1(i_1, \hat{j}_1) \neq u_1(i_2, \hat{j}_1)$, then for all $T \geq 3 + \frac{\max_{a_1\in A_1, i\in \{i_1, i_2\}, j\in \{ \hat{j}_1, \hat{j}_2 \}} u_1(a_1, j) - u_1(i, j) }{\max (V_1^{m,m})-\min (V_1^{m,m})} + \frac{|u_1(i_1, \hat{j}_2) - u_1(i_2, \hat{j}_2)|}{\max (V_1^{m,m})-\min (V_1^{m,m})}$, there exists some SPE of $G(T)$ where \phenom{} occurs.
    \end{enumerate}
    \item same as (2) but exchange player 1 and 2.
\end{enumerate}
\end{corollary}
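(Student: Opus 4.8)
The plan is to read this corollary off the sufficiency direction of \Cref{thm:general} (and, for the cases that reduce to pure play, of \Cref{thm:pure}): each construction used there to exhibit an SPE with \phenom{} is parameterized by a number of rounds, so I would, case by case, identify the smallest round count for which the construction is provably a valid SPE, and then handle the universal quantifier ``for all $T$ at least a threshold'' separately. The uniform device for that quantifier is \emph{padding}: once an SPE with \phenom{} is built on $T_0$ rounds, I would, for any $T > T_0$, append $T - T_0$ rounds in which both players play a fixed stage-game NE (say $\vsmin$) unconditionally. Such a tail is an SPE in isolation, contributes a history-independent constant to every player's continuation payoff, and therefore leaves every best-response comparison in the first $T_0$ rounds unchanged while preserving \phenom{} in round $1$; hence validity for $T_0$ upgrades to validity for all $T \geq T_0$.

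For Case 1 and Case 2(a) I would reuse the pure constructions verbatim. In Case 1, \Cref{lemma:mixed-pure} supplies a pure off-Nash profile $(\hat a_1, \hat a_2)$, and the construction in \Cref{thm:pure} is an SPE on $T = 2k+1$ rounds whenever the $k$ rounds of ``reward'' outweigh each player's one-shot deviation gain, i.e. $k \ge \max\!\big(\delta_1/(\max V_1^{m,m}-\min V_1^{m,m}),\, \delta_2/(\max V_2^{m,m}-\min V_2^{m,m})\big)$; taking $\vsmin,\vsmax$ as the punishing and rewarding equilibria makes the per-round payoff swing exactly $\max V_i^{m,m}-\min V_i^{m,m}$, which is how the denominators arise, and padding extends this to all $T$ above $2\max(\cdots)+1$. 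Case 2(a) is literally case (2) of \Cref{thm:pure}, giving the single deterrence inequality $k\,(\max V_1^{m,m}-\min V_1^{m,m}) \ge \delta$ and hence the bound $T \ge \delta/(\max V_1^{m,m}-\min V_1^{m,m}) + 1$.

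The substantive cases are 2(b) and 2(c), whose constructions in \Cref{thm:general} have the shape $T = 1 + T_1 + T_2$. Here I would isolate the two roles of $T_1$ and $T_2$. The parameter $T_2$ governs deviation deterrence: the gap $U_1(\vmu^i)-U_1(\vmu^{\bot})$ grows by $\max V_1^{m,m}-\min V_1^{m,m}$ for each appended $\vsmax$-versus-$\vsmin$ round, so the required strict inequality first holds once $T_2$ exceeds the corresponding ratio, contributing the second fraction plus one. The parameter $T_1$ governs feasibility of the mixing parameter ($\lambda_i$ in (b), $\rho$ in (c)) through an intermediate-value argument: one must make the relevant continuous function $f$ (resp. $g$) change sign on $[0,1]$, which happens once $T_1$ meets the first fraction plus one. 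Crucially these decouple, since the $\vsmax$-tail payoff cancels in $f$ and $g$ and so $T_2$ does not enter the sign condition; summing the leading round and the two ``$+1$''s yields the additive constant $3$ in the stated bounds. Case 2(c) additionally requires re-extracting $i_1,i_2,\hat j_1,\hat j_2$ via \Cref{lemma:matrix} and splitting on whether $u_1(i_1,\hat j_1)=u_1(i_2,\hat j_1)$, exactly as in \Cref{thm:general}; Case 3 follows by the player-swap symmetry already used there.

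The part I expect to be most delicate is not any single inequality but the uniformity needed for the ``for all $T$'' claim together with integrality. I would have to verify that for every admissible integer $T$ there is a single split $(T_1,T_2)$ with $T = 1 + T_1 + T_2$ meeting both minimal requirements simultaneously --- setting $T_1$ to its minimum and absorbing the slack into $T_2$, then checking that enlarging $T_2$ only widens the deterrence gap and never disturbs the sign condition (and, symmetrically, that enlarging $T_1$ keeps $f(0)\le 0$, $g(1)<0$). The remaining care is bookkeeping around ceilings and the strict-versus-nonstrict inequalities, so that the stated real-valued thresholds are genuinely sufficient round counts once rounded up to integers; the padding device above is what makes the threshold, rather than a single value of $T$, the correct object.
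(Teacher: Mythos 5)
Your proposal is correct and follows essentially the same route as the paper: the corollary is read off the sufficiency constructions of \Cref{thm:pure} and \Cref{thm:general}, with Case 1 and Case 2(a) reusing the pure-strategy constructions, Cases 2(b) and 2(c) decomposing $T = 1 + T_1 + T_2$ into the intermediate-value (mixing-parameter) requirement on $T_1$ and the deterrence requirement on $T_2$, and Case 3 by symmetry. Your explicit treatment of the ``for all $T$'' quantifier via padding with unconditional stage-NE rounds and the integer split of $(T_1, T_2)$ is bookkeeping the paper leaves implicit, and it is handled correctly.
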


\section{The Pure Strategy Against Mixed Strategy Case}
\label{sec:pure_vs_mixed}

To complete the picture, we also analyze the case where one player can only use pure strategies while the other player can use mixed strategies. Without loss of generality, we consider the case where player 1 can use mixed strategies and player 2 can only use pure strategies in both the stage game and the repeated games. 
We use $\Nash^{m,p}(G)$, $\spe^{m,p}(G,T)$, and $V_i^{m,p}$ to denote the corresponding concepts of $\Nash(G)$, $\spe(G,T)$, and $V_i$ when player 1 can use mixed strategies and player 2 can only use pure strategies.
We use $\glsmp$ and $\glomp$ to denote the partition of the set of all stage games $\mathcal{G}$ when player 1 can use mixed strategies and player 2 can only use pure strategies. For stage games $G$ where $\Nash^{m,p}(G) = \emptyset$, there is no SPE in the repeated game $G(T)$ for any $T$, so we categorize such stage games $G$ to $\glomp$ since \phenom{} can never occur. The following theorem presents a complete mathematical characterization of $\glsmp$. As we will see, the sufficient and necessary condition for \phenom{} to occur in this case is different from both the pure strategy case and the general case.

\begin{theorem}[2-player, pure strategy against mixed strategy]
\label{thm:pure-vs-mixed}
For 2-player games where player 1 can use mixed strategies and player 2 can only use pure strategies, a sufficient and necessary condition on the stage game $G$ for there exists some $T$ and some SPE of $G(T)$ where \phenom{} occurs is:
\begin{enumerate}
    \item $|V_1^{m,p}| > 1$, $|V_2^{m,p}| > 1$, and there exists some $\hat{\sigma}_1\in \Delta A_1, \hat{a}_2\in A_2$ where $(\hat{\sigma}_1, \hat{a}_2) \notin \Nash^{m,p}(G)$, OR
    \item $|V_1^{m,p}| > 1$, $|V_2^{m,p}| = 1$, and there exists $\hat{\sigma}_1 \in \Delta A_1, a_1' \in A_1, \hat{a}_2 \in A_2$ where
    \begin{enumerate}
        \item $u_1(\hat{\sigma}_1, \hat{a}_2) < u_1(a_1', \hat{a}_2)$, and
        \item $\hat{a}_2$ is a best response to $\hat{\sigma}_1$, and
        \item if $\hat{\sigma}_1$ has more than one support (not a pure strategy), denote the set of possible differences in $u_1$ between pairs of NEs in the stage game as $D = \condSet{u_1(\vsigma) - u_1(\vsigma')}{\vsigma,\vsigma' \in \Nash^{m,p}(G)}$, there exists an action $a$ from the support of $\hat{\sigma}_1$, i.e., $a\in S_{\hat{\sigma}_1}$, such that, for every $a'\in S_{\hat{\sigma}_1}\setminus a$, there exists some integer $n_{a'}\geq 0$ and $d_k^{a'}\in D$, $k=1,\dots,n_{a'}$ such that $u_1(a, \hat{a}_2) - u_1(a', \hat{a}_2) = \sum_{k=1}^{n_{a'}} d_k^{a'}$,
    \end{enumerate} OR
    \item $|V_1^{m,p}| = 1$, $|V_2^{m,p}| > 1$, and there exists $\hat{\sigma}_1 \in \Delta A_1, \hat{a}_2, a_2' \in A_2$ where $u_2(\hat{\sigma}_1, \hat{a}_2) < u_2(\hat{\sigma}_1, a_2')$ and $\hat{\sigma}_1$ is a best response to $\hat{a}_2$.
\end{enumerate}
\end{theorem}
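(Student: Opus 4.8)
The plan is to prove both directions with the two-part template used for \Cref{thm:pure,thm:general}: necessity by structural (backward-induction-style) arguments showing that violating the condition rules out off-Nash play, and sufficiency by explicit SPE constructions. The only genuinely new ingredient is the arithmetic condition (2c), which reflects the asymmetry that player~1 may randomize while player~2 is restricted to pure actions; I would organize the proof so that all of its weight falls on case~(2).

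For necessity I would assume the condition fails and split on $(|V_1^{m,p}|,|V_2^{m,p}|)$. The cases $|V_i^{m,p}|=0$ (no NE, hence no SPE), $|V_1^{m,p}|=|V_2^{m,p}|=1$ (backward induction forces every round on-Nash), and $|V_1^{m,p}|,|V_2^{m,p}|>1$ with (1) false (a version of \Cref{lemma:mixed-pure} gives no off-Nash pure profile, so linearity of $u_1$ and the pure-vs-pure best-response structure make every mixed-vs-pure profile an NE) are routine. The substantive regime is $|V_1^{m,p}|>1,\,|V_2^{m,p}|=1$ with (2) false. Here I would take any off-Nash SPE $\vmu$ and choose an off-Nash node $h^*$ all of whose strict descendants are on-Nash (such a node exists since the game tree is finite). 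Writing $\vmu(h^*)=(\sigma_1,a_2)$ and $T'=T-|h^*|\geq 2$, every continuation of $G(T')$ is a sequence of stage-game NEs. Since $|V_2^{m,p}|=1$, player~2's continuation payoff equals the constant $(T'-1)b$ independent of the first-round outcome, so the SPE best-response requirement forces $a_2$ to be a best response to $\sigma_1$ (this is 2b), and off-Nashness then forces some $a_1'$ with $u_1(a_1',a_2)>u_1(\sigma_1,a_2)$ (this is 2a). Finally, player~1's indifference across the support gives $u_1(i,a_2)-u_1(i',a_2)=C_{i'}-C_i$ for all support actions $i,i'$, where each continuation payoff $C_i$ is a sum of $T'-1$ values from $V_1^{m,p}$; hence every such difference is a sum of elements of $D$, which is exactly (2c) with any fixed support action as the reference. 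Thus an off-Nash SPE forces (2) to hold, and the contrapositive yields necessity.

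For sufficiency I would build, for each of (1)--(3), a horizon $T$ and an SPE whose first round is off-Nash. Case~(1) reduces via the analogue of \Cref{lemma:mixed-pure} to a pure off-Nash profile, so the two-sided threat construction for case~(1) in the proof of \Cref{thm:pure} applies. Case~(3) threatens the pure player~2 while $|V_1^{m,p}|=1$ keeps player~1 indifferent across continuations, so the single-sided punishment construction for case~(3) of \Cref{thm:pure} carries over with $\hat{\sigma}_1$ in place of a pure action. The crux is case~(2): the round-1 profile is $(\hat{\sigma}_1,\hat{a}_2)$ with player~2 best-responding and player~1 not, so player~2 is content (its continuation is fixed by $|V_2^{m,p}|=1$), and the task is to make player~1 \emph{exactly} indifferent across $S_{\hat{\sigma}_1}$ while deterring the profitable one-shot deviation. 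Branching the continuation only on player~1's realized action, I would use (2c) to realize, for each $a'\in S_{\hat{\sigma}_1}$, the required continuation gap $u_1(a,\hat{a}_2)-u_1(a',\hat{a}_2)=\sum_{k=1}^{n_{a'}}d_k^{a'}$ by dedicating a length-$n_{a'}$ block of stage-game NEs (each $d_k^{a'}$ is the per-round player-1 payoff difference between the two NEs witnessing it), keeping that block on the shared baseline in the continuations of all other support actions, and padding to a common horizon $m=T-1$ with repeated NEs. Appending enough additional NE rounds makes any deviation to $a_1'$ or to any off-support action (which triggers the $\vsmin$ punishment) unprofitable. Every later round is a genuine NE, so each branch is an SPE of $G(m)$ and the whole profile is an SPE with an off-Nash first round.

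I expect the main obstacle to be the tight coupling of (2c) across the two directions: necessity must show that \emph{every} sustainable non-pure off-Nash $\hat{\sigma}_1$ reduces continuation-payoff gaps to finite sums of elements of $D$, while sufficiency must realize all these gaps \emph{simultaneously} over one common horizon using a single baseline continuation for the reference action, and still leave room for the deviation-deterring threats. The delicate part is the bookkeeping — matching block lengths across all support actions, checking that each per-round choice is a legitimate stage-game NE, and confirming that player~2's pure-strategy restriction never obstructs either the indifference blocks or the punishment phase; the algebraic content is fully captured by the subgroup of $\real$ generated by $D$.
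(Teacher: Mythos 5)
Your plan reproduces the paper's proof architecture almost exactly: the same five-way split on $(|V_1^{m,p}|,|V_2^{m,p}|)$ for necessity, and for sufficiency the same constructions — cases (1) and (3) reduce to the constructions of \Cref{thm:pure}, and for case (2) with $|S_{\hat{\sigma}_1}|>1$ you build the same segmented continuation in which each $a'\in S_{\hat{\sigma}_1}\setminus a$ gets a dedicated block of $n_{a'}$ rounds playing one of two NE sequences whose per-round player-1 differences are the $d_k^{a'}$, followed by a long $\vsmax$/$\vsmin$ padding segment that enforces exact indifference across the support and deters off-support deviations. The sufficiency half of your proposal is sound and matches the paper's.

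The gap is in the necessity direction, in the regime $|V_1^{m,p}|>1$, $|V_2^{m,p}|=1$: you assert, as though it were bookkeeping, that ``each continuation payoff $C_i$ is a sum of $T'-1$ values from $V_1^{m,p}$.'' This is precisely the paper's \Cref{lemma:spe_payoff}, and it is not automatic. The stage-game NE played in a later round of the continuation may depend on the realized history, and that history is random because player 1 mixes; so a priori $C_i$ is only a \emph{convex combination} of sums of elements of $V_1^{m,p}$, and convex combinations would not put the gaps $C_{i'}-C_i$ into the set of finite sums of elements of $D$, which is what (2c) demands. The claim is true only because player 2 is restricted to pure strategies: then the continuation branches on player 1's realized action alone, and player 1's indifference — equal stage payoffs across the support (round-level NE) plus equal stage-plus-continuation payoffs (root-level NE) — forces all on-path continuation payoffs to coincide, so the expectation collapses to the payoff along a single on-path history; an induction on the horizon then closes the argument. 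That this needs player 2's purity is not a technicality: in the mixed-mixed case the analogous statement fails (the continuation can branch on player 2's realized action, producing genuine convex combinations of NE values), which is exactly why \Cref{thm:general} carries no condition of the form (2c). So this step must be stated and proved as a lemma, not asserted. A second, smaller omission: your necessity case list skips $|V_1^{m,p}|=1$, $|V_2^{m,p}|>1$, the regime of condition (3); there one must observe that $|V_1^{m,p}|=1$ forces player 1 to best-respond in the last off-Nash round, and that the failure of (3) then makes any profile in which player 1 best-responds a stage-game NE — routine, but it is a separate case and has to be said.
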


We first establish a useful lemma.

\begin{lemma}
\label{lemma:spe_payoff}
For 2-player stage game $G$ and $T$-round repeated game $G(T)$ where player 1 can use mixed strategies and player 2 can only use pure strategies, for any SPE $\vmu$ of $G(T)$ where the strategy profile at each round forms a stage-game NE, player 1's total payoff in the repeated game $U_1(\vmu) = \sum_{k=1}^{T} c_k$ for some $c_k\in V_1^{m,p}$, $k=1,\dots,T$, i.e., player 1's total payoff in the repeated game equals the sum of some sequence of stage-game NE payoffs.
\end{lemma}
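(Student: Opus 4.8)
The plan is to induct on the number of rounds $T$. Throughout I would write $(\sigma_1, a_2) = \vmu(h(0))$ for the first-round behavior profile; since player 2 is restricted to pure strategies, $a_2 \in A_2$ is a single action, and by hypothesis $(\sigma_1, a_2) \in \Nash^{m,p}(G)$, so $c_1 \defeq u_1(\sigma_1, a_2) \in V_1^{m,p}$. For the base case $T = 1$ the claim is immediate, since the single round is a stage-game NE and $U_1(\vmu) = c_1$. For the inductive step I would assume the statement holds for all such SPEs of $G(T-1)$ and establish it for $G(T)$.

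The crux will be to show that, even though player 1 may genuinely randomize in the first round, its total payoff can be read off along a single deterministic branch. For each $a \in S_{\sigma_1}$ I would set $W(a) \defeq u_1(a, a_2) + U_1(\vmu_{|(a,a_2)})$, player 1's total payoff from committing to $a$ in round one and then following $\vmu$. Because player 2's first-round action is the fixed $a_2$ regardless of player 1's choice, the strategy ``play $a$ with probability one in round one, then follow $\mu_1$'' is a legitimate deviation that yields exactly $W(a)$; as $\vmu$ is an SPE, $\mu_1$ is a best response in $G(T)$, so $W(a) \le U_1(\vmu)$ for every $a \in S_{\sigma_1}$. On the other hand, $U_1(\vmu) = \sum_{a \in S_{\sigma_1}} \sigma_1(a)\, W(a)$ is a convex combination of exactly these quantities, so all the inequalities must be tight, giving $W(a) = U_1(\vmu)$ for every $a \in S_{\sigma_1}$.

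To close the induction I would fix any single $a \in S_{\sigma_1}$. Since $\sigma_1$ is a best response to $a_2$ in the stage game, every support action earns the same stage payoff, so $u_1(a, a_2) = u_1(\sigma_1, a_2) = c_1 \in V_1^{m,p}$. The restriction $\vmu_{|(a,a_2)}$ is an SPE of $G(T-1)$ in which every round's profile is again a stage-game NE (these rounds form a sub-collection of the rounds of $\vmu$), so the induction hypothesis gives $U_1(\vmu_{|(a,a_2)}) = \sum_{k=2}^{T} c_k$ with each $c_k \in V_1^{m,p}$. Combining this with $U_1(\vmu) = W(a) = c_1 + U_1(\vmu_{|(a,a_2)})$ yields $U_1(\vmu) = \sum_{k=1}^{T} c_k$, completing the induction.

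The hard part is the indifference argument in the second step. A naive induction would express the expected continuation payoff as a convex combination, over player 1's realized first-round action, of continuation payoffs that are each sums of stage-game NE values; such a convex combination need not itself be a sum of NE values, so the statement does not follow from round-by-round NE play alone. It is the global best-response property of subgame-perfection that lets me collapse the expectation onto one deterministic path of equal total payoff, and this is the essential ingredient.
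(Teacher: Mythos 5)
Your proposal is correct and follows essentially the same route as the paper's proof: induction on $T$, decomposing along the first round, using the fact that all actions in $S_{\sigma_1}$ yield the same stage payoff and (by the equilibrium property of the root game) the same total payoff $U_1(\vmu)$, and then applying the induction hypothesis to the continuation SPE $\vmu_{|(a,a_2)}$ along a single branch. The only difference is cosmetic: you derive the indifference $W(a)=U_1(\vmu)$ explicitly from the best-response inequality plus the convex-combination identity, whereas the paper asserts it directly as the standard mixed-strategy indifference condition.
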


\begin{proof}
We prove by induction on $T$. The proposition trivially holds for $T=1$. Given that the proposition holds for $T=K-1$, consider $T=K$. For any SPE $\vmu$ of $G(K)$ where the strategy profile at each round forms a stage-game NE, denote the first round strategy profile as $(\hat{\sigma}_1, \hat{a}_2)$ where $\hat{\sigma}_1\in \Delta A_1$, $\hat{a}_2\in A_2$. Since $(\hat{\sigma}_1, \hat{a}_2)\in \Nash^{m,p}(G)$, for all $a, a'\in S_{\hat{\sigma}_1}$, $u_1(a, \hat{a}_2) = u_1(a', \hat{a}_2) = u_1(\hat{\sigma}_1, \hat{a}_2)$. Denote $\vmu_{|(a_1, a_2)}$ as the strategy profile starting from round 2 given that players play $(a_1, a_2)$ in the first round, for $\vmu$ to be a Nash equilibrium of the root game, we have for all $a, a'\in S_{\hat{\sigma}_1}$, $u_1(a, \hat{a}_2) + U_1(\vmu_{|(a, \hat{a}_2)}) = u_1(a', \hat{a}_2) + U_1(\vmu_{|(a', \hat{a}_2)}) = U_1(\vmu)$. By the induction hypothesis, for any $a\in S_{\hat{\sigma}_1}$, $U_1(\vmu_{|(a, \hat{a}_2)}) = \sum_{k=1}^{K-1} c_k$ for some $c_k\in V_1^{m,p}$, $k=1,\dots,K-1$. Therefore, $U_1(\vmu) = u_1(a, \hat{a}_2) + U_1(\vmu_{|(a, \hat{a}_2)}) = u_1(\hat{\sigma}_1, \hat{a}_2) + U_1(\vmu_{|(a, \hat{a}_2)}) = \sum_{k=1}^{K} c_k$ for some $c_k\in V_1^{m,p}$, $k=1,\dots,K$. This completes the induction step. 
\end{proof}

\begin{proof}[Proof of \Cref{thm:pure-vs-mixed}]

First we show the condition is sufficient, by showing if the condition is satisfied, we can construct some $T$ and some SPE where \phenom{} occurs. If the condition is satisfied, at least one of (1),(2),(3) must be satisfied. We consider each case here.

If (1) is satisfied, we can use the same construction that is used for the proofs of the pure strategy case and the general case (see the proofs of \Cref{thm:pure,thm:general}, the parts that handle the case where (1) is satisfied). If (3) is satisfied, we can use the same construction that is used for the proof of the pure strategy case (see the proofs of \Cref{thm:pure}, the parts that handle the case where (2) is satisfied).

If (2) is satisfied, then there exists $\hat{\sigma}_1 \in \Delta A_1, a_1' \in A_1, \hat{a}_2 \in A_2$ where (a), (b), and (c) are satisfied. If $|S_{\hat{\sigma}_1}|=1$, i.e., $\hat{\sigma}_1$ is a pure strategy, we can use the same construction that is used for the proof of the pure strategy case (see the proofs of \Cref{thm:pure}, the parts that handle the case where (2) is satisfied). If $|S_{\hat{\sigma}_1}|>1$, i.e., $\hat{\sigma}_1$ is not a pure strategy, we know from (c) that there exists an action $a\in S_{\hat{\sigma}_1}$ such that, for every $a'\in S_{\hat{\sigma}_1}\setminus a$, there exists some integer $n_{a'}\geq 0$ and $d_k^{a'}\in D$, $k=1,\dots,n_{a'}$ such that $u_1(a, \hat{a}_2) - u_1(a', \hat{a}_2) = \sum_{k=1}^{n_{a'}} d_k^{a'}$. Let $a$, $n_{a'}$ for every $a'\in S_{\hat{\sigma}_1}\setminus a$, $d_k^{a'}\in D$ for every $a'\in S_{\hat{\sigma}_1}\setminus a$ and $k=1,\dots,n_{a'}$ be such a set of assignments. We consider a game with $T= 1 + \sum_{a'\in S_{\hat{\sigma}_1}\setminus a} n_{a'} + n_{\bot}$ where $n_{\bot}$ is a large enough integer.
We construct an SPE $\vmu^*$ consisting of segments. Denote $\vmu^t$ as all the $t$-th round behavior strategy profiles in $\vmu^*$ and $\vmu^{t_1:t_2}$ as all the behavior strategy profiles between the $t_1$-th round and the $t_2$-th round in $\vmu^*$. $\vmu^*$ is divided into segments: $\vmu^1, \vmu^{2:T_1}, \vmu^{T_1+1:T_2},\dots, \vmu^{T_{|S_{\hat{\sigma}_1}|-1}+1:T_{|S_{\hat{\sigma}_1}|}}$. Each segment ending at $T_i$ for $i=1,\dots,|S_{\hat{\sigma}_1}|-1$ corresponds to one of $a'\in S_{\hat{\sigma}_1}\setminus a$. We denote the segment corresponding to $a'$ as $\vmu^{a'}$ for every $a'\in S_{\hat{\sigma}_1}\setminus a$; $\vmu^{a'}$ has $n_{a'}$ rounds. We denote the final segment as $\vmu^{\bot}$, which has $n_{\bot}$ rounds. Each $\vmu^{a'}$ for every $a'\in S_{\hat{\sigma}_1}\setminus a$ and $\vmu^{\bot}$ only depend on the play in the first round, not depending on any later rounds. 
we construct $\vmu^*$ as follows:
\begin{itemize}
    \item In the first round, play $\vmu^1 = (\hat{\sigma}_1, \hat{a}_2)$.
    \item For segment $\vmu^{a'}$ for each $a'\in S_{\hat{\sigma}_1}\setminus a$, if player 1's first round play is $a'$, play the sequence of stage-game NEs $\Sigma_1 = (\vsigma_1^1,\dots, \vsigma_1^{n_{a'}})$; otherwise, play the sequence of stage-game NEs $\Sigma_0 = (\vsigma_0^1,\dots, \vsigma_0^{n_{a'}})$. The sequences $\Sigma_1$ and $\Sigma_0$ are chosen according to $u_1(\vsigma_1^k) - u_1(\vsigma_0^k) = d_k^{a'}$ for all $k=1,\dots,n_{a'}$.
    \item Let $\vsmin, \vsmax\in \Nash^{m,p}(G)$ such that $u_1(\vsmin) = \min (V_1^{m,p})$ and $u_1(\vsmax) = \max (V_1^{m,p})$, so $u_1(\vsmax) > u_1(\vsmin)$. For segment $\vmu^{\bot}$, if player 1's first round play is some $a\in S_{\hat{\sigma}_1}$, play $(\vsmax, \vsmax,\dots)$; if player 1's first round play is some $a\notin S_{\hat{\sigma}_1}$, play $(\vsmin, \vsmin,\dots)$.
\end{itemize}
This construction ensures that: 1) player 1 choosing any action $a\in S_{\hat{\sigma}_1}$ in the first round results in the same total payoff in $G(T)$, and 2) player 1 choosing any action $a\notin S_{\hat{\sigma}_1}$ in the first round results in a lower total payoff in $G(T)$ as long as $n_{\bot}$ is large enough. This ensures that such $\vmu^*$ is an SPE of $G(T)$ and the first round play does not form a stage-game NE.
\\
\\
To prove this condition is necessary, we prove that if the condition is not satisfied, then for any $T$ and any SPE $\vmu$ of $G(T)$, \phenom{} does not occur, i.e., the strategy profile at each round must form an NE of the stage game. The condition is not satisfied means all of (1),(2),(3) are false. This can be divided into the following disjoint cases.

\textbf{1. $|V_1^{m,p}| = 0$, $|V_2^{m,p}|=0$.} Here, $\Nash^{m,p}(G) = \emptyset$, so there is no SPE in the repeated game $G(T)$ for any $T$. Therefore, \phenom{} can never occur.

\textbf{2. $|V_1^{m,p}| = 1$, $|V_2^{m,p}|=1$.} 
Using backward induction, we know that in any SPE, the strategy profile at each round must form an NE of the stage game.

\textbf{3. $|V_1^{m,p}| > 1$, $|V_2^{m,p}|>1$.}
Since (1) is false, there does not exist $\hat{\sigma}_1\in \Delta A_1, \hat{a}_2\in A_2$ where $(\hat{\sigma}_1, \hat{a}_2)\notin \Nash^{m,p}(G)$. Therefore, it trivially follows that in any SPE of $G(T)$, the strategy profile at each round must form an NE of the stage game.

\textbf{4. $|V_1^{m,p}| = 1$, $|V_2^{m,p}|>1$. }
Since (3) is false, there does not exist $\hat{\sigma}_1 \in \Delta A_1, \hat{a}_2, a_2' \in A_2$ where $u_2(\hat{\sigma}_1, \hat{a}_2) < u_2(\hat{\sigma}_1, a_2')$ and $\hat{\sigma}_1$ is a best response to $\hat{a}_2$. This means that for any $\hat{\sigma}_1 \in \Delta A_1, \hat{a}_2\in A_2$ where $\hat{\sigma}_1$ is a best response to $\hat{a}_2$, $\hat{a}_2$ is a best response to $\hat{\sigma}_1$, and thus $(\hat{\sigma}_1, \hat{a}_2) \in \Nash^{m,p}(G)$. Now we can use the same backward induction argument as in the proof of \Cref{thm:pure} (the part that proves the condition is necessary, case 4) to prove that in any SPE, the strategy profile at each round must form an NE of the stage game. 

\textbf{5. $|V_1^{m,p}| > 1$, $|V_2^{m,p}| = 1$. } 
Since (2) is false, we know that:
\begin{itemize}
    \item[($\ast$)] For all pure strategy $a_1\in A_1$ and $a_2\in A_2$, if $a_2$ is a best response to $a_1$, then $a_1$ is also a best response to $a_2$, i.e., $(a_1, a_2)\in \Nash^{m,p}(G)$.
    \item[($\ast\ast$)] For all $\sigma_1\in \Delta A_1$ and $a_2\in A_2$ where $a_2$ is a best response to $\sigma_1$ and $\sigma_1$ is not a best response to $a_2$ and $\sigma_1$ is not a pure strategy, denote the set of possible differences in $u_1$ between pairs of NEs in the stage game as $D = \condSet{u_1(\vsigma) - u_1(\vsigma')}{\vsigma,\vsigma' \in \Nash^{m,p}(G)}$, then there does not exist an action from the support of $\sigma_1$ $a\in S_{\sigma_1}$ such that, for every $a'\in S_{\sigma_1}\setminus a$, there exists some integer $n_{a'}\geq 0$ and $d_k^{a'}\in D$, $k=1,\dots,n_{a'}$ such that $u_1(a, a_2) - u_1(a', a_2) = \sum_{k=1}^{n_{a'}} d_k^{a'}$.
\end{itemize}

Now we can use backward induction to prove that in any SPE, the strategy profile at each round must form an NE of the stage game. The strategy profiles in the last round must form stage-game NEs. Given that the strategy profiles in the last $k$ rounds must all form stage-game NEs, consider the $(k+1)$-to-last round. Denote the strategies played in this round as $(\hat{\sigma}_1, \hat{a}_2)$. $\hat{a}_2$ must be a best response to $\hat{\sigma}_1$, since player 2's play in this round does not affect the total payoff they get in the final $k$ rounds. If $\hat{\sigma}_1$ is a pure strategy, then according to ($\ast$), $(\hat{\sigma}_1, \hat{a}_2)$ forms a stage-game NE, which completes the induction step. 

If $\hat{\sigma}_1$ is a mixed strategy (more than one support), assume on the contrary that the strategy profile at this round does not form a stage-game NE, then $\hat{\sigma}_1$ is not a best response to $\hat{a}_2$. By ($\ast\ast$), there does not exist an action from the support of $\hat{\sigma}_1$ $a\in S_{\hat{\sigma}_1}$ such that, for every $a'\in S_{\hat{\sigma}_1}\setminus a$, there exists some integer $n_{a'}\geq 0$ and $d_k^{a'}\in D$, $k=1,\dots,n_{a'}$ such that $u_1(a, \hat{a}_2) - u_1(a', \hat{a}_2) = \sum_{k=1}^{n_{a'}} d_k^{a'}$. Denote $\vmu_{|(a_1, a_2)}^{-k:}$ as the strategy profile in the last $k$ rounds given players played $(a_1, a_2)$ in the first round. For $(\hat{\sigma}_1, \hat{a}_2)$ to be part of a Nash equilibrium of the $(k+1)$-round repeated game, we have for all $a,a'\in S_{\hat{\sigma}_1}$, $u_1(a, \hat{a}_2) + U_1(\vmu_{|(a, \hat{a}_2)}^{-k:}) = u_1(a', \hat{a}_2) + U_1(\vmu_{|(a', \hat{a}_2)}^{-k:})$. By \Cref{lemma:spe_payoff}, for each $a\in S_{\hat{\sigma}_1}$, $U_1(\vmu_{|(a, \hat{a}_2)}^{-k:}) = \sum_{t=1}^k c_t^a$ for some $c_t^a\in V_1^{m,p}$, $t=1,\dots,k$. Therefore, taking an arbitrary $a\in S_{\hat{\sigma}_1}$, for every $a'\in S_{\hat{\sigma}_1}\setminus a$, $u_1(a, \hat{a}_2) - u_1(a', \hat{a}_2) = U_1(\vmu_{|(a', \hat{a}_2)}^{-k:}) - U_1(\vmu_{|(a, \hat{a}_2)}^{-k:}) = \sum_{t=1}^k c_t^{a'} - \sum_{t=1}^k c_t^a = \sum_{t=1}^k d_t^{a'}$, where $d_t^{a'} = c_t^{a'} - c_t^a \in D$. This produces a contradiction. Therefore, the strategy profile at the $(k+1)$-to-last round also forms a stage-game NE. This completes the induction step.

\end{proof}

Again, we can obtain the following corollary regarding the value of $T$ above which \phenom{} can occur if the condition in \Cref{thm:pure-vs-mixed} is satisfied:

\begin{corollary}
For 2-player games where player 1 can use mixed strategies and player 2 can only use pure strategies, given a stage game $G$:
\begin{enumerate}
    \item If $|V_1^{m,p}| > 1$, $|V_2^{m,p}| > 1$, and there exists some $\hat{\sigma}_1\in \Delta A_1, \hat{a}_2\in A_2$ where $(\hat{\sigma}_1, \hat{a}_2) \notin \textrm{Nash}(G)$, then by \Cref{lemma:mixed-pure}, there exists $\hat{a}_1\in A_1$ and $\hat{a}_2\in A_2$ where $(\hat{a}_1, \hat{a}_2) \notin \Nash^{m,p}(G)$, for all $T \geq 2\cdot \max\Big( \frac{\delta_1}{\max (V_1^{m,p})-\min (V_1^{m,p})}, \frac{\delta_2}{\max (V_2^{m,p})- \min (V_2^{m,p})} \Big) +1$ where $\delta_1 = \max_{a_1\in A_1} u_1(a_1, \hat{a}_2) - u_1(\hat{a}_1, \hat{a}_2)$ and $\delta_2 = \max_{a_2\in A_2} u_2(\hat{a}_1, a_2) - u_2(\hat{a}_1, \hat{a}_2)$, there exists some SPE of $G(T)$ where \phenom{} occurs.
    \item If $|V_1^{m,p}| > 1$, $|V_2^{m,p}| = 1$, and there exists $\hat{\sigma}_1 \in \Delta A_1, a_1' \in A_1, \hat{a}_2 \in A_2$ where 
    \begin{enumerate}
        \item $u_1(\hat{\sigma}_1, \hat{a}_2) < u_1(a_1', \hat{a}_2)$, and
        \item $\hat{a}_2$ is a best response to $\hat{\sigma}_1$, and
        \item if $\hat{\sigma}_1$ has more than one support (not a pure strategy), denote the set of possible differences in $u_1$ between pairs of NEs in the stage game as $D = \condSet{u_1(\vsigma) - u_1(\vsigma')}{\vsigma,\vsigma' \in \Nash^{m,p}(G)}$, there exists an action from the support of $\hat{\sigma}_1$ $a\in S_{\hat{\sigma}_1}$ such that, for every $a'\in S_{\hat{\sigma}_1}\setminus a$, there exists some integer $n_{a'}\geq 0$ and $d_k^{a'}\in D$, $k=1,\dots,n_{a'}$ such that $u_1(a, \hat{a}_2) - u_1(a', \hat{a}_2) = \sum_{k=1}^{n_{a'}} d_k^{a'}$,
    \end{enumerate}
    then if $\hat{\sigma}_1$ is a pure strategy, i.e., $|S_{\hat{\sigma}_1}|=1$, for all $T\geq \frac{\max_{a_1\in A_1} u_1(a_1, \hat{a}_2) - u_1(\hat{\sigma}_1, \hat{a}_2)}{\max (V_1^{m,p})-\min (V_1^{m,p})} + 1$, there exists some SPE of $G(T)$ where \phenom{} occurs; if $|S_{\hat{\sigma}_1}|>1$, for all $T\geq 2 + \sum_{a'\in S_{\hat{\sigma}_1}\setminus a} n_{a'} + \frac{\max_{i, i'\in A_1} u_1(i, \hat{a}_2) - u_1(i', \hat{a}_2) }{\max(V_1^{m,p}) - \min(V_1^{m,p})}$, where $a$ and $n_{a'}$ for every $a'\in S_{\hat{\sigma}_1}\setminus a$ are given in (c), there exists some SPE of $G(T)$ where \phenom{} occurs.
    
    \item If $|V_1^{m,p}| = 1$, $|V_2^{m,p}| > 1$, and there exists $\hat{\sigma}_1 \in \Delta A_1, \hat{a}_2, a_2' \in A_2$ where $u_2(\hat{\sigma}_1, \hat{a}_2) < u_2(\hat{\sigma}_1, a_2')$ and $\hat{\sigma}_1$ is a best response to $\hat{a}_2$, then for all $T\geq \frac{\max_{a_2\in A_2} u_2(\hat{\sigma}_1, a_2) - u_2(\hat{\sigma}_1, \hat{a}_2)}{\max (V_2^{m,p})-\min (V_2^{m,p})} + 1$, there exists some SPE of $G(T)$ where \phenom{} occurs.
\end{enumerate}
\end{corollary}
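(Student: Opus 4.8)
The plan is to obtain each threshold by re-auditing the explicit SPE constructions from the sufficiency direction of \Cref{thm:pure-vs-mixed}, tracking for each case the smallest number of rounds that makes the relevant incentive constraints hold, and then arguing that every larger $T$ is also realizable by padding. The common principle is that in each construction one player is held off their stage-game best response in the first round and kept in line by a reward/punishment continuation assembled from stage-game NEs; the number of continuation rounds needed is governed by the per-round payoff gap available to the threatening dimension. To make this gap as large as possible---and hence the round count as small as possible---I would always instantiate the two extremal equilibria $\vsmax,\vsmin$ with $u_i(\vsmax)=\max(V_i^{m,p})$ and $u_i(\vsmin)=\min(V_i^{m,p})$, so that the per-round incentive is exactly $\max(V_i^{m,p})-\min(V_i^{m,p})$.

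For item (1) I would reuse the alternating construction of \Cref{thm:pure} (case (1)) applied to the pure off-Nash profile $(\hat a_1,\hat a_2)$ supplied by \Cref{lemma:mixed-pure}: the later $2k$ rounds alternate equilibria so as to punish a first-round deviation by either player, and a valid SPE requires $k\ge\max\!\Big(\tfrac{\delta_1}{\max(V_1^{m,p})-\min(V_1^{m,p})},\,\tfrac{\delta_2}{\max(V_2^{m,p})-\min(V_2^{m,p})}\Big)$; setting $T=2k+1$ yields the stated bound. For item (3) I would use the mirror image of the case-(2) construction of \Cref{thm:pure} (players exchanged), where player 2 is now held off best response while player 1 threatens by selecting among equilibria with different player-2 payoffs; a single continuation block of $k$ rounds then suffices, giving $T=k+1$ with $k\ge\tfrac{\delta}{\max(V_2^{m,p})-\min(V_2^{m,p})}$. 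The pure sub-case of item (2), where $|S_{\hat\sigma_1}|=1$, applies that same case-(2) construction directly (player 1 held off best response) and produces the linear bound $\tfrac{\max_{a_1\in A_1}u_1(a_1,\hat a_2)-u_1(\hat\sigma_1,\hat a_2)}{\max(V_1^{m,p})-\min(V_1^{m,p})}+1$.

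The substantive case, and the one I expect to be the main obstacle, is the mixed sub-case of item (2) with $|S_{\hat\sigma_1}|>1$. Here the constructed SPE is segmented: a first round playing $(\hat\sigma_1,\hat a_2)$, one block $\vmu^{a'}$ of length $n_{a'}$ for each $a'\in S_{\hat\sigma_1}\setminus a$ that uses condition (c) to equalize player 1's total payoff across all support actions, and a tail $\vmu^{\bot}$ of length $n_{\bot}$ that rewards any in-support first move with $\vsmax$ and punishes any out-of-support first move with $\vsmin$. I would verify that the equalizing blocks are realizable precisely because each prescribed difference $d_k^{a'}$ lies in $D$ and is therefore literally the payoff gap between two genuine stage-game NEs, so the required sequences $\Sigma_1,\Sigma_0$ exist; their combined length is the fixed quantity $\sum_{a'\in S_{\hat\sigma_1}\setminus a} n_{a'}$. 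The remaining incentive constraint is that no out-of-support first move pays off, i.e.\ the one-round gain $u_1(i,\hat a_2)-u_1(a,\hat a_2)$ must be strictly dominated by the tail reward $n_{\bot}\big(\max(V_1^{m,p})-\min(V_1^{m,p})\big)$; bounding the gain above by $\max_{i,i'\in A_1}u_1(i,\hat a_2)-u_1(i',\hat a_2)$ shows it suffices that $n_{\bot}>\tfrac{\max_{i,i'\in A_1}u_1(i,\hat a_2)-u_1(i',\hat a_2)}{\max(V_1^{m,p})-\min(V_1^{m,p})}$.

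Finally, I would dispatch the universal quantifier over $T$ uniformly across all four cases. Given any $T$ at least the stated threshold, I would absorb the slack into the punishment tail---extending $\vmu^{\bot}$, or the alternating block in item (1), by repeating $\vsmin$ or a fixed NE sequence---and observe that lengthening a tail into which a deviator is steered only tightens the incentive constraints while leaving the on-path payoff equalization intact, so the padded profile is still an SPE exhibiting \phenom{}. In the mixed sub-case this padding is exactly what converts the strict inequality on $n_{\bot}$ together with its integrality into the clean threshold $T\ge 2+\sum_{a'\in S_{\hat\sigma_1}\setminus a} n_{a'}+\tfrac{\max_{i,i'\in A_1}u_1(i,\hat a_2)-u_1(i',\hat a_2)}{\max(V_1^{m,p})-\min(V_1^{m,p})}$: the constant $2$, rather than $1$, simultaneously pays for the first round and guarantees that $n_{\bot}=T-1-\sum_{a'\in S_{\hat\sigma_1}\setminus a} n_{a'}$ strictly exceeds the displayed ratio.
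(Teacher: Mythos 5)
Your proposal is correct and takes essentially the same route as the paper: the paper obtains this corollary by reading off the round-count requirements from the explicit SPE constructions in the sufficiency proofs of \Cref{thm:pure} and \Cref{thm:pure-vs-mixed}, instantiated with the extremal equilibria attaining $\max(V_i^{m,p})$ and $\min(V_i^{m,p})$, exactly as you do (including the segmented equalization-plus-tail construction for the mixed sub-case of item (2) and the role of the constant $2$ there). Your closing padding argument for arbitrary $T$ above each threshold simply makes explicit a step the paper leaves implicit.
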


\section{Effect of Changing from Pure Strategies to Mixed Strategies on the Emergence of Local Suboptimality}
\label{sec:suboptimal-change}

In \Cref{sec:pure,sec:mixed,sec:pure_vs_mixed}, we established sufficient and necessary conditions on the stage game $G$ for there exists some $T$ and some SPE of $G(T)$ where \phenom{} occurs for 2-player games, for cases where: 1) both players can only use pure strategies, 2) one player can only use pure strategies and the other player can use mixed strategies, and 3) both players can use mixed strategies.
Essentially, we established a complete characterization of $\glspp$, $\glsmp$, and $\glsmm$ (and therefore $\glopp$, $\glomp$, and $\glomm$).
Based on these results, in this section we study the effect of changing from pure strategies to mixed strategies on the emergence of \phenom{}. We aim to answer the following question: under what conditions on the stage game $G$ will allowing players to play mixed strategies change whether \phenom{} can ever occur in some repeated game $G(T)$?

Essentially, we aim to study the relationships between $\glspp$, $\glsmp$, and $\glsmm$. For example, are there stage games $G$ where $G\notin \glspp$ and $G\in \glsmp$, i.e., \phenom{} can never occur when both players can only use pure strategies but can occur when player 1 obtains access to mixed strategies? What is a complete characterization of such stage games? And in the other direction, are there stage games $G$ where $G\in \glspp$ and $G\notin \glsmp$? We also study the corresponding questions for the relationships between $\glsmp$ and $\glsmm$ and between $\glspp$ and $\glsmm$.

For the simplicity of descriptions, we refer to the case where both players can only use pure strategies as the {\em pure-pure} case, the case where player 1 can use mixed strategies and player 2 can only use pure strategies as the {\em mixed-pure} case, and the case where both players can use mixed strategies as the {\em mixed-mixed} case.

\subsection{Changing from Pure Strategies to Mixed Strategies when the Other Player Can Only Use Pure Strategies}
\label{sec:pp-to-mp}

We first analyze the situation for changing from the pure-pure case to the mixed-pure case, i.e., study the relationship between $\glspp$ and $\glsmp$. 

We first establish a useful theorem:

\begin{theorem}
\label{thm:spe-pp-mp}
For all 2-player stage games $G$, for all $T\in \integer^+$, for all $\vmu\in \spe^{p,p}(G,T)$, $\vmu \in \spe^{m,p}(G,T)$.
\end{theorem}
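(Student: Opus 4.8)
The plan is to show that the pure strategy profile $\vmu$, which is itself a valid strategy profile in the mixed-pure setting (player~1's pure behavior strategy is a degenerate mixed one), satisfies the SPE condition once player~1 is granted access to mixed strategies. By the definition of SPE given in \Cref{sec:model}, it suffices to verify that for every $0 \le k < T$ and every history $h(k)\in H(k)$, the restricted profile $\vmu_{|h(k)}$ is a Nash equilibrium of the subgame $G(T-k)$ in the mixed-pure setting. Since $\vmu \in \spe^{p,p}(G,T)$, we already know that $\vmu_{|h(k)}$ is a Nash equilibrium of $G(T-k)$ when both players are restricted to pure strategies, so the task reduces to upgrading each player's best-response condition to the mixed-pure game.

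First I would dispatch player~2. Because player~2 is confined to pure behavior strategies in both the pure-pure and the mixed-pure settings, the set of deviations available to player~2 is identical in the two settings. Hence the fact that $\mu_{2|h(k)}$ is a best response to $\mu_{1|h(k)}$ among pure behavior strategies, which is guaranteed by $\vmu$ being a pure-pure SPE, immediately carries over to give that $\mu_{2|h(k)}$ remains a best response in the mixed-pure subgame.

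The substance lies in the player~1 condition. Fixing player~2's pure behavior strategy $\mu_{2|h(k)}$, player~1's maximization problem over the remaining $T-k$ rounds becomes a single-agent finite-horizon sequential decision problem: at each reachable history player~1 chooses a possibly randomized action, and player~2's response is then pinned down deterministically by $\mu_{2|h(k)}$. I would argue that in such a problem the optimal expected total payoff attainable with mixed behavior strategies equals that attainable with pure behavior strategies. The cleanest route is a backward-induction and convexity argument: in the final round the immediate-payoff maximizer can be taken pure, and proceeding backward, at each history the sum of immediate payoff and continuation value is maximized by some pure action, so any randomization yields only a convex combination that cannot exceed this maximum. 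Equivalently, by perfect recall one may invoke Kuhn's theorem to write any behavior strategy of player~1 as a mixture over pure behavior strategies, making player~1's payoff against the fixed opponent a convex combination of pure-strategy payoffs. Since $\vmu$ being a pure-pure SPE already certifies that $\mu_{1|h(k)}$ attains the maximum over pure strategies, it therefore attains the maximum over all mixed strategies and is a best response in the mixed-pure subgame.

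Combining the two best-response claims shows that $\vmu_{|h(k)}$ is a mixed-pure Nash equilibrium of $G(T-k)$ for every $h(k)$, whence $\vmu \in \spe^{m,p}(G,T)$. The main obstacle is the middle claim, namely that randomization cannot help a single agent facing a fixed opponent in a sequential game, and in particular ensuring that the backward-induction bookkeeping respects the fact that player~2's actions, though deterministic functions of the history, become random once player~1 randomizes. I expect this to be handled entirely by the linearity of expected payoff in player~1's own action distribution at each node, together with the finiteness of the horizon and of the action sets.
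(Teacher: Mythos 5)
Your proof is correct, and it gets to the same mathematical heart as the paper's proof --- the two facts that player 2's deviation set is identical in the two settings, and that player 1's randomization cannot help against a fixed pure opponent by linearity --- but it is organized differently. The paper argues by contradiction: it takes the \emph{largest} $k$ at which some restriction $\vmu_{|h(k)}$ fails to be a mixed-pure NE, uses that maximality to localize any profitable deviation to the first round of that subgame (an implicit one-shot deviation argument), and then observes that a mixed first-round deviation's payoff is a convex combination of pure first-round deviations' payoffs, none of which is profitable since $\vmu$ is a pure-pure SPE. You instead verify the NE condition directly at every subgame: player 2 is dispatched exactly as in the paper, and for player 1 you treat the subgame against the fixed pure behavior strategy $\mu_{2|h(k)}$ as a single-agent finite-horizon decision problem, where backward-induction convexity or Kuhn's theorem shows the optimum over mixed behavior strategies is attained by a pure one, so $\mu_{1|h(k)}$'s optimality among pure strategies upgrades to optimality among all mixed strategies. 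The trade-off: the paper only needs linearity of (stage payoff plus continuation value) in the first-round randomization, but leans on the one-shot-deviation localization, which its ``largest $k$'' step states rather than justifies; your route carries the slightly heavier (though standard) machinery of identifying behavior strategies with mixtures over pure strategies on the whole subgame, but in exchange is fully direct and avoids that unproved reduction. Both are valid proofs.
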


\begin{proof}
Assume in contradiction that there exists some $G^*$, $T^*$, and $\vmu^*\in \spe^{p,p}(G^*,T^*)$ where $\vmu^* \notin \spe^{m,p}(G^*,T^*)$. Let $k^*$ to be the largest $k$ where $\vmu^*_{|h(k)}$ is not an NE of $G^*(T-k)$ for some history $h(k)$ in the mixed-pure case. Then one of the players must be able to unilaterally change their strategy in the first round of $\vmu^*_{|h(k)}$ to obtain a higher total payoff in $G^*(T-k)$. Player 2 cannot do so since they can still only play pure strategies. For player 1, they cannot do so when they can only play pure strategies, which means any alternative actions in the first round of $\mu^*_{1|h(k)}$ cannot lead to a higher total payoff in $G^*(T-k)$. But this means that any alternative mixed strategies in the first round of $\mu^*_{1|h(k)}$ also cannot lead to a higher total payoff in $G^*(T-k)$. This produces a contradiction. Therefore, for all 2-player stage games $G$, for all $T\in \integer^+$, for all $\vmu\in \spe^{p,p}(G,T)$, $\vmu \in \spe^{m,p}(G,T)$.
\end{proof}

Now we present \Cref{thm:pp-to-mp-1,thm:pp-to-mp-2} and \Cref{lemma:no-use-1,lemma:no-use-2}, which together completely characterize the relationship between $\glspp$ and $\glsmp$.

\begin{theorem}
\label{thm:pp-to-mp-1}
$\glspp \subseteq \glsmp$, i.e.,
for 2-player stage games $G$, if there exists some $T$ and some SPE of $G(T)$ where \phenom{} occurs in the pure-pure case, then there exists some $T$ and some SPE of $G(T)$ where \phenom{} occurs in the mixed-pure case.
\end{theorem}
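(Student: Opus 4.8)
The plan is to reuse the very SPE that witnesses \phenom{} in the pure-pure case and argue that it still witnesses \phenom{} once player~1 is granted access to mixed strategies. Suppose $G\in\glspp$, so there is some $T$ and some $\vmu\in\spe^{p,p}(G,T)$ together with a history $h(k)$ at which the behavior strategy profile $\vmu(h(k))$ is a pure profile $(a_1,a_2)\notin\Nash^{p,p}(G)$. Two facts suffice: (i) $\vmu$ remains an SPE in the mixed-pure case, and (ii) the witnessing profile $(a_1,a_2)$ remains off-Nash in the mixed-pure case. Fact (i) is exactly \Cref{thm:spe-pp-mp}, which I would invoke directly to conclude $\vmu\in\spe^{m,p}(G,T)$.

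For fact (ii), I would prove the slightly stronger statement that any pure profile $(a_1,a_2)\notin\Nash^{p,p}(G)$ also satisfies $(a_1,a_2)\notin\Nash^{m,p}(G)$. Since $(a_1,a_2)\notin\Nash^{p,p}(G)$, at least one player has a profitable pure deviation. If player~2 has one, then because player~2 is still confined to pure strategies in the mixed-pure case, the same deviation remains profitable. If instead player~1 has a profitable pure deviation $a_1'$ with $u_1(a_1',a_2)>u_1(a_1,a_2)$, I would use the elementary observation that $\max_{\sigma_1\in\Delta A_1}u_1(\sigma_1,a_2)=\max_{a_1''\in A_1}u_1(a_1'',a_2)$, since $u_1(\cdot,a_2)$ is linear in player~1's mixing weights and attains its maximum over the simplex at a vertex. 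Hence $a_1$ is not a best response even against mixed strategies, and again $(a_1,a_2)\notin\Nash^{m,p}(G)$.

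Combining the two facts, the history $h(k)$ witnesses $\vmu(h(k))\notin\Nash^{m,p}(G)$ for the SPE $\vmu\in\spe^{m,p}(G,T)$, so \phenom{} occurs in the mixed-pure case and $G\in\glsmp$. The only place requiring genuine care, as opposed to a direct citation of \Cref{thm:spe-pp-mp}, is fact (ii): the observation that enlarging player~1's action set from pure to mixed strategies cannot raise the best-response payoff against a fixed pure opponent action, so off-Nash profiles cannot suddenly become Nash. I expect this to be the main (if modest) obstacle, with everything else following mechanically.
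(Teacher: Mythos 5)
Your proposal is correct and follows essentially the same route as the paper: invoke \Cref{thm:spe-pp-mp} to carry the witnessing SPE over to the mixed-pure case, then observe that a pure profile outside $\Nash^{p,p}(G)$ remains outside $\Nash^{m,p}(G)$ (a fact the paper asserts in one line and you spell out, correctly, via the persistence of the profitable pure deviation). No gaps; the only difference is that you supply the elementary justification for the off-Nash preservation step that the paper leaves implicit.
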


\begin{proof}
If there exists some $T$ and some SPE of $G(T)$ where \phenom{} occurs in the pure-pure case, denote $T^*$ to be such a $T$ and $\vmu^*$ to be such an SPE of $G(T^*)$. By \Cref{thm:spe-pp-mp}, $\vmu^*$ is also an SPE of $G(T^*)$ in the mixed-pure case.  Since any pure strategy profiles not in $\Nash^{p,p}(G)$ are also not in $\Nash^{m,p}(G)$, \phenom{} occurs in $\vmu^*$ in the mixed-pure case. Therefore, there exists some $T$ and some SPE of $G(T)$ where \phenom{} occurs in the mixed-pure case.
\end{proof}

\begin{lemma}
\label{lemma:no-use-1}
For all stage games $G$ where $|V_1^{p,p}| = 1$ and $|V_2^{p,p}| > 1$, if $G\notin \glspp$, then $G\notin \glsmp$.
\end{lemma}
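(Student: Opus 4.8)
The plan is to translate both hypotheses into the explicit clauses of \Cref{thm:pure} and \Cref{thm:pure-vs-mixed}, and then to lift a pure-strategy ``no-threat'' property to the mixed-pure setting by a convexity (averaging) argument. First I would unpack $G \notin \glspp$. Under the preconditions $|V_1^{p,p}| = 1$ and $|V_2^{p,p}| > 1$, clauses (1) and (2) of \Cref{thm:pure} are vacuous (each requires $|V_1^{p,p}| > 1$), so $G \in \glspp$ is equivalent to clause (3). Hence $G \notin \glspp$ is exactly the negation of clause (3), which I would record as the structural property $(\star)$: for every $a_1 \in A_1$ and $a_2 \in A_2$, if $a_1$ is a best response to $a_2$ then $a_2$ is a best response to $a_1$, i.e.\ $(a_1, a_2) \in \Nash^{p,p}(G)$.

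Next I would show that $(\star)$ forces $|V_1^{m,p}| = 1$. Let $v_1$ be the unique element of $V_1^{p,p}$, and take any mixed-pure equilibrium $(\sigma_1, a_2) \in \Nash^{m,p}(G)$. By the standard fact that every action in the support of a best response is itself a pure best response, each $a_1 \in S_{\sigma_1}$ is a best response to $a_2$; by $(\star)$ each $(a_1, a_2)$ is then a pure NE, so $u_1(a_1, a_2) = v_1$. Averaging over the support gives $u_1(\sigma_1, a_2) = \sum_{a_1 \in S_{\sigma_1}} \sigma_1(a_1)\, u_1(a_1, a_2) = v_1$. Since pure equilibria exist (as $|V_2^{p,p}| > 1$ forces $\Nash^{p,p}(G)\neq\emptyset$, and they are also mixed-pure equilibria with payoff $v_1$), this yields $V_1^{m,p} = \{v_1\}$ and hence $|V_1^{m,p}| = 1$.

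It then remains to rule out each clause of \Cref{thm:pure-vs-mixed}. Clauses (1) and (2) require $|V_1^{m,p}| > 1$ and are therefore excluded by the previous step, so I only need to falsify the existential part of clause (3). Suppose $\hat{\sigma}_1 \in \Delta A_1$ is a best response to some $\hat{a}_2 \in A_2$. As before, every $a_1 \in S_{\hat{\sigma}_1}$ is a pure best response to $\hat{a}_2$, so by $(\star)$ the profile $(a_1, \hat{a}_2)$ is a pure NE, giving $u_2(a_1, \hat{a}_2) \geq u_2(a_1, a_2')$ for all $a_2'$. Averaging these inequalities with weights $\hat{\sigma}_1(a_1)$ yields $u_2(\hat{\sigma}_1, \hat{a}_2) \geq u_2(\hat{\sigma}_1, a_2')$ for every $a_2'$, i.e.\ $\hat{a}_2$ is a best response to $\hat{\sigma}_1$. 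Thus no triple witnessing clause (3) can exist, so none of the three clauses of \Cref{thm:pure-vs-mixed} hold, giving $G \notin \glsmp$.

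The main obstacle is the two averaging arguments, which are where the work actually happens: the pure-strategy property $(\star)$ must be promoted to statements about mixtures, both for player 1's equilibrium payoff and for player 2's best-response structure. Both promotions go through precisely because payoffs are linear in player 1's randomization, so the support-based best-response characterization and the convex combination of the no-threat inequalities preserve the relevant structure. The precondition $|V_1^{p,p}| = 1$ is what collapses player 1's mixed-pure payoff set to a single value, and this is the step that lets \Cref{thm:pure-vs-mixed} apply without any further case analysis on $|V_2^{m,p}|$.
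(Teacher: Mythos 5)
Your proof is correct and takes essentially the same route as the paper's: both extract the reciprocal best-response property $(\star)$ from the negation of clause (3) of \Cref{thm:pure}, use the support-of-a-best-response argument to collapse $V_1^{m,p}$ to a single value, and then falsify clause (3) of \Cref{thm:pure-vs-mixed} by averaging the pure-strategy inequalities over the support of $\hat{\sigma}_1$. The only cosmetic differences are that the paper obtains $|V_1^{m,p}|=1$ via the double inclusion $V_1^{m,p}\subseteq V_1^{p,p}\subseteq V_1^{m,p}$ (where you argue $V_1^{m,p}\subseteq\{v_1\}$ plus nonemptiness) and phrases the final step as a contradiction rather than a direct implication.
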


\begin{proof}
If $|V_1^{p,p}| = 1$, $|V_2^{p,p}| > 1$, and $G\notin \glspp$, by \Cref{thm:pure}, for all $a_2\in A_2$, for all $a_1\in A_1$ that is a best response to $a_2$, $a_2$ is also a best response to $a_1$. Then for any $(\sigma_1, a_2)\in\Nash^{m,p}(G)$, for any $a_1\in S_{\sigma_1}$, $a_1$ is a best response to $a_2$, so $a_2$ is a best response to $a_1$, which means $(a_1, a_2)\in\Nash^{p,p}(G)$ and $u_1(a_1, a_2)\in V_1^{p,p}$. And since $u_1(\sigma_1, a_2) = u_1(a_1, a_2)$ for any $a_1\in S_{\sigma_1}$, $u_1(\sigma_1, a_2)\in V_1^{p,p}$. Thus, $V_1^{m,p}\subseteq V_1^{p,p}$. Furthermore, since $\Nash^{p,p}(G)\subseteq \Nash^{m,p}(G)$, $V_1^{p,p}\subseteq V_1^{m,p}$. Therefore, $|V_1^{p,p}| = |V_1^{m,p}|$. So $|V_1^{m,p}| = 1$, $|V_2^{m,p}| > 1$. We argue that there cannot exist $\hat{\sigma}_1 \in \Delta A_1, \hat{a}_2, a_2' \in A_2$ where $u_2(\hat{\sigma}_1, \hat{a}_2) < u_2(\hat{\sigma}_1, a_2')$ and $\hat{\sigma}_1$ is a best response to $\hat{a}_2$. Assuming there exists such $\hat{\sigma}_1$ and $\hat{a}_2$. Then for all $a_1\in S_{\hat{\sigma}_1}$, $a_1$ is a best response to $\hat{a}_2$, which means $\hat{a}_2$ is also a best response to $a_1$. Then $\hat{a}_2$ is also a best response to $\hat{\sigma}_1$, which produces a contradiction. Therefore, by \Cref{thm:pure-vs-mixed}, there does not exist some $T$ and some SPE of $G(T)$ where \phenom{} occurs in the mixed-pure case, so $G\notin \glsmp$.
\end{proof}

\begin{lemma}
\label{lemma:no-use-2}
For all stage games $G$ where $|V_1^{p,p}| > 1$ and $|V_2^{p,p}| > 1$, if $G\in \glsmp$, then $G\in \glspp$.
\end{lemma}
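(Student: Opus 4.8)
The plan is to prove the contrapositive: under the standing precondition $|V_1^{p,p}| > 1$ and $|V_2^{p,p}| > 1$, I will show that $G \notin \glspp$ implies $G \notin \glsmp$. First I would invoke \Cref{thm:pure}. In the regime $|V_1^{p,p}| > 1,\ |V_2^{p,p}| > 1$, conditions (2) and (3) of \Cref{thm:pure} cannot hold (they require $|V_2^{p,p}| = 1$ or $|V_1^{p,p}| = 1$), so membership in $\glspp$ can only be witnessed by condition (1). Hence $G \notin \glspp$ forces condition (1) to fail: there is no pure profile $(\hat{a}_1, \hat{a}_2) \notin \Nash^{p,p}(G)$. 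Equivalently, every pure strategy profile is a pure stage-game NE, i.e. $A = \Nash^{p,p}(G)$ — this is precisely case 3 in the necessity argument of \Cref{thm:pure}.

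Next I would extract the degenerate payoff structure that $A = \Nash^{p,p}(G)$ forces. Fixing $a_2$ and letting $a_1$ range over $A_1$, the fact that every $a_1$ is a (pure) best response to $a_2$ means $u_1(a_1, a_2)$ is independent of $a_1$; symmetrically, fixing $a_1$ and letting $a_2$ range shows $u_2(a_1, a_2)$ is independent of $a_2$. Thus there exist functions $f$ and $g$ with $u_1(a_1, a_2) = f(a_2)$ and $u_2(a_1, a_2) = g(a_1)$ for all $(a_1, a_2)$ — the game has the degenerate form illustrated by \Cref{tab:every_ne}.

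The key step is then to lift this structure to the mixed-pure setting. For any profile $(\sigma_1, a_2)$ with $\sigma_1 \in \Delta A_1$, I would compute $u_1(\sigma_1, a_2) = \sum_{a_1} \sigma_1(a_1) f(a_2) = f(a_2)$, which is independent of $\sigma_1$, so $\sigma_1$ is automatically a best response; and $u_2(\sigma_1, a_2) = \sum_{a_1} \sigma_1(a_1) g(a_1)$, which is independent of $a_2$, so $a_2$ is automatically a best response. Hence every mixed-pure profile lies in $\Nash^{m,p}(G)$. Consequently, in any SPE of any $G(T)$ the round-$k$ play is trivially a stage-game NE (since every profile is one), so by \Cref{def:local-sub} \phenom{} can never occur, giving $G \notin \glsmp$ and completing the contrapositive.

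I expect the only delicate point — rather than a genuine obstacle — to be ensuring the precondition, stated in terms of $V_i^{p,p}$, is used correctly to pin down that condition (1) of \Cref{thm:pure} is the operative clause. Once $A = \Nash^{p,p}(G)$ is in hand, the degenerate-payoff observation makes the mixed-pure conclusion immediate, so no appeal to the more intricate characterization in \Cref{thm:pure-vs-mixed} is needed.
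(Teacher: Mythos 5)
Your proof is correct, but it takes a genuinely different route from the paper's. The paper argues the implication directly: from $G\in\glsmp$ and the precondition (which forces $|V_1^{m,p}|>1$ and $|V_2^{m,p}|>1$ since $\Nash^{p,p}(G)\subseteq\Nash^{m,p}(G)$), \Cref{thm:pure-vs-mixed} yields a profile $(\hat{\sigma}_1,\hat{a}_2)\notin\Nash^{m,p}(G)$; \Cref{lemma:mixed-pure} then projects this to a pure profile $(a_1,a_2)\notin\Nash^{p,p}(G)$, and condition (1) of \Cref{thm:pure} gives $G\in\glspp$. You instead prove the contrapositive: after extracting $A=\Nash^{p,p}(G)$ from the failure of condition (1) of \Cref{thm:pure}, you derive the degenerate payoff structure $u_1(a_1,a_2)=f(a_2)$ and $u_2(a_1,a_2)=g(a_1)$, conclude that every mixed-pure profile lies in $\Nash^{m,p}(G)$, and hence that \phenom{} is impossible directly from \Cref{def:local-sub}. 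Your structural step is in effect a re-proof, in contrapositive form and specialized to the mixed-pure setting, of \Cref{lemma:mixed-pure}: that lemma says a mixed off-Nash profile yields a pure off-Nash profile, which is exactly the statement that if all pure profiles are stage-game NEs then so are all mixed ones. What the paper's route buys is economy and uniformity: it reuses the already-established characterization theorems and \Cref{lemma:mixed-pure}, and runs parallel to the proofs of \Cref{lemma:no-use-1,lemma:no-use-3}. What your route buys is self-containedness and a stronger intermediate conclusion: you exhibit the explicit reason such games cannot support threats (the whole profile space is $\Nash^{m,p}(G)$), and you bypass \Cref{thm:pure-vs-mixed} entirely, thereby avoiding any case analysis on $|V_i^{m,p}|$.
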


\begin{proof}
If $|V_1^{p,p}| > 1$, $|V_2^{p,p}| > 1$, and $G\in \glsmp$, since $\Nash^{p,p}(G)\subseteq \Nash^{m,p}(G)$, $|V_1^{m,p}| > 1$ and $|V_2^{m,p}| > 1$. Then since there exists some $T$ and some SPE of $G(T)$ where \phenom{} occurs in the mixed-pure case, by \Cref{thm:pure-vs-mixed}, there exists some $\hat{\sigma}_1\in \Delta A_1, \hat{a}_2\in A_2$ where $(\hat{\sigma}_1, \hat{a}_2) \notin \Nash^{m,p}(G)$. Applying \Cref{lemma:mixed-pure}, there exists some $a_1\in A_1$ and $a_2\in A_2$ where $(a_1, a_2) \notin \Nash^{p,p}(G)$. By \Cref{thm:pure}, there exists some $T$ and some SPE of $G(T)$ where \phenom{} occurs in the pure-pure case, so $G\in \glspp$.
\end{proof}

The above two lemmas show that under certain preconditions on $|V_1^{p,p}|$ and $|V_2^{p,p}|$, $\glspp = \glsmp$.
Now the question is whether $\glspp$ always equals $\glsmp$.
Here we show that this is not the case. We present example stage games $G$ where $G\in \glsmp$ but $G\notin \glspp$ for each of the remaining cases regarding the values of $|V_1^{p,p}|$ and $|V_2^{p,p}|$. This shows that $\glspp \neq \glsmp$. Combined with \Cref{thm:pp-to-mp-1}, this shows that $\glspp$ is a proper subset of $\glsmp$.

\begin{example}
\label{example:pp-mp-00}
\Cref{tab:none_to_new_ne} presents an example stage game $G$ where $|V_1^{p,p}| = 0$, $|V_2^{p,p}| = 0$, $G\in \glsmp$, and $G\notin \glspp$. When both players can only use pure strategies, there is no Nash equilibrium, so $|V_1^{p,p}| = 0$ and $|V_2^{p,p}| = 0$. By \Cref{thm:pure}, $G\notin \glspp$. When player 1 can use mixed strategies and player 2 can only use pure strategies, the set of Nash equilibria are: $(\sigma_1, b_2)$ and $(\sigma_1, c_2)$ for all $\sigma_1$ where $0.25\leq \sigma_1(a_1) \leq 0.75$. So $|V_1^{m,p}| > 1$ and $|V_2^{m,p}| = 1$. And $(b_1, a_2)$ is a strategy profile where player 1 does not play a best response and player 2 plays a best response. Therefore, the condition in \Cref{thm:pure-vs-mixed} is satisfied, so $G\in \glsmp$.

\begin{table}[ht]
    \centering
    \begin{tabular}{c|c|c|c|c|}
    & $a_2$ & $b_2$ & $c_2$ & $d_2$ \\
    \midrule
    $a_1$ & (4,0) & (1,3) & (2,3) & (0,4) \\
      \midrule
    $b_1$ & (0,4) & (1,3) & (2,3) & (4,0) \\
       \midrule
    \end{tabular}
    \caption{Example stage game in matrix form, row player is player 1, column player is player 2. When both players can only use pure strategies, there is no Nash equilibrium. When player 1 can use mixed strategies and player 2 can only use pure strategies, there are multiple Nash equilibria, and \phenom{} can occur in repeated games. }
    \label{tab:none_to_new_ne}
\end{table}
\end{example}

\begin{example}
\label{example:pp-mp-11}
\Cref{tab:new_ne} presents an example stage game $G$ where $|V_1^{p,p}| = 1$, $|V_2^{p,p}| = 1$, $G\in \glsmp$, and $G\notin \glspp$. When both players can only use pure strategies, the only Nash equilibria are $(a_1, a_2)$ and $(b_1, c_2)$, so $|V_1^{p,p}| = 1$ and $|V_2^{p,p}| = 1$. By \Cref{thm:pure}, $G\notin \glspp$. When player 1 can use mixed strategies and player 2 can only use pure strategies, $(\sigma_1, b_2)$ for all $\sigma_1$ where $0.25\leq \sigma_1(a_1) \leq 0.75$ are Nash equilibria. So $|V_1^{m,p}| > 1$ and $|V_2^{m,p}| > 1$. And $(b_1, a_2)\notin \Nash^{m,p}(G)$. Therefore, the condition in \Cref{thm:pure-vs-mixed} is satisfied, so $G\in \glsmp$.

\begin{table}[ht]
    \centering
    \begin{tabular}{c|c|c|c|}
    & $a_2$ & $b_2$ & $c_2$ \\
    \midrule
    $a_1$ & (4,4) & (1,3)  & (0,0) \\
      \midrule
    $b_1$ & (0,0) & (1,3) & (4,4) \\
       \midrule
    \end{tabular}
    \caption{Example stage game in matrix form, row player is player 1, column player is player 2. $|V_1^{p,p}| = 1$, $|V_2^{p,p}| = 1$, $|V_1^{m,p}| > 1$, and $|V_2^{m,p}| > 1$. }
    \label{tab:new_ne}
\end{table}
\end{example}

\begin{example}
\label{example:pp-mp-g1}
\Cref{tab:no_best_response} presents an example stage game $G$ where $|V_1^{p,p}| > 1$, $|V_2^{p,p}| = 1$, $G\in \glsmp$, and $G\notin \glspp$. For this game $G$, the set of pure Nash equilibria is: $(a_1, a_2)$, $(b_1, a_2)$, $(b_1, b_2)$, and $(c_1, b_2)$. Therefore, $|V_1^{p,p}| > 1$ and $|V_2^{p,p}|=1$. We can see that for all pure strategy profiles where player 2 plays a best response, player 1 also plays a best response. So the condition in \Cref{thm:pure} is not satisfied, and $G\notin \glspp$.
The strategy profile $(\sigma_1, a_2)$ where $\sigma_1(a_1) = 0.9$ and $\sigma_1(c_1) = 0.1$ is an example strategy profile where player 2 plays a best response and player 1 does not play a best response. And $u_1(a_1, a_2) - u_1(c_1, a_2)$ can be expressed as the sum of some sequence of values in $D = \condSet{u_1(\vsigma) - u_1(\vsigma')}{\vsigma,\vsigma' \in \Nash^{m,p}(G)}$. So the condition in \Cref{thm:pure-vs-mixed} is satisfied, and $G\in \glsmp$. Intuitively speaking, in the pure-pure case, although $|V_1^{p,p}| > 1$ makes player 1 potentially vulnerable to threats, there is no way to construct such a threat since there is no strategy profiles where player 1 does not play a best response but player 2 plays a best response. But in the mixed-pure case, such off-(stage-game)-Nash strategy profiles become available, which makes the threat possible.
\end{example}

Furthermore, the following theorem completely characterizes $\glsmp \setminus \glspp$, i.e., the set of 2-player stage games $G$ where 1) in the pure-pure case, \phenom{} can never occur, and 2) in the mixed-pure case, there exists some $T$ and some SPE of $G(T)$ where \phenom{} occurs.

\begin{theorem}
\label{thm:pp-to-mp-2}
For 2-player stage games $G$, $G\notin \glspp$ and $G\in \glsmp$ if and only if
\begin{enumerate}
    \item $|V_1^{p,p}| = 0$, $|V_2^{p,p}| = 0$, and the condition in \Cref{thm:pure-vs-mixed} is satisfied, OR
    \item $|V_1^{p,p}| = 1$, $|V_2^{p,p}| = 1$, and the condition in \Cref{thm:pure-vs-mixed} is satisfied, OR
    \item $|V_1^{p,p}| > 1$, $|V_2^{p,p}| = 1$, there does not exist $\hat{a}_1, a_1' \in A_1, \hat{a}_2 \in A_2$ where $u_1(\hat{a}_1, \hat{a}_2) < u_1(a_1', \hat{a}_2)$ and $\hat{a}_2$ is a best response to $\hat{a}_1$, and the condition in \Cref{thm:pure-vs-mixed} is satisfied.
\end{enumerate}
\end{theorem}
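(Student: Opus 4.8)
The plan is to reduce the claim to a case analysis over the pair $(|V_1^{p,p}|, |V_2^{p,p}|)$, leaning entirely on characterizations already in hand. First I would observe that, by \Cref{thm:pure-vs-mixed}, the clause ``the condition in \Cref{thm:pure-vs-mixed} is satisfied'' is exactly equivalent to $G\in\glsmp$; hence each of the three listed conditions carries $G\in\glsmp$ as one conjunct, and all that remains is to pin down precisely which membership patterns of $(|V_1^{p,p}|,|V_2^{p,p}|)$ are compatible with $G\notin\glspp$ and $G\in\glsmp$ holding simultaneously. The necessity argument in the proof of \Cref{thm:pure} already partitions $\condSet{G}{G\notin\glspp}$ into five disjoint cases indexed by $(|V_1^{p,p}|,|V_2^{p,p}|)$: (i) $(0,0)$; (ii) $(1,1)$; (iii) both $>1$ (equivalently $A=\Nash^{p,p}(G)$); (iv) $|V_1^{p,p}|>1,\,|V_2^{p,p}|=1$ together with the stated no-best-response property; and (v) the symmetric case $|V_1^{p,p}|=1,\,|V_2^{p,p}|>1$. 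Cases (i), (ii), (iv) are verbatim the preconditions appearing in conditions (1), (2), (3) of the theorem.

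For the backward direction, I would take each listed condition and verify both memberships. In each case the precondition on $(|V_1^{p,p}|,|V_2^{p,p}|)$ --- together with the extra no-best-response clause in condition (3) --- places $G$ into case (i), (ii), or (iv) of the necessity proof of \Cref{thm:pure}, so $G\notin\glspp$; and the conjunct ``the condition in \Cref{thm:pure-vs-mixed} is satisfied'' gives $G\in\glsmp$ directly. For (1) and (2) the $|V_i^{p,p}|$ precondition alone already forces $G\notin\glspp$ (empty $\Nash^{p,p}(G)$ in the first case, backward induction in the second), while for (3) the no-best-response clause is what is needed in addition. This direction is essentially immediate once the case correspondence is spelled out.

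For the forward direction, suppose $G\notin\glspp$ and $G\in\glsmp$. Then $G$ falls into one of the five cases above, and I would eliminate cases (iii) and (v) using the two lemmas. In case (iii), since $|V_1^{p,p}|>1$ and $|V_2^{p,p}|>1$, \Cref{lemma:no-use-2} forces $G\in\glspp$ from $G\in\glsmp$, contradicting $G\notin\glspp$. In case (v), since $|V_1^{p,p}|=1$ and $|V_2^{p,p}|>1$, \Cref{lemma:no-use-1} forces $G\notin\glsmp$ from $G\notin\glspp$, contradicting $G\in\glsmp$. The surviving cases (i), (ii), (iv), each conjoined with $G\in\glsmp$ (equivalently the condition in \Cref{thm:pure-vs-mixed}), are precisely conditions (1), (2), (3), completing the equivalence.

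I do not expect a genuine technical obstacle: the two lemmas carry the only real content, namely ruling out the two cases where both players have multiple NE payoff values or the roles are reversed, and everything else is bookkeeping against the already-established characterizations. The one point that requires care is confirming that the no-best-response clause in condition (3) is identical to the clause distinguishing case (iv) of the necessity argument for \Cref{thm:pure}, so that the precondition of (3) is genuinely equivalent to ``$|V_1^{p,p}|>1,\,|V_2^{p,p}|=1$ and $G\notin\glspp$'' rather than merely implying it.
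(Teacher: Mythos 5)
Your proposal is correct and follows essentially the same route as the paper's proof: the backward direction applies \Cref{thm:pure} (or emptiness of $\Nash^{p,p}(G)$) together with \Cref{thm:pure-vs-mixed} case by case, and the forward direction splits on $(|V_1^{p,p}|,|V_2^{p,p}|)$ into the same five cases, using \Cref{lemma:no-use-2} and \Cref{lemma:no-use-1} to rule out the $(>1,>1)$ and $(1,>1)$ cases respectively. The only presentational difference is that you make explicit up front that the \Cref{thm:pure-vs-mixed} clause is equivalent to $G\in\glsmp$ and that the case-(iv) clause is exactly the negation of condition (2) in \Cref{thm:pure}, which the paper leaves implicit.
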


\begin{proof}
To show the above condition is sufficient, we show that each of 1, 2, and 3 is sufficient. 

\textbf{1.} $|V_1^{p,p}| = 0$ and $|V_2^{p,p}| = 0$ mean that $G$ has no pure Nash equilibrium. So in the pure-pure case, there is no SPE for any repeated game $G(T)$, therefore \phenom{} can never occur. Furthermore, the condition in \Cref{thm:pure-vs-mixed} is satisfied implies that in the mixed-pure case, there exists some $T$ and some SPE of $G(T)$ where \phenom{} occurs. 
\Cref{example:pp-mp-00} presents a example stage game $G$ that satisfies this condition.

\textbf{2.} Since $|V_1^{p,p}| = 1$ and $|V_2^{p,p}| = 1$, by \Cref{thm:pure}, in the pure-pure case, \phenom{} can never occur. And the condition in \Cref{thm:pure-vs-mixed} is satisfied implies that in the mixed-pure case, there exists some $T$ and some SPE of $G(T)$ where \phenom{} occurs. \Cref{example:pp-mp-11} presents an example stage game $G$ that satisfies this condition.

\textbf{3.} Since $|V_1^{p,p}| > 1$, $|V_2^{p,p}| = 1$, and there does not exist $\hat{a}_1, a_1' \in A_1, \hat{a}_2 \in A_2$ where $u_1(\hat{a}_1, \hat{a}_2) < u_1(a_1', \hat{a}_2)$ and $\hat{a}_2$ is a best response to $\hat{a}_1$, by \Cref{thm:pure}, in the pure-pure case, \phenom{} can never occur. And the condition in \Cref{thm:pure-vs-mixed} is satisfied implies that in the mixed-pure case, there exists some $T$ and some SPE of $G(T)$ where \phenom{} occurs. \Cref{example:pp-mp-g1} presents an example stage game $G$ that satisfies this condition. 
\\
\\
To show the condition is necessary, we split all stage games $G$ into five disjoint cases based on $|V_1^{p,p}|$ and $|V_2^{p,p}|$:
\begin{itemize}
    \item[(a)] $|V_1^{p,p}| = 0$ and $|V_2^{p,p}| = 0$,
    \item[(b)] $|V_1^{p,p}| = 1$ and $|V_2^{p,p}| = 1$,
    \item[(c)] $|V_1^{p,p}| > 1$ and $|V_2^{p,p}| = 1$,
    \item[(d)] $|V_1^{p,p}| = 1$ and $|V_2^{p,p}| > 1$,
    \item[(e)] $|V_1^{p,p}| > 1$ and $|V_2^{p,p}| > 1$.
\end{itemize}

For cases (a), (b), and (c), a direct application of \Cref{thm:pure,thm:pure-vs-mixed} implies that our target condition is necessary. For cases (d) and (e), \Cref{lemma:no-use-1,lemma:no-use-2} show that $\glsmp \setminus \glspp$ is empty under these two cases. Therefore, our target condition is necessary.
\end{proof}

\subsection{Changing from Pure Strategies to Mixed Strategies when the Other Player Can Use Mixed Strategies}
\label{sec:mp-to-mm}

Next, we analyze the situation for changing from the mixed-pure case to the mixed-mixed case, i.e., study the relationship between $\glsmp$ and $\glsmm$. 

Following the same argument as the proof of \Cref{thm:spe-pp-mp}, we can prove the following theorem, which is useful for the results in this part.
\begin{theorem}
\label{thm:spe-mp-mm}
For all 2-player stage games $G$, for all $T\in \integer^+$, for all $\vmu\in \spe^{m,p}(G,T)$, $\vmu \in \spe^{m,m}(G,T)$.
\end{theorem}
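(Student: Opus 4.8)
The plan is to mirror the contradiction argument used for \Cref{thm:spe-pp-mp}, with the roles adjusted so that player 2 is now the one who newly gains access to mixed strategies. Suppose for contradiction that there exist $G^*$, $T^*$, and $\vmu^* \in \spe^{m,p}(G^*, T^*)$ with $\vmu^* \notin \spe^{m,m}(G^*, T^*)$. I would let $k^*$ be the largest $k$ for which some subgame profile $\vmu^*_{|h(k)}$ fails to be an NE of $G^*(T^*-k)$ in the mixed-mixed setting. By maximality of $k^*$, every strictly deeper subgame is already an NE in the mixed-mixed setting, so the failure at $h(k^*)$ must be realizable as a profitable one-shot (first-round) deviation by one of the players -- exactly the reduction used in the proof of \Cref{thm:spe-pp-mp}.

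Next I would rule out each player in turn. For player 1: since $\vmu^*$ is fixed and player 2 plays the identical (pure) behavior strategy prescribed by $\vmu^*$ in both settings, player 1's best-response problem in the mixed-mixed setting is literally the same as in the mixed-pure setting, where player 1 already ranged over all mixed strategies; as $\vmu^*_{|h(k^*)}$ is an NE of the mixed-pure game, player 1 has no profitable first-round deviation. For player 2: here is the only genuinely new ingredient, namely that player 2 now has mixed first-round deviations available. The key observation is that the expected total payoff of a one-shot deviation to a mixed first-round action $\sigma_2'$ (then reverting to $\mu^*_{2|h(k^*)}$) is linear in $\sigma_2'$, hence equals the $\sigma_2'$-weighted average of the payoffs of the corresponding pure one-shot deviations. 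Each pure deviation is unprofitable because $\vmu^*_{|h(k^*)}$ is an NE of the mixed-pure game, so their average is bounded by the equilibrium payoff and no mixed deviation can strictly improve either.

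Combining the two, neither player has a profitable one-shot deviation at $h(k^*)$, so $\vmu^*_{|h(k^*)}$ is in fact an NE of $G^*(T^*-k^*)$ in the mixed-mixed setting, contradicting the choice of $k^*$. I expect the only point requiring care -- the analogue of the ``pure cannot improve $\Rightarrow$ mixed cannot improve'' step in \Cref{thm:spe-pp-mp} -- to be the linearity argument for player 2: one must ensure the deviation is taken to be a genuine one-shot deviation (first round only, continuation fixed), so that the payoff is linear in the first-round mixing weights rather than a general mixed behavior strategy whose payoffs interact across rounds. Given the one-shot-deviation reduction this linearity is immediate, so I anticipate no substantive obstacle beyond faithfully transcribing the structure of the earlier proof.
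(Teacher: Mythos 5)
Your proposal is correct and follows essentially the same argument as the paper's proof: a contradiction via the largest $k^*$ at which some subgame profile fails to be a mixed-mixed NE, ruling out player 1 because their strategy space is unchanged, and ruling out player 2 because mixed first-round deviations are convex combinations of pure ones and hence cannot be profitable when no pure deviation is. Your version merely makes explicit the one-shot-deviation reduction and the linearity step that the paper leaves implicit.
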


\begin{proof}
Assume in contradiction that there exists some $G^*$, $T^*$, and $\vmu^*\in \spe^{m,p}(G^*,T^*)$ where $\vmu^* \notin \spe^{m,m}(G^*,T^*)$. Let $k^*$ to be the largest $k$ where $\vmu^*_{|h(k)}$ is not an NE of $G^*(T-k)$ for some history $h(k)$ in the mixed-mixed case. Then one of the players must be able to unilaterally change their strategy in the first round of $\vmu^*_{|h(k)}$ to obtain a higher total payoff in $G^*(T-k)$. Player 1 cannot do so since their strategy space is the same in the mixed-pure case and the mixed-mixed case. For player 2, they cannot do so when they can only play pure strategies, which means any alternative actions in the first round of $\mu^*_{2|h(k)}$ cannot lead to a higher total payoff in $G^*(T-k)$. But this means that any alternative mixed strategies in the first round of $\mu^*_{2|h(k)}$ also cannot lead to a higher total payoff in $G^*(T-k)$. This produces a contradiction. Therefore, for all 2-player stage games $G$, for all $T\in \integer^+$, for all $\vmu\in \spe^{m,p}(G,T)$, $\vmu \in \spe^{m,m}(G,T)$.
\end{proof}

Now we present \Cref{thm:mp-to-mm-1,thm:mp-to-mm-2} and \Cref{lemma:no-use-3}, which together completely characterize the relationship between $\glsmp$ and $\glsmm$.

\begin{theorem}
\label{thm:mp-to-mm-1}
$\glsmp \subseteq \glsmm$, i.e., for 2-player stage games $G$, if there exists some $T$ and some SPE of $G(T)$ where \phenom{} occurs in the mixed-pure case, then there exists some $T$ and some SPE of $G(T)$ where \phenom{} occurs in the mixed-mixed case.
\end{theorem}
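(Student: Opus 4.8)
The plan is to mirror the proof of \Cref{thm:pp-to-mp-1}, simply replacing \Cref{thm:spe-pp-mp} with \Cref{thm:spe-mp-mm}. Suppose \phenom{} occurs in the mixed-pure case; then I would fix a $T^*$ and an SPE $\vmu^*\in\spe^{m,p}(G,T^*)$ witnessing this, so that by \Cref{def:local-sub} there is some history $h(k)$ with $\vmu^*(h(k))\notin\Nash^{m,p}(G)$. By \Cref{thm:spe-mp-mm}, $\vmu^*$ is also an SPE of $G(T^*)$ in the mixed-mixed case, i.e. $\vmu^*\in\spe^{m,m}(G,T^*)$. It therefore suffices to show that the behavior strategy profile $\vmu^*(h(k))$, which is off-Nash in the mixed-pure setting, remains off-Nash in the mixed-mixed setting.

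The key step is the following preservation claim: for any $\sigma_1\in\Delta A_1$ and any pure $a_2\in A_2$, one has $(\sigma_1,a_2)\notin\Nash^{m,p}(G)$ only if $(\sigma_1,a_2)\notin\Nash^{m,m}(G)$. This is exactly what is needed, because in the mixed-pure case player 2 always plays a pure action, so the witnessing profile has the form $\vmu^*(h(k))=(\sigma_1,a_2)$ with $a_2\in A_2$. To prove the claim I would compare the two equilibrium conditions directly. Player 1's best-response requirement is identical in the two settings: player 1's strategy space is $\Delta A_1$ in both, and player 2 plays the same fixed $a_2$, so $\sigma_1$ is a best response to $a_2$ in one sense iff in the other. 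For player 2, the mixed-pure condition is $a_2\in\argmax_{a_2'\in A_2} u_2(\sigma_1,a_2')$ whereas the mixed-mixed condition is $a_2\in\argmax_{\sigma_2'\in\Delta A_2} u_2(\sigma_1,\sigma_2')$; since $u_2(\sigma_1,\cdot)$ is linear on the simplex $\Delta A_2$ its maximum is attained at a vertex, so $\max_{\sigma_2'\in\Delta A_2} u_2(\sigma_1,\sigma_2')=\max_{a_2'\in A_2}u_2(\sigma_1,a_2')$ and the two player-2 conditions coincide. Hence $(\sigma_1,a_2)\in\Nash^{m,p}(G)$ if and only if $(\sigma_1,a_2)\in\Nash^{m,m}(G)$, which gives the claim.

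Combining the two steps: $\vmu^*$ is an SPE in the mixed-mixed case and its round at $h(k)$ is still off-Nash, so \phenom{} occurs in $\vmu^*$ in the mixed-mixed case, yielding $G\in\glsmm$. I do not expect a genuine obstacle here; the only substantive point is the preservation claim, whose subtlety is merely that enlarging player 2's deviation set could \emph{a priori} destroy an off-Nash profile by turning $a_2$ into a best response. The linearity observation above rules this out, so the argument goes through cleanly, and indeed is strictly simpler than the proof of \Cref{thm:pp-to-mp-1}.
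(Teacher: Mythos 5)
Your proof is correct and takes essentially the same approach as the paper's: invoke \Cref{thm:spe-mp-mm} to carry the SPE over to the mixed-mixed setting, then observe that the witnessing off-Nash profile $(\sigma_1,a_2)$ remains outside $\Nash^{m,m}(G)$. One remark: the only direction of your preservation claim that the argument actually uses, namely $(\sigma_1,a_2)\notin\Nash^{m,p}(G)$ implies $(\sigma_1,a_2)\notin\Nash^{m,m}(G)$, is immediate without any linearity consideration, since a profitable deviation available in the mixed-pure setting is still available when player 2's deviation set is enlarged; your linearity argument (and the ``a priori'' worry about $a_2$ becoming a best response) pertains only to the converse implication, which the proof never needs.
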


\begin{proof}
If there exists some $T$ and some SPE of $G(T)$ where \phenom{} occurs in the mixed-pure case, denote $T^*$ to be such a $T$ and $\vmu^*$ to be such an SPE of $G(T^*)$. By \Cref{thm:spe-mp-mm}, $\vmu^*$ is also an SPE of $G(T^*)$ in the mixed-mixed case. For any strategy profile $(\sigma_1, a_2)$ where $\sigma_1\in \Delta A_1$ and $a_2\in A_2$, if $(\sigma_1, a_2)\notin \Nash^{m,p}(G)$, $(\sigma_1, a_2)\notin \Nash^{m,m}(G)$. So \phenom{} occurs in $\vmu^*$ in the mixed-mixed case. Therefore, there exists some $T$ and some SPE of $G(T)$ where \phenom{} occurs in the mixed-mixed case.
\end{proof}

\begin{lemma}
\label{lemma:no-use-3}
For all stage games $G$ where $|V_1^{m,p}| > 1$ and $|V_2^{m,p}| > 1$, if $G\in \glsmm$, then $G\in \glsmp$.
\end{lemma}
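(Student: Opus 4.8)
The plan is to mirror the proof of \Cref{lemma:no-use-2}, but one level higher: instead of descending from the mixed-pure characterization to the pure-pure one, I would descend from the mixed-mixed characterization (\Cref{thm:general}) to the mixed-pure one (\Cref{thm:pure-vs-mixed}). The overall strategy is to use $G \in \glsmm$ together with the preconditions to extract an off-Nash strategy profile in the mixed-mixed case, purify it via \Cref{lemma:mixed-pure}, and then recognize the resulting pure profile as an off-Nash profile in the mixed-pure case, which triggers case (1) of \Cref{thm:pure-vs-mixed}.

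The key structural fact I would establish first is that $\Nash^{m,p}(G) \subseteq \Nash^{m,m}(G)$, and hence $V_i^{m,p} \subseteq V_i^{m,m}$ for $i \in \{1,2\}$. This holds because for any profile $(\sigma_1, a_2)$ in which player 2 plays a pure strategy, player 2's best-response condition is identical in the two cases: a pure action maximizing $u_2(\sigma_1, \cdot)$ over $A_2$ also maximizes it over $\Delta A_2$, since $u_2(\sigma_1, \sigma_2)$ is a convex combination of the pure payoffs $u_2(\sigma_1, a_2')$ and is therefore bounded by their maximum. Consequently the preconditions $|V_1^{m,p}| > 1$ and $|V_2^{m,p}| > 1$ give $|V_1^{m,m}| > 1$ and $|V_2^{m,m}| > 1$.

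With both $|V_i^{m,m}| > 1$, the assumption $G \in \glsmm$ forces case (1) of \Cref{thm:general} (cases (2) and (3) each require some $|V_i^{m,m}| = 1$), so there exist $\hat{\sigma}_1 \in \Delta A_1$ and $\hat{\sigma}_2 \in \Delta A_2$ with $(\hat{\sigma}_1, \hat{\sigma}_2) \notin \Nash^{m,m}(G)$. Applying \Cref{lemma:mixed-pure} then yields pure strategies $a_1 \in A_1$ and $a_2 \in A_2$ with $(a_1, a_2) \notin \Nash^{m,m}(G)$, and by the inclusion above, $(a_1, a_2) \notin \Nash^{m,p}(G)$ as well. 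Viewing $a_1$ as a point-mass mixed strategy $\hat{\sigma}_1 = a_1 \in \Delta A_1$ and setting $\hat{a}_2 = a_2$, the profile $(\hat{\sigma}_1, \hat{a}_2) \notin \Nash^{m,p}(G)$ together with $|V_1^{m,p}| > 1$ and $|V_2^{m,p}| > 1$ satisfies case (1) of \Cref{thm:pure-vs-mixed}, so $G \in \glsmp$.

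Since each step is a direct invocation of an already-established theorem or lemma, the proof is short. The only point requiring care, and the natural candidate for the main obstacle, is the inclusion $\Nash^{m,p}(G) \subseteq \Nash^{m,m}(G)$, i.e.\ verifying that a best pure response remains a best response once player 2 is permitted to mix. This reduces to the standard fact that player 2's optimal mixed-strategy payoff against any fixed $\sigma_1$ is attained at a pure action, so no genuine difficulty arises.
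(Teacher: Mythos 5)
Your proposal is correct and follows essentially the same route as the paper's own proof: use $\Nash^{m,p}(G)\subseteq \Nash^{m,m}(G)$ to lift the preconditions to $|V_1^{m,m}|>1$, $|V_2^{m,m}|>1$, invoke case (1) of \Cref{thm:general} to get an off-Nash mixed profile, purify it with \Cref{lemma:mixed-pure}, and conclude via case (1) of \Cref{thm:pure-vs-mixed}. Your write-up is in fact slightly more careful than the paper's (you justify the inclusion of equilibrium sets and explicitly rule out cases (2) and (3) of \Cref{thm:general}, whereas the paper asserts these implicitly), but the argument is the same.
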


\begin{proof}
If $|V_1^{m,p}| > 1$, $|V_2^{m,p}| > 1$, and $G\in \glsmm$, since $\Nash^{m,p}(G)\subseteq \Nash^{m,m}(G)$, $|V_1^{m,m}| > 1$ and $|V_2^{m,m}| > 1$. Then since there exists some $T$ and some SPE of $G(T)$ where \phenom{} occurs in the mixed-mixed case, by \Cref{thm:general}, there exists some $\hat{\sigma}_1 \in \Delta A_1, \hat{\sigma}_2\in \Delta A_2$ where $(\hat{\sigma}_1, \hat{\sigma}_2) \notin \Nash^{m,m}(G)$. Applying \Cref{lemma:mixed-pure}, there exists some $a_1\in A_1$ and $a_2\in A_2$ where $(a_1, a_2) \notin \Nash^{p,p}(G)$. By \Cref{thm:pure-vs-mixed}, there exists some $T$ and some SPE of $G(T)$ where \phenom{} occurs in the mixed-pure case, so $G\in \glsmp$.
\end{proof}

The above lemma shows that under certain preconditions on $|V_1^{m,p}|$ and $|V_2^{m,p}|$, $\glsmp = \glsmm$.
Now the question is whether $\glsmp$ always equals $\glsmm$.
Here we show that this is not the case. We present example stage games $G$ where $G\in \glsmm$ but $G\notin \glsmp$ for each of the remaining cases regarding the values of $|V_1^{m,p}|$ and $|V_2^{m,p}|$. This shows that $\glsmp \neq \glsmm$. Combined with \Cref{thm:mp-to-mm-1}, this shows that $\glsmp$ is a proper subset of $\glsmm$.

\begin{example}
\label{example:mp-mm-00}
\Cref{tab:none_to_new_ne_flip} presents an example stage game $G$ where $|V_1^{m,p}| = 0$, $|V_2^{m,p}| = 0$, $G\in \glsmm$, and $G\notin \glsmp$. This game is essentially the stage game presented in \Cref{tab:none_to_new_ne} with column player being player 1 and row player being player 2. When player 1 can play mixed strategies and player 2 can only play pure strategies, there is no Nash equilibrium, so $|V_1^{m,p}| = 0$ and $|V_2^{m,p}| = 0$. By \Cref{thm:pure-vs-mixed}, $G\notin \glsmp$. When both players can use mixed strategies, $|V_1^{m,m}| = 1$ and $|V_2^{m,m}| > 1$. And $(a_1, b_2)$ is a strategy profile where player 1 plays a best response and player 2 does not play a best response. Therefore, the condition in \Cref{thm:general} is satisfied, so $G\in \glsmm$.

\begin{table}[ht]
    \centering
    \begin{tabular}{c|c|c|}
    & $a_2$ & $b_2$ \\
    \midrule
    $a_1$ & (0,4) & (4,0) \\
    \midrule
    $b_1$ & (3,1) & (3,1) \\
    \midrule
    $c_1$ & (3,2) & (3,2) \\
    \midrule
    $d_1$ & (4,0) & (0,4) \\
   \midrule
    \end{tabular}
    \caption{Example stage game in matrix form, row player is player 1, column player is player 2. When player 1 can play mixed strategies and player 2 can only play pure strategies, there is no Nash equilibrium. When both players can use mixed strategies, there are multiple Nash equilibria, and \phenom{} can occur in repeated games. }
    \label{tab:none_to_new_ne_flip}
\end{table}
\end{example}

\begin{example}
\label{example:mp-mm-11}
\Cref{tab:new_ne_flip} presents an example stage game $G$ where $|V_1^{m,p}| = 1$, $|V_2^{m,p}| = 1$, $G\in \glsmm$, and $G\notin \glsmp$. This game is essentially the stage game presented in \Cref{tab:new_ne} with column player being player 1 and row player being player 2. Following the arguments in \Cref{example:pp-mp-11}, for this game, $|V_1^{m,p}| = 1$, $|V_2^{m,p}| = 1$, $|V_1^{m,m}| > 1$, $|V_2^{m,m}| > 1$, and $(a_1, b_2)\notin \Nash^{m,m}(G)$. Therefore, by \Cref{thm:pure-vs-mixed,thm:general}, $G\in \glsmm$ and $G\notin \glsmp$.
\begin{table}[ht]
    \centering
    \begin{tabular}{c|c|c|}
    & $a_2$ & $b_2$ \\
    \midrule
    $a_1$ & (4,4) & (0,0) \\
      \midrule
    $b_1$ & (3,1) & (3,1) \\
       \midrule
   $c_1$ & (0,0) & (4,4) \\
   \midrule
    \end{tabular}
    \caption{Example stage game in matrix form, row player is player 1, column player is player 2. $|V_1^{m,p}| = 1$, $|V_2^{m,p}| = 1$, $|V_1^{m,m}| > 1$, and $|V_2^{m,m}| > 1$. }
    \label{tab:new_ne_flip}
\end{table}
\end{example}

\begin{example}
\label{example:mp-mm-g1}
\Cref{tab:no_multiset_sum} presents an example stage game $G$ where $|V_1^{m,p}| > 1$, $|V_2^{m,p}| = 1$, $G\in \glsmm$, and $G\notin \glsmp$. When player 1 can play mixed strategies and player 2 can only play pure strategies, the set of Nash equilibria is: 1) $(\sigma_{\lambda}, a_2)$ for all $0\leq \lambda\leq 1$ where $\sigma_{\lambda}(a_1)=\lambda$ and $\sigma_{\lambda}(b_1) = 1 - \lambda$, and 2) $(\sigma'_{\theta}, b_2)$ for all $0\leq \theta\leq 1$ where $\sigma'_{\theta}(b_1)=\theta$ and $\sigma'_{\theta}(c_1) = 1 - \lambda$. So $|V_1^{m,p}| > 1$ and $|V_2^{m,p}| = 1$. For any $\hat{\sigma}_1 \in \Delta A_1, \hat{a}_2 \in A_2$ where $\hat{a}_2$ is a best response to $\hat{\sigma}_1$ and $\hat{\sigma}_1$ is not a best response to $\hat{a}_2$, if $\hat{a}_2=a_2$, $S_{\hat{\sigma}_1}$ must include $c_1$ and at least one of $a_1$ and $b_1$; if $\hat{a}_2=b_2$, $S_{\hat{\sigma}_1}$ must include $a_1$ and at least one of $b_1$ and $c_1$. In all these cases, there is some $a, a'\in S_{\hat{\sigma}_1}$ where $u_1(a, \hat{a}_2) - u_1(a', \hat{a}_2)=0.5$. But $V_1 = \{2,3\}$ only contains integer elements. So such $u_1(a, \hat{a}_2) - u_1(a', \hat{a}_2)$ can never be expressed as the sum of a sequence of values from $D = \condSet{u_1(\vsigma) - u_1(\vsigma')}{\vsigma,\vsigma' \in \Nash^{m,p}(G)}$. Therefore, the condition in \Cref{thm:pure-vs-mixed} is not satisfied, whereas the condition in \Cref{thm:general} is satisfied. So $G\in \glsmm$ and $G\notin \glsmp$. Intuitively speaking, in the mixed-pure case, although $|V_1^{m,p}| > 1$ makes player 1 potentially vulnerable to threats, there is no way to construct such a threat with the available strategy profiles. But in the mixed-mixed case, as player 2 obtains access to mixed strategies, new mechanisms for constructing threats become available (as are used in the proof of \Cref{thm:general}), which enables \phenom{} to occur. 

\begin{table}[ht]
    \centering
    \begin{tabular}{c|c|c|}
    & $a_2$ & $b_2$ \\
    \midrule
     $a_1$ & (3,2)   & (1.5,1) \\
      \midrule
     $b_1$ & (3,2)   & (2,2) \\
       \midrule
     $c_1$ & (2.5,1)  & (2,2) \\
       \midrule
    \end{tabular}
    \caption{Example stage game in matrix form, row player is player 1, column player is player 2. $|V_1^{m,p}| > 1$, $|V_2^{m,p}| = 1$. \Phenom{} can never occur in the mixed-pure case, but can occur in the mixed-mixed case.}
    \label{tab:no_multiset_sum}
\end{table}
\end{example}

\begin{example}
\label{example:mp-mm-1g}
\Cref{tab:no_best_response_flip} presents an example stage game $G$ where $|V_1^{m,p}| = 1$, $|V_2^{m,p}| > 1$, $G\in \glsmm$, and $G\notin \glsmp$.
This game is essentially the stage game presented in \Cref{tab:no_best_response} with column player being player 1 and row player being player 2. The same discussion in \Cref{example:pp-mp-g1} applies here.

\begin{table}[ht]
    \centering
    \begin{tabular}{c|c|c|c|}
    & $a_2$ & $b_2$ & $c_2$ \\
    \midrule
    $a_1$ & (2,3) & (2,3) & (1,1) \\
      \midrule
    $b_1$ & (1,1) & (2,2) & (2,2) \\
       \midrule
    \end{tabular}
    \caption{Example stage game in matrix form, row player is player 1, column player is player 2. For all $a\in A_2$, for all $\sigma_1\in \Delta A_1$ that is a best response to $a$, $a$ is also a best response to $\sigma_1$. }
    \label{tab:no_best_response_flip}
\end{table}
\end{example}

Furthermore, the following theorem completely characterizes $\glsmm \setminus \glsmp$, i.e., the set of 2-player stage games $G$ where 1) in the mixed-pure case, \phenom{} can never occur, and 2) in the mixed-mixed case, there exists some $T$ and some SPE of $G(T)$ where \phenom{} occurs.

\begin{theorem}
\label{thm:mp-to-mm-2}
For 2-player stage games $G$, $G\notin \glsmp$ and $G\in \glsmm$ if and only if
\begin{enumerate}
    \item $|V_1^{m,p}| = 0$, $|V_2^{m,p}| = 0$, and the condition in \Cref{thm:general} is satisfied, OR
    \item $|V_1^{m,p}| = 1$, $|V_2^{m,p}| = 1$, and the condition in \Cref{thm:general} is satisfied, OR
    \item $|V_1^{m,p}| > 1$, $|V_2^{m,p}| = 1$, the condition in \Cref{thm:pure-vs-mixed} is not satisfied, and the condition in \Cref{thm:general} is satisfied, OR
    \item $|V_1^{m,p}| = 1$, $|V_2^{m,p}| > 1$, the condition in \Cref{thm:pure-vs-mixed} is not satisfied, and the condition in \Cref{thm:general} is satisfied.
\end{enumerate}
\end{theorem}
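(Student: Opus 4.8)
The plan is to follow the same structure as the proof of \Cref{thm:pp-to-mp-2}, exploiting that $G\in\glsmp$ is characterized exactly by the condition in \Cref{thm:pure-vs-mixed} and $G\in\glsmm$ by the condition in \Cref{thm:general}. Thus ``$G\notin\glsmp$ and $G\in\glsmm$'' is equivalent to ``the condition in \Cref{thm:pure-vs-mixed} fails while the condition in \Cref{thm:general} holds,'' and the entire task reduces to re-expressing this conjunction in the case-split form stated in the theorem. I will prove sufficiency and necessity separately.

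For sufficiency, I would verify that each of the four listed conditions implies $G\notin\glsmp$ and $G\in\glsmm$. In all four the condition in \Cref{thm:general} is assumed, giving $G\in\glsmm$. For $G\notin\glsmp$: in conditions 3 and 4 the failure of the \Cref{thm:pure-vs-mixed} condition is assumed explicitly; in condition 1, $|V_1^{m,p}|=|V_2^{m,p}|=0$ means $\Nash^{m,p}(G)=\emptyset$, so no repeated game admits an SPE and the \Cref{thm:pure-vs-mixed} condition trivially fails; in condition 2, $|V_1^{m,p}|=|V_2^{m,p}|=1$ makes clauses (1),(2),(3) of \Cref{thm:pure-vs-mixed} simultaneously unsatisfiable, so again the condition fails. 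The example games \Cref{example:mp-mm-00,example:mp-mm-11,example:mp-mm-g1,example:mp-mm-1g} witness that each of the four conditions is realizable.

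For necessity, I would partition all stage games into the five disjoint, exhaustive cases determined by $|V_1^{m,p}|$ and $|V_2^{m,p}|$: $(0,0)$, $(1,1)$, $(>1,1)$, $(1,>1)$, and $(>1,>1)$. Exhaustiveness and mutual exclusivity rely on the observation that $|V_1^{m,p}|$ and $|V_2^{m,p}|$ count payoff values over the same set $\Nash^{m,p}(G)$, so either both are $0$ (when $\Nash^{m,p}(G)=\emptyset$) or both are at least $1$; no case with exactly one of them equal to $0$ can arise. In the first four cases, $G\notin\glsmp$ forces the condition in \Cref{thm:pure-vs-mixed} to fail and $G\in\glsmm$ forces the condition in \Cref{thm:general} to hold, which is precisely conditions 1--4 respectively (for $(0,0)$ and $(1,1)$ the failure of the \Cref{thm:pure-vs-mixed} condition is automatic, as above, and so need not be restated in the theorem).

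The one case requiring genuine work is $(>1,>1)$, where I must show that $\glsmm\setminus\glsmp$ contributes nothing, i.e., no such game can satisfy $G\notin\glsmp$ and $G\in\glsmm$ simultaneously. This is exactly \Cref{lemma:no-use-3}: when $|V_1^{m,p}|>1$ and $|V_2^{m,p}|>1$, $G\in\glsmm$ implies $G\in\glsmp$, so the conjunction is contradictory and the $(>1,>1)$ case produces no game. This emptiness step is the crux of the necessity direction; it in turn rests on \Cref{lemma:mixed-pure} (reducing an off-Nash mixed profile to an off-Nash pure profile) already invoked inside \Cref{lemma:no-use-3}, while everything else is bookkeeping that combines the three characterization theorems.
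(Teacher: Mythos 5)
Your proposal is correct and follows essentially the same route as the paper: sufficiency by direct application of the characterizations in \Cref{thm:pure-vs-mixed,thm:general}, and necessity via the five-way case split on $(|V_1^{m,p}|,|V_2^{m,p}|)$ with \Cref{lemma:no-use-3} disposing of the $(>1,>1)$ case. The additional details you supply (why the \Cref{thm:pure-vs-mixed} condition fails automatically in cases 1 and 2, and why the five cases are exhaustive) are points the paper leaves implicit, so they strengthen rather than change the argument.
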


\begin{proof}
Same as in the proof of \Cref{thm:pp-to-mp-2}, a direct application of \Cref{thm:pure-vs-mixed,thm:general} shows that the above condition is sufficient. \Cref{example:mp-mm-00,example:mp-mm-11,example:mp-mm-1g,example:mp-mm-g1} present example stage games $G$ that belong to each of the cases 1, 2, 3, and 4.

To show the condition is necessary, we split all stage games $G$ into five disjoint cases based on $|V_1^{m,p}|$ and $|V_2^{m,p}|$:
\begin{itemize}
    \item[(a)] $|V_1^{m,p}| = 0$ and $|V_2^{m,p}| = 0$,
    \item[(b)] $|V_1^{m,p}| = 1$ and $|V_2^{m,p}| = 1$,
    \item[(c)] $|V_1^{m,p}| > 1$ and $|V_2^{m,p}| = 1$,
    \item[(d)] $|V_1^{m,p}| = 1$ and $|V_2^{m,p}| > 1$,
    \item[(e)] $|V_1^{m,p}| > 1$ and $|V_2^{m,p}| > 1$.
\end{itemize}

For cases (a), (b), (c), and (d), a direct application of \Cref{thm:pure-vs-mixed,thm:general} implies that our target condition is necessary. For case (e), \Cref{lemma:no-use-3} shows that $\glsmm \setminus \glsmp$ is empty under this case. Therefore, our target condition is necessary.
\end{proof}

\subsection{Changing Both Players from Pure Strategies to Mixed Strategies}
\label{sec:pp-to-mm}

Finally, we analyze the situation for changing from the pure-pure case to the mixed-mixed case, i.e., i.e., study the relationship between $\glspp$ and $\glsmm$. The results and example games for this situation can mostly be derived directly from the results in \Cref{sec:pp-to-mp,sec:mp-to-mm}.

\begin{theorem}
\label{thm:pp-to-mm-1}
$\glspp \subseteq \glsmm$, i.e., for 2-player stage games $G$, if there exists some $T$ and some SPE of $G(T)$ where \phenom{} occurs in the pure-pure case, then there exists some $T$ and some SPE of $G(T)$ where \phenom{} occurs in the mixed-mixed case.
\end{theorem}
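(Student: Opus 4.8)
The plan is to obtain this containment by transitivity from the two chain links already established. By \Cref{thm:pp-to-mp-1} we have $\glspp \subseteq \glsmp$, and by \Cref{thm:mp-to-mm-1} we have $\glsmp \subseteq \glsmm$; composing these two inclusions immediately yields $\glspp \subseteq \glsmm$. This is the shortest route and, given that both building blocks are in hand, essentially closes the argument in one line.

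If instead a self-contained argument is desired, I would reason directly at the level of SPEs and off-Nash profiles, reusing the two SPE-preservation theorems. Suppose $G\in\glspp$; then there is some $T^*$ and some $\vmu^*\in\spe^{p,p}(G,T^*)$ in which \phenom{} occurs. Applying \Cref{thm:spe-pp-mp} gives $\vmu^*\in\spe^{m,p}(G,T^*)$, and applying \Cref{thm:spe-mp-mm} to this gives $\vmu^*\in\spe^{m,m}(G,T^*)$. Thus the \emph{same} strategy profile survives as an SPE once both players gain access to mixed strategies. It then remains only to check that the suboptimal round stays suboptimal: the offending round of $\vmu^*$ is a pure profile $(a_1,a_2)\notin\Nash^{p,p}(G)$, and enlarging either player's strategy set can only add profitable deviations, never remove them, so $(a_1,a_2)\notin\Nash^{m,m}(G)$ as well. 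Hence \phenom{} occurs in $\vmu^*$ in the mixed-mixed case, giving $G\in\glsmm$.

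There is essentially no obstacle here; the substance is entirely carried by the two previously proven links (or by \Cref{thm:spe-pp-mp,thm:spe-mp-mm}). The only point requiring minor care is the monotonicity of the off-Nash property, namely that a pure profile admitting a profitable pure deviation still admits that deviation when mixed strategies are allowed, so that a pure off-Nash profile remains off-Nash in the mixed-mixed game. I would state this observation explicitly (it is the same monotonicity already invoked in the proofs of \Cref{thm:pp-to-mp-1,thm:mp-to-mm-1}) and otherwise simply cite the transitive composition.
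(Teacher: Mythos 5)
Your first paragraph is exactly the paper's proof: \Cref{thm:pp-to-mp-1,thm:mp-to-mm-1} are composed transitively to give $\glspp \subseteq \glsmp \subseteq \glsmm$, which is all the paper does. Your alternative self-contained argument (chaining \Cref{thm:spe-pp-mp,thm:spe-mp-mm} and noting that a pure off-Nash profile stays off-Nash when mixed strategies are allowed) is also correct, but it merely inlines the proofs of those two theorems, so it is not a genuinely different route.
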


\begin{proof}
\Cref{thm:pp-to-mp-1,thm:mp-to-mm-1} directly imply this result.
\end{proof}

\begin{lemma}
\label{lemma:no-use-4}
For all stage games $G$ where $|V_1^{p,p}| > 1$ and $|V_2^{p,p}| > 1$, if $G\in \glsmm$, then $G\in \glspp$.
\end{lemma}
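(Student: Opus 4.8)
The plan is to reduce the mixed-mixed instance of \phenom{} to a pure-pure instance, exactly paralleling \Cref{lemma:no-use-2,lemma:no-use-3}. Two elementary facts drive the argument: (i) enlarging the strategy space can only enlarge the Nash equilibrium set, so $\Nash^{p,p}(G) \subseteq \Nash^{m,m}(G)$ and hence $V_i^{p,p}\subseteq V_i^{m,m}$ for each $i$; and (ii) whether a \emph{pure} strategy profile is a Nash equilibrium does not depend on whether deviations are permitted to be mixed, since a best response to a fixed opponent profile is always attainable by a pure action (the expected payoff of a mixed deviation is a convex combination of pure-action payoffs).

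First I would record that, by fact (i), the hypotheses $|V_1^{p,p}|>1$ and $|V_2^{p,p}|>1$ upgrade to $|V_1^{m,m}|>1$ and $|V_2^{m,m}|>1$. Because $G\in\glsmm$, \Cref{thm:general} guarantees that one of its three conditions holds; but conditions (2) and (3) require $|V_2^{m,m}|=1$ or $|V_1^{m,m}|=1$ respectively, so the multiplicity just established rules them out and forces condition (1). Thus there exists a mixed profile $(\hat{\sigma}_1,\hat{\sigma}_2)\notin\Nash^{m,m}(G)$.

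Next I would apply \Cref{lemma:mixed-pure} to extract a \emph{pure} off-Nash profile $(a_1,a_2)\notin\Nash^{m,m}(G)$, and then invoke fact (ii) to conclude that the same pure profile satisfies $(a_1,a_2)\notin\Nash^{p,p}(G)$. At this point the three ingredients of condition (1) of \Cref{thm:pure} are all in hand, namely $|V_1^{p,p}|>1$, $|V_2^{p,p}|>1$, and the existence of an off-Nash pure profile, so \Cref{thm:pure} yields $G\in\glspp$, completing the proof.

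I expect no serious obstacle here; the argument is a direct transcription of the earlier auxiliary lemmas. The only points requiring care are the bookkeeping that the payoff-multiplicity hypotheses survive the inclusion of the Nash sets, and the observation that pure-profile Nash status is regime-independent, so that the off-Nash witness produced by \Cref{lemma:mixed-pure} is simultaneously off-Nash in the pure-pure game. An even shorter route would be to chain the two already-established lemmas: since $V_i^{p,p}\subseteq V_i^{m,p}$ gives $|V_i^{m,p}|>1$ for both players, \Cref{lemma:no-use-3} yields $G\in\glsmp$, after which \Cref{lemma:no-use-2} yields $G\in\glspp$.
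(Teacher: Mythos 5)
Your proposal is correct, but your primary argument takes a different route from the paper. The paper's proof is exactly your closing remark: it observes $\Nash^{p,p}(G)\subseteq \Nash^{m,p}(G)\subseteq \Nash^{m,m}(G)$, so $|V_i^{p,p}|>1$ lifts to $|V_i^{m,p}|>1$ and $|V_i^{m,m}|>1$ for both players, and then simply chains \Cref{lemma:no-use-3} ($G\in\glsmm \Rightarrow G\in\glsmp$) with \Cref{lemma:no-use-2} ($G\in\glsmp \Rightarrow G\in\glspp$). Your main argument instead bypasses the mixed-pure regime entirely: you force condition (1) of \Cref{thm:general} via the necessity direction, extract a pure off-Nash profile through \Cref{lemma:mixed-pure}, observe that Nash status of a \emph{pure} profile is the same in all three regimes (the payoff of a mixed deviation against a fixed pure opponent action is a convex combination of pure-deviation payoffs), and then invoke the sufficiency of condition (1) of \Cref{thm:pure}. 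This is essentially an inlining of what the two chained lemmas do internally—their own proofs each pass through \Cref{lemma:mixed-pure} and \Cref{thm:pure-vs-mixed}—so your direct route buys independence from \Cref{thm:pure-vs-mixed} and the mixed-pure machinery altogether, at the cost of redoing a few steps the paper gets for free by citation; the paper's chaining is shorter but leans on the full stack of prior results. One point of care your write-up handles correctly: the regime-independence of pure-profile Nash status in the direction you need ($\notin\Nash^{m,m}(G)$ implies $\notin\Nash^{p,p}(G)$) is the nontrivial direction requiring the convexity observation, and the paper itself uses it only implicitly inside \Cref{lemma:no-use-3}.
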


\begin{proof}
If $|V_1^{p,p}| > 1$, $|V_2^{p,p}| > 1$, since $\Nash^{p,p}(G)\subseteq \Nash^{m,p}(G)\subseteq \Nash^{m,m}(G)$, $|V_1^{m,p}| > 1$, $|V_2^{m,p}| > 1$, $|V_1^{m,m}| > 1$ and $|V_2^{m,m}| > 1$. Therefore, by applying \Cref{lemma:no-use-2,lemma:no-use-3}, we have if $G\in \glsmm$, then $G\in \glspp$.
\end{proof}

The above lemma shows that under certain preconditions on $|V_1^{p,p}|$ and $|V_2^{p,p}|$, $\glspp = \glsmm$.
Here we present example stage games $G$ where $G\in \glsmm$ but $G\notin \glspp$ for each of the remaining cases regarding the values of $|V_1^{p,p}|$ and $|V_2^{p,p}|$. These examples are reused from \Cref{sec:pp-to-mp,sec:mp-to-mm}. This shows that $\glspp \neq \glsmm$. Combined with \Cref{thm:pp-to-mm-1}, this shows that $\glspp$ is a proper subset of $\glsmm$.

\begin{example}
\label{example:pp-mm-00}
\Cref{tab:none_to_new_ne} presents an example stage game $G$ where $|V_1^{p,p}| = 0$, $|V_2^{p,p}| = 0$, $G\in \glsmm$, and $G\notin \glspp$.
\end{example}

\begin{example}
\label{example:pp-mm-11}
\Cref{tab:new_ne} presents an example stage game $G$ where $|V_1^{p,p}| = 1$, $|V_2^{p,p}| = 1$, $G\in \glsmm$, and $G\notin \glspp$.
\end{example}

\begin{example}
\label{example:pp-mm-g1}
\Cref{tab:no_best_response} presents an example stage game $G$ where $|V_1^{p,p}| > 1$, $|V_2^{p,p}| = 1$, $G\in \glsmm$, and $G\notin \glspp$.
\end{example}

\begin{example}
\label{example:pp-mm-1g}
\Cref{tab:no_best_response_flip} presents an example stage game $G$ where $|V_1^{p,p}| = 1$, $|V_2^{p,p}| > 1$, $G\in \glsmm$, and $G\notin \glspp$.
\end{example}

The following theorem completely characterizes $\glsmm \setminus \glspp$, i.e., the set of 2-player stage games $G$ where 1) in the pure-pure case, \phenom{} can never occur, and 2) in the mixed-mixed case, there exists some $T$ and some SPE of $G(T)$ where \phenom{} occurs.

\begin{theorem}
\label{thm:pp-to-mm-2}
For 2-player stage games $G$, $G\notin \glspp$ and $G\in \glsmm$ if and only if
\begin{enumerate}
    \item $|V_1^{p,p}| = 0$, $|V_2^{p,p}| = 0$, and the condition in \Cref{thm:general} is satisfied, OR
    \item $|V_1^{p,p}| = 1$, $|V_2^{p,p}| = 1$, and the condition in \Cref{thm:general} is satisfied, OR
    \item $|V_1^{p,p}| > 1$, $|V_2^{p,p}| = 1$, the condition in \Cref{thm:pure} is not satisfied, and the condition in \Cref{thm:general} is satisfied, OR
    \item $|V_1^{p,p}| = 1$, $|V_2^{p,p}| > 1$, the condition in \Cref{thm:pure} is not satisfied, and the condition in \Cref{thm:general} is satisfied.
\end{enumerate}
\end{theorem}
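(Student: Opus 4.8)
The plan is to mirror the structure of the proofs of \Cref{thm:pp-to-mp-2,thm:mp-to-mm-2}, since the characterization theorems \Cref{thm:pure,thm:general}, together with the emptiness result of \Cref{lemma:no-use-4} and the inclusion of \Cref{thm:pp-to-mm-1}, supply everything needed. For sufficiency, I would verify that each of the four listed conditions forces both $G\notin\glspp$ and $G\in\glsmm$. Condition 1 gives $|V_1^{p,p}|=|V_2^{p,p}|=0$, so $G$ has no pure Nash equilibrium and hence no SPE in any pure-pure repeated game, yielding $G\notin\glspp$ immediately, while the assumed condition of \Cref{thm:general} gives $G\in\glsmm$. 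Condition 2 gives $|V_1^{p,p}|=|V_2^{p,p}|=1$, so \Cref{thm:pure} rules out \phenom{} in the pure-pure case (all three of its clauses require some $|V_i^{p,p}|>1$), again giving $G\notin\glspp$, with \Cref{thm:general} giving $G\in\glsmm$. Conditions 3 and 4 are symmetric; in each, the explicit assumption that the condition of \Cref{thm:pure} fails gives $G\notin\glspp$ and the assumed condition of \Cref{thm:general} gives $G\in\glsmm$. The example games referenced in \Cref{example:pp-mm-00,example:pp-mm-11,example:pp-mm-g1,example:pp-mm-1g} witness that each case is nonempty.

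For necessity I would partition all stage games into the five disjoint and exhaustive cases (a)--(e) determined by whether each of $|V_1^{p,p}|$ and $|V_2^{p,p}|$ equals $0$, equals $1$, or is $>1$, exactly as in the earlier necessity proofs. In cases (a)--(d), the requirement $G\notin\glspp$ unwinds through \Cref{thm:pure} and the requirement $G\in\glsmm$ unwinds through \Cref{thm:general}, and the resulting conjunction is precisely one of conditions 1--4; note that in cases (a) and (b) the precondition on the payoff-set sizes already forces $G\notin\glspp$, so only the \Cref{thm:general} clause remains, matching conditions 1 and 2. The one case needing separate handling is (e), $|V_1^{p,p}|>1$ and $|V_2^{p,p}|>1$: here \Cref{lemma:no-use-4} shows that $G\in\glsmm$ already implies $G\in\glspp$, so $\glsmm\setminus\glspp$ contains no game with these payoff-set sizes and the case is vacuous (consistently, none of conditions 1--4 carries the precondition $|V_1^{p,p}|>1,|V_2^{p,p}|>1$).

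Since \Cref{lemma:no-use-4}, which in turn rests on \Cref{lemma:no-use-2,lemma:no-use-3}, has already been established, there is no genuinely new obstacle: the argument is a bookkeeping combination of the three prior characterization theorems across the case split. The only point demanding care is ensuring the five cases (a)--(e) are exhaustive and pairwise disjoint and that the preconditions attached to conditions 1--4 line up exactly with cases (a)--(d) respectively, so that no game is miscounted at the boundaries between $|V_i^{p,p}|=1$ and $|V_i^{p,p}|>1$.
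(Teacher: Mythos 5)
Your proposal is correct and follows essentially the same route as the paper: sufficiency by direct application of \Cref{thm:pure,thm:general} (with \Cref{example:pp-mm-00,example:pp-mm-11,example:pp-mm-g1,example:pp-mm-1g} witnessing the cases), and necessity via the five-way split on $|V_1^{p,p}|,|V_2^{p,p}|$ with \Cref{lemma:no-use-4} disposing of the case $|V_1^{p,p}|>1,|V_2^{p,p}|>1$. The only cosmetic difference is that you make the boundary bookkeeping explicit, which the paper leaves implicit.
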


\begin{proof}
Same as in the proofs of \Cref{thm:pp-to-mp-2,thm:mp-to-mm-2}, a direct application of \Cref{thm:pure,thm:general} shows that the above condition is sufficient. \Cref{example:pp-mm-00,example:pp-mm-11,example:pp-mm-1g,example:pp-mm-g1} present example stage games $G$ that belong to each of the cases 1, 2, 3, and 4.

To show the condition is necessary, again, we split all stage games $G$ into five disjoint cases based on $|V_1^{p,p}|$ and $|V_2^{p,p}|$:
\begin{itemize}
    \item[(a)] $|V_1^{p,p}| = 0$ and $|V_2^{p,p}| = 0$,
    \item[(b)] $|V_1^{p,p}| = 1$ and $|V_2^{p,p}| = 1$,
    \item[(c)] $|V_1^{p,p}| > 1$ and $|V_2^{p,p}| = 1$,
    \item[(d)] $|V_1^{p,p}| = 1$ and $|V_2^{p,p}| > 1$,
    \item[(e)] $|V_1^{p,p}| > 1$ and $|V_2^{p,p}| > 1$.
\end{itemize}
For cases (a), (b), (c), and (d), a direct application of \Cref{thm:pure,thm:general} implies that our target condition is necessary.
For case (e), \Cref{lemma:no-use-4} shows that $\glsmm \setminus \glsmp$ is empty under this case. Therefore, our target condition is necessary.
\end{proof}

\subsection{Discussion}

The central question we focus on in this section is how changing a player (or both players) from pure-strategies-only to mixed-strategies-allowed can affect the emergence of \phenom{}.
We present here an intuitive interpretation of the results established in this section.

First, if \phenom{} can occur before the change, then after changing any player (or both players) from pure-strategies-only to mixed-strategies-allowed, \phenom{} can still occur (\Cref{thm:pp-to-mp-1,thm:mp-to-mm-1,thm:pp-to-mm-1}). So allowing players to play mixed strategies can never prohibit the emergence of \phenom{}.

On the other hand, it is possible that \phenom{} can never occur before the change, but after changing one player (or both players) from pure-strategies-only to mixed-strategies-allowed, \phenom{} can occur. Such phenomena can happen through two different mechanisms. The first one is through the introduction of new stage-game Nash equilibria. Before the change, there might be no stage-game NE, or there is only one payoff value attainable at stage-game NEs for each player ($V_1=V_2=1$), which makes neither of the players vulnerable to potential threats that force them to play locally suboptimally. After allowing one (or both) player(s) to play mixed strategies, a new set of stage-game NEs becomes available. This makes some $|V_i| > 1$, making that player vulnerable to potential threats. Cases 1 and 2 in \Cref{thm:pp-to-mp-2,thm:mp-to-mm-2,thm:pp-to-mm-2} and \Cref{example:pp-mp-00,example:pp-mp-11,example:mp-mm-00,example:mp-mm-11,example:pp-mm-00,example:pp-mm-11} belong to this type.

For the second mechanism, before the change, some player already have $|V_i| > 1$, which means they are potentially vulnerable to threats. However, there is no way of constructing such a threat in any SPE given the available strategy profiles, so \phenom{} cannot occur. After allowing one (or both) player(s) to play mixed strategies, with the newly available strategy profiles, it becomes possible to construct such a threat, which makes it possible for \phenom{} to occur. We show that this can happen in the following cases:
\begin{itemize}
    \item The player that is potentially vulnerable to threats before the change (i.e., $|V_i|>1$) changes from pure-strategies-only to mixed-strategies-allowed. This change can open up vulnerabilities for themselves, regardless of whether the opponent has access to mixed strategies or not. Case 3 in \Cref{thm:pp-to-mp-2} ($|V_1^{p,p}| > 1$, $|V_2^{p,p}| = 1$, player 1 changes from pure to mixed, player 2 can only play pure), case 4 in \Cref{thm:mp-to-mm-2} ($|V_1^{m,p}| = 1$, $|V_2^{m,p}| > 1$, player 2 changes from pure to mixed, player 1 can play mixed), and the corresponding example games \Cref{example:pp-mp-g1,example:mp-mm-1g} demonstrate this type.
    \item A player changes from pure-strategies-only to mixed-strategies-allowed when their opponent is potentially vulnerable to threats before the change (i.e., $|V_i|>1$). Here, only if their opponent has access to mixed strategies will such change be useful to create threats that were not possible before the change. Case 3 in \Cref{thm:mp-to-mm-2} and the corresponding example game \Cref{example:mp-mm-g1} demonstrate this situation. Importantly, if their opponent only has access to pure strategies, changing from pure-strategies-only to mixed-strategies-allowed will not enable a player to construct threats if it was impossible to create threats before the change. This is shown in \Cref{lemma:no-use-1}.
\end{itemize}

\section{Computational Aspects}
\label{sec:comp}

In this section, we consider the computational aspect of the problem: given an arbitrary 2-player stage game $G$, how to (algorithmically) decide if there exists some $T$ and some SPE of $G(T)$ where \phenom{} occurs. We focus on the general case where mixed strategies are allowed. A naive approach is to enumerate over $T$ and solve for all subgame-perfect equilibria for each $G(T)$. Such an approach is not only computationally inefficient, but also not guaranteed to terminate due to the unboundedness of $T$. This leaves open the question of whether the above problem is decidable or not.

\Cref{thm:general} proves a necessary and sufficient condition that is solely described on the stage game $G$, independent of $T$. Based on this condition, we present here a more efficient algorithm for deciding the above problem (\Cref{alg:decide}). This algorithm also shows that the above problem is decidable.

\begin{algorithm}
\small
\caption{Given stage game $G$, decide if there exists some $T$ and some SPE of $G(T)$ where \phenom{} occurs}
\label{alg:decide}
\begin{algorithmic}[1]
\Require $G = \{u_1(i,j), u_2(i,j)\}_{i\in [|A_1|], j\in[|A_2|]}$
\Ensure True/False
     \Function{DecideLocalSubOptimality}{$G$}
          \State $\Nash(G) \gets$ \Call{FindAllNash}{$G$} 
          \State {uniqueV1, uniqueV2 $\gets$ \Call{IsValueUnique}{$\Nash(G), G$}}
          \If{uniqueV1}
            \If{uniqueV2}
                \State \textbf{return} False
            \Else
                \State \textbf{return} \Call{ExistOff1Best}{$G$}
            \EndIf
          \Else
            \If{uniqueV2}
                \State \textbf{return} \Call{ExistOff2Best}{$G$}
            \Else
                \State \textbf{return} \Call{ExistOff}{$G$}
            \EndIf
          \EndIf
     \EndFunction
\end{algorithmic}
\end{algorithm}

The input to the algorithm is the stage game $G$, represented as a matrix of payoffs for every action profile $\{u_1(i,j), u_2(i,j)\}_{i\in [|A_1|], j\in[|A_2|]}$. Overall, the algorithm consists of three steps. The first step is to compute the set of all Nash equilibria of $G$. The second step is to determine if the set of payoff values attainable at $\Nash(G)$ is unique for each player, so as to know which case of the condition in \Cref{thm:general} we need to further check. The third step is to check if there exists the respective off-Nash strategy profiles required for each case. We now present the algorithm for each of the three steps. 

\subsection{Compute All Nash Equilibria}
There are many existing algorithms for computing the set of all Nash equilibria of two-player normal form games \cite{dickhaut1993program, audet2001enumeration, von2002computing, Avis2010}. In general, any existing algorithm can be used here as long as it can handle degenerate games where there are an infinite number of Nash equilibria. An example of such algorithm can be found in \cite{Avis2010}. It handles the potentially infinite number of Nash equilibria in degenerate games by computing the finite set of \textit{extreme equilibria}; the set of all Nash equilibria is then completely described by polytopes obtained from subsets of the extreme equilibria. 

\paragraph{Complexity}
In general, the problem of finding all Nash equilibria of a two-player normal form game is NP-hard (as \cite{GILBOA1989} shows that deciding if a game has a unique Nash equilibrium is NP-hard). Therefore, any algorithm for \textsc{FindAllNash} takes exponential time in the worst case (unless P=NP).

\subsection{Determine the Uniqueness of Payoffs at Equilibrium}
\textsc{IsValueUnique} returns two Boolean values; the first (resp. second) return value is \texttt{True} if $|V_1|=1$ (resp. $|V_2|=1$). This function can be achieved by evaluating payoffs at each extreme equilibrium and compare to see if there are more than one values for each player.

\paragraph{Complexity}
In general, a non-degenerate 2-player bimatrix game can have an exponential number of Nash equilibria \cite{von1999new, quint1997theorem}. So \textsc{IsValueUnique} can take exponential time. But in practice, this step can be done in the first step (\textsc{FindAllNash}) with a constant factor overhead, by evaluating the payoff of each extreme equilibrium immediately after the extreme equilibrium is computed in \textsc{FindAllNash} and comparing with the payoffs of the previous equiliria.

\subsection{Check the Existence of Required Off-Nash Strategy Profiles}
Based on the uniqueness of payoffs attainable at equilibrium for each player (\texttt{uniqueV1} and \texttt{uniqueV2}), we need to check the existence of off-Nash strategy profiles with the requirements corresponding to each case. 

If $|V_1| > 1$ and $|V_2| > 1$ (both \texttt{uniqueV1} and \texttt{uniqueV2} are \texttt{False}), we simply need to check the existence of an off-Nash strategy profile without further requirements. By \Cref{lemma:mixed-pure}, it suffices to check the existence of an off-Nash \textit{pure} strategy profile. \Cref{alg:existoff} achieves this functionality. 

\begin{algorithm}
\small
\caption{Check the existence of an off-Nash strategy profile}
\label{alg:existoff}
\begin{algorithmic}[1]
\Require $G = \{u_1(i,j), u_2(i,j)\}_{i\in [|A_1|], j\in[|A_2|]}$
\Ensure True/False
\Function{ExistOff}{$G$}
    \For{$j = 1, \dots, |A_2|$}
        \If{not all $u_1(i,j)$ for $i= 1, \dots, |A_1|$ are the same}
            \State \textbf{return} True
        \EndIf
    \EndFor
    \For{$i = 1, \dots, |A_1|$}
        \If{not all $u_2(i,j)$ for $j= 1, \dots, |A_2|$ are the same}
            \State \textbf{return} True
        \EndIf
    \EndFor
    \State \textbf{return} False
\EndFunction
\end{algorithmic}
\end{algorithm}

If $|V_1| > 1$ and $|V_2| = 1$ (\texttt{uniqueV1} is \texttt{False} and \texttt{uniqueV2} is \texttt{True}), we need to check the existence of an off-Nash strategy profile where player 2 plays a best response. The following lemma proves that it suffices to check the existence of an off-Nash strategy profile where player 2 plays a \textit{pure strategy} best response.

\begin{lemma}
\label{lemma:mixed-pure-2best}
For any two-player game $G$, if there exists $\sigma_1\in\Delta A_1$, $\sigma_2\in\Delta A_2$, $a_1'\in A_1$ where $u_1(\sigma_1, \sigma_2) < u_1(a_1', \sigma_2)$ and $\sigma_2$ is a best response to $\sigma_1$, then there exists $\sigma_1\in\Delta A_1$, $a_2\in A_2$, $a_1'\in A_1$ where $u_1(\sigma_1, a_2) < u_1(a_1', a_2)$ and $a_2$ is a best response to $\sigma_1$.
\end{lemma}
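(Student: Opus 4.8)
The plan is to keep player~1's strategy $\sigma_1$ fixed and extract a suitable \emph{pure} best response for player~2 from the support of the given mixed best response $\sigma_2$. The argument closely parallels the proof of \Cref{lemma:mixed-pure}: the only work is an averaging step, together with the standard observation that every pure action in the support of a mixed best response is itself a pure best response.

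First I would record the support fact. Since $\sigma_2$ is a best response to $\sigma_1$, we have $u_2(\sigma_1,\sigma_2)=\max_{a_2\in A_2} u_2(\sigma_1,a_2)$, and because $u_2(\sigma_1,\sigma_2)=\sum_{a_2\in S_{\sigma_2}}\sigma_2(a_2)\,u_2(\sigma_1,a_2)$ is a convex combination of the values $u_2(\sigma_1,a_2)$ over $a_2\in S_{\sigma_2}$, each of which is at most the maximum, every $a_2\in S_{\sigma_2}$ must attain that maximum. Hence every $a_2\in S_{\sigma_2}$ is a pure best response to $\sigma_1$. This guarantees that whichever action I pick from $S_{\sigma_2}$ will satisfy the ``$a_2$ is a best response to $\sigma_1$'' requirement of the conclusion.

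Next I would expand the hypothesis $u_1(\sigma_1,\sigma_2) < u_1(a_1',\sigma_2)$ over the support of $\sigma_2$. Writing both sides as expectations over $a_2\in S_{\sigma_2}$ gives
\begin{align*}
0 < u_1(a_1',\sigma_2) - u_1(\sigma_1,\sigma_2) = \sum_{a_2\in S_{\sigma_2}} \sigma_2(a_2)\Big(u_1(a_1',a_2) - u_1(\sigma_1,a_2)\Big).
\end{align*}
Since this is a convex combination (with strictly positive weights on $S_{\sigma_2}$) that is strictly positive, at least one summand must be strictly positive, i.e.\ there exists $a_2\in S_{\sigma_2}$ with $u_1(\sigma_1,a_2) < u_1(a_1',a_2)$. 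Combining this with the support fact, this $a_2$ is a pure best response to $\sigma_1$ and keeps $a_1'$ a profitable deviation against it, so $(\sigma_1, a_2, a_1')$ witnesses the conclusion.

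I do not expect any genuine obstacle here; the statement is a routine ``purification of the responder's action'' result. The only point requiring care is the justification that support actions of a mixed best response are themselves best responses (so that the extracted $a_2$ really is a pure best response to $\sigma_1$), which I would state explicitly rather than leave implicit. Note that player~1's strategy $\sigma_1$ and the deviation $a_1'$ are carried over unchanged, so no purification is needed on player~1's side.
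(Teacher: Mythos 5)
Your proposal is correct and follows essentially the same route as the paper's proof: keep $\sigma_1$ and $a_1'$ fixed, note that every action in $S_{\sigma_2}$ is a best response to $\sigma_1$, and expand $u_1(a_1',\sigma_2)-u_1(\sigma_1,\sigma_2)>0$ as a convex combination over $S_{\sigma_2}$ to extract a pure $a_2$ with a strictly positive term. The only difference is that you spell out the standard support-of-a-best-response fact explicitly, which the paper states without justification.
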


\begin{proof}
Let $S_{\sigma_2}$ be the support of $\sigma_2$. $\sigma_2$ is a best response to $\sigma_1$ implies that any $a\in S_{\sigma_2}$ is a best response to $\sigma_1$. We have $u_1(a_1', \sigma_2) - u_1(\sigma_1, \sigma_2) = \sum_{a\in S_{\sigma_2}} \sigma_2(a)\cdot \Big(u_1(a_1', a) - u_1(\sigma_1, a) \Big)$. Since $u_1(a_1', \sigma_2) - u_1(\sigma_1, \sigma_2) > 0$, there exists some $a_2 \in S_{\sigma_2}$ where $u_1(a_1', a_2) - u_1(\sigma_1, a_2) > 0$. This $\sigma_1\in\Delta A_1$, $a_2\in A_2$, $a_1'\in A_1$ satisfies $u_1(\sigma_1, a_2) < u_1(a_1', a_2)$ and $a_2$ is a best response to $\sigma_1$.
\end{proof}
\Cref{alg:existoff2best} presents a method for checking the existence of an off-Nash strategy profile where player 2 plays a pure strategy best response. The idea is as follows. For each possible pure strategy $j$ of player 2, we construct linear programs with constraints on player 1's mixed strategy (represented by probabilities $\{x_{i'}\}_{i'=1}^{|A_1|}$) such that $j$ is a best response. We aim to find for every action $i$ of player 1 that is not a best response to $j$, if $j$ can be a best response to a mixed strategy of player 1 containing $i$ in its support. The presented linear program achieves this purpose. If we can find such a mixed strategy $\sigma_1$ for player 1, then $(\sigma_1, j)$ is an instance of an off-Nash strategy profile where player 2 plays a best response, as desired. Since we exhaustively enumerate over all possible cases, this method is complete.

\textsc{ExistOff1Best} can be implemented using the same algorithm, exchanging player 1 and 2.

\begin{algorithm}
\small
\caption{Check the existence of an off-Nash strategy profile where player 2 plays a best response}
\label{alg:existoff2best}
\begin{algorithmic}[1]
\Require $G = \{u_1(i,j), u_2(i,j)\}_{i\in [|A_1|], j\in[|A_2|]}$
\Ensure True/False
\Function{ExistOff2Best}{$G$}
    \For{$j=1,\dots,|A_2|$}
        \State $c \gets \max_i u_1(i,j)$
        \For{$i\in [|A_1|]$ where $u_1(i,j)< c$}
            \State max\_xi $\gets$ solve the following linear program
            \begin{align*}
                \textrm{maximize: } & x_i \\
                \textrm{subject to: } & x_{i'} \geq 0, i'=1,\dots,|A_1| \\
                & \sum_{i'=1}^{|A_1|} x_{i'} = 1 \\
                & \sum_{i'=1}^{|A_1|} x_{i'} \cdot u_2(i',j) \geq \sum_{i'=1}^{|A_1|} x_{i'} \cdot u_2(i',j'), j'=1,\dots,|A_2| \\
            \end{align*}
            \If{max\_xi > 0}
                \State \textbf{return} True
            \EndIf
        \EndFor
    \EndFor
    \State \textbf{return} False
\EndFunction
\end{algorithmic}
\end{algorithm}

\paragraph{Complexity}
\textsc{ExistOff} has complexity $\mathcal{O}(|A_1|\cdot |A_2|)$. \textsc{ExistOff2Best} (and similarly \textsc{ExistOff1Best}) involves solving $\mathcal{O}(|A_1|\cdot |A_2|)$ instances of polynomial-sized linear programs. Since a linear program can be solved in polynomial time, $\textsc{ExistOff2Best}$ is a polynomial time algorithm. A vanilla support enumeration algorithm (such as \cite{dickhaut1993program}), which enumerates over all possible supports (subsets of the action sets) for the mixed strategies $\sigma_1$ and $\sigma_2$ in the required strategy profile and solve for each case, requires exponential time since there is an exponential number of possible supports. The algorithm we present here is more efficient.
\newline
\newline
Overall, the computational bottleneck is step 1, since finding all Nash equilibria is NP-hard. Step 2 can be computed within step 1 with a constant factor overhead. Step 3 can be computed in polynomial time.

\section{Generalization to $n$-player Games}
\label{sec:n-player}

Denote $I(G) = \condSet{i}{|V_i| = 1}$ as the set of players that have a unique payoff attainable at $\Nash(G)$. Given any strategy profile $\vsigma$, denote $B(\vsigma) = \condSet{i}{\sigma_i \textrm{ is a best response to } \vsigma_{-i}}$ as the set of players that plays a best response strategy. 
\begin{theorem}[$n$-player, sufficient condition]
\label{thm:n-player-suf}
    For general $n$-player games (mixed strategies allowed), a sufficient condition on the stage game $G$ for there exists some $T$ and some SPE of $G(T)$ where \phenom{} occurs is:

    There exists a strategy profile $\hat{\vsigma} = (\hat{\sigma}_1, \dots, \hat{\sigma}_n)$ where $I(G)\subseteq B(\hat{\vsigma})$ and $B(\hat{\vsigma}) \neq [n]$, and 
    \begin{enumerate}
        \item there exists $\vsigma, \vsigma' \in \Nash(G)$ where 
            \begin{enumerate}
                \item for all $\lambda\in [0,1]$, $\lambda\vsigma + (1- \lambda)\vsigma' \in \Nash(G)$, and 
                \item for some $i\in I(G)$, $\sigma_i\neq \sigma'_i$, 
            \end{enumerate}
            OR
        \item for all $i\in[n]\setminus B(\hat{\vsigma})$, 
        \begin{enumerate}
            \item $|S_{\hat{\sigma}_i}| = 1$, i.e. $\hat{\sigma}_i$ is a pure strategy, or
            \item $V_i$ contains a non-zero length continuous interval, or
            \item denote the set of possible differences in $u_i$ between pairs of NEs in the stage game as $D_i = \condSet{u_i(\vsigma) - u_i(\vsigma')}{\vsigma,\vsigma' \in \Nash(G)}$, there exists an action from the support of $\hat{\sigma}_i$ $a\in S_{\hat{\sigma}_i}$ such that, for every $a'\in S_{\hat{\sigma}_i}\setminus a$, there exists some integers $n_{\va_{I(G)}}\geq 0$ and $d_k^{\va_{I(G)}}\in D_i, k=1,\dots,n_{\va_{I(G)}}$ for each $\va_{I(G)}\in \times_{i\in I(G)} S_{\hat{\sigma}_i}$ such that $u_i(a', \hat{\vsigma}_{-i}) - u_i(a, \hat{\vsigma}_{-i}) = \sum_{\va_{I(G)}\in \times_{i\in I(G)} S_{\hat{\sigma}_i}} \hat{\vsigma}(\va_{I(G)})\cdot \sum_{k=1}^{n_{\va_{I(G)}}} d_k^{\va_{I(G)}}$. 
        \end{enumerate}
    \end{enumerate}
\end{theorem}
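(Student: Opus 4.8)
The plan is to show that whenever the stated condition holds, one can build, for a suitably large $T$, an SPE of $G(T)$ whose first-round behavior strategy profile is exactly $\hat{\vsigma}$. Since $B(\hat{\vsigma}) \neq [n]$, at least one player fails to best-respond in $\hat{\vsigma}$, so $\hat{\vsigma} \notin \Nash(G)$ and \phenom{} occurs. The skeleton of the construction mirrors the 2-player proofs (\Cref{thm:pure,thm:general,thm:pure-vs-mixed}): play $\hat{\vsigma}$ in round $1$, and in the remaining $T-1$ rounds play sequences of stage-game NEs whose identity is a function of the observed first-round action profile $\va^1$. Because every continuation consists solely of stage-game NEs, every proper subgame (from round $2$ onward) is automatically an SPE, so it suffices to verify that the constructed $\vmu$ is an NE of the root game $G(T)$ --- equivalently, that no player can raise their total payoff by a unilateral round-$1$ deviation.

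First I would split the incentive constraints by player type. For $i \in I(G)$ we have $i \in B(\hat{\vsigma})$ by hypothesis, so $i$ already best-responds in round $1$; moreover, since $|V_i| = 1$, every continuation NE-sequence yields $i$ the same total, so $i$ cannot profit from steering the continuation and is content with $\hat{\sigma}_i$. For $i \in B(\hat{\vsigma}) \setminus I(G)$, $i$ best-responds in round $1$, so I would keep the continuation independent of $i$'s own round-$1$ action (or, if convenient, punish out-of-support deviations with a $\vsmin$-based continuation); then deviating cannot help. The real work concerns the players $i \in [n] \setminus B(\hat{\vsigma})$, who play strictly suboptimally in round $1$; note $[n]\setminus B(\hat{\vsigma}) \subseteq [n] \setminus I(G)$, so each such $i$ has $|V_i| > 1$ and is "threatenable". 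For these, the continuation must (i) make $i$ exactly indifferent across all actions in $S_{\hat{\sigma}_i}$, so that $\hat{\sigma}_i$ is a repeated-game best response, and (ii) make every action outside $S_{\hat{\sigma}_i}$ strictly worse, which I would arrange by taking $T$ large enough that a long block of $\vsmax$-versus-$\vsmin$ continuation gap dominates the one-round stage gain, exactly as in the 2-player constructions.

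Condition (2) provides three routes to secure the indifference (i) for each off-best-response player $i$: if $\hat{\sigma}_i$ is pure (2a) there is nothing to balance; if $V_i$ contains a nonzero-length continuous interval (2b) any required continuation payoff can be dialed in and the indifference follows by continuity and the intermediate value theorem, exactly as in \Cref{sec:example-largeT}; and if (2c) holds, the stage-payoff gap $u_i(a', \hat{\vsigma}_{-i}) - u_i(a, \hat{\vsigma}_{-i})$ is expressed as a weighted sum of differences of NE payoffs, so, conditioning on the round-$1$ actions $\va_{I(G)}$ of the $I(G)$-players, I would string together continuation NE-sequences realizing exactly that gap --- the $n$-player analogue of the multiset-sum construction in the sufficiency proof of \Cref{thm:pure-vs-mixed}. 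Condition (1) supplies a different mechanism: along the segment $\{\lambda\vsigma + (1-\lambda)\vsigma' : \lambda\in[0,1]\} \subseteq \Nash(G)$, some $i_0 \in I(G)$ varies $\sigma_{i_0}$ while keeping its own payoff constant, so $i_0$'s first-round realization can act as a signaling and randomization device that spreads the continuation payoffs of the off-best-response players without perturbing any $I(G)$-player's payoff --- the direct generalization of Case (b) in the proof of \Cref{thm:general}.

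The hard part will be coordinating all $n$ incentive constraints simultaneously. In the 2-player setting there is essentially a single threatened player and a single threatening device, whereas here the continuation, being a function of the entire first-round profile $\va^1$, must at once enforce exact indifference for every player in $[n]\setminus B(\hat{\vsigma})$ and strict deterrence of all out-of-support deviations, while leaving each $I(G)$-player's payoff untouched. Because the other players' round-$1$ actions are themselves mixed, player $i$'s continuation payoff is an expectation over $\hat{\vsigma}_{-i}$, which is precisely why condition (2c) averages over $\va_{I(G)} \in \times_{i\in I(G)} S_{\hat{\sigma}_i}$. Verifying that the segment-by-segment continuations can be assembled into a single consistent strategy profile meeting every player's constraint at once --- and that enlarging $T$ to deter out-of-support deviations does not disturb any already-achieved indifference --- is the crux of the argument and the step that does not follow mechanically from the 2-player proofs.
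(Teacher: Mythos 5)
Your overall architecture is the paper's: round 1 plays $\hat{\vsigma}$, all later rounds play sequences of stage-game NEs conditioned on round-1 play, exact indifference across $S_{\hat{\sigma}_i}$ for each $i \notin B(\hat{\vsigma})$, large-$T$ deterrence of out-of-support deviations, the three routes (2a)/(2b)/(2c), and the $\lambda\vsigma + (1-\lambda)\vsigma'$ randomization device for condition (1) are all exactly the mechanisms the paper uses. However, the step you explicitly defer --- ``coordinating all $n$ incentive constraints simultaneously'' --- is not a routine verification to be postponed; it is the one genuinely new idea that the $n$-player proof requires, and your proposal stops precisely where it is needed. As written, you never exhibit a concrete profile $\vmu^*$ and check the root-game NE property, so what you have is a plan with an acknowledged hole at its center rather than a proof.

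The paper closes that hole with a structural device your write-up circles around but never states: partition rounds $2,\dots,T$ into \emph{disjoint} blocks, one block per player $i \in [n]\setminus B(\hat{\vsigma})$, and make the play inside block $i$ a function \emph{only} of player $i$'s round-1 action (and, in case (2c), of the round-1 actions of the $I(G)$-players) --- never of the full profile $\va^1$, which is how you framed the continuation and why the coordination looked hard. This restricted dependence is what makes the constraints decouple: every block consists of stage-game NEs, and a block whose play does not depend on player $j$'s action contributes the same expected amount to $j$'s total whatever $j$ does in round 1, so player $j$'s incentive constraint involves only the round-1 stage payoff and $j$'s own block; players in $I(G)$ are indifferent to all continuations because $|V_i|=1$ forces every NE-sequence to give them the same total, and they best-respond in round 1 since $I(G)\subseteq B(\hat{\vsigma})$; and the deterrence sub-block inside player $i$'s block pays $\vsmax$ after every action in $S_{\hat{\sigma}_i}$ and $\vsmin$ otherwise, so it adds one constant to all in-support actions and cannot disturb the already-arranged indifference --- which also disposes of your worry that enlarging $T$ might break it. With this decoupling in hand, each per-player, per-condition construction you describe goes through essentially verbatim, and that is exactly how the paper's proof proceeds.
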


\begin{proof}

We prove the condition is sufficient by showing if the condition is satisfied, we can always construct some $T$ and some SPE where \phenom{} occurs.

We construct an SPE $\vmu^*$ consisting of segments. WLOG, let the first $m$ players be the ones that do not play their best response in $\hat{\vsigma}$, i.e. $[m] = [n]\setminus B(\hat{\vsigma})$. Denote $\vmu^t$ as all the $t$-th round behavior strategy profiles in $\vmu^*$ and $\vmu^{t_1:t_2}$ as all the behavior strategy profiles between the $t_1$-th round and the $t_2$-th round in $\vmu^*$. $\vmu^*$ is divided into segments: $\vmu^1, \vmu^{2:T_1}, \vmu^{T_1+1:T_2},\dots, \vmu^{T_{m-1}+1:T_m}$. We set $\vmu^1 = \hat{\vsigma}$. We let the strategies in each segment only depend on the play in the first round, not depending on any other segments. So given a play in the first round, each $\vmu^{T_{i-1}+1:T_i}$ is an SPE of the $(T_i - T_{i-1})$-round subgame.
\\
\\
\textbf{(1) is satisfied.} 
We let the strategies in the segment ending at $T_i$ only depend on the action played by player $i$ in the first round. In the following, we denote $\vmu^{T_{i-1}+1:T_i}_{|a_i}$ as the strategy profile in the segment ending at $T_i$ given player $i$ plays $a_i$ in the first round.

$\vmu^{T_{i-1}+1:T_i}$ is constructed as follows. Pick $a_i^m \in S_{\hat{\sigma}_i}$ such that $u_i(a_i^m, \hat{\vsigma}_{-i}) = \max_{a_i \in S_{\hat{\sigma}_i}} u_i(a_i, \hat{\vsigma}_{-i})$. We construct $\vmu^{T_{i-1}+1:T_i}$ such that for all $a'_i\in S_{\hat{\sigma}_i}$, $U_i(\vmu^{T_{i-1}+1:T_i}_{|a'_i}) - U_i(\vmu^{T_{i-1}+1:T_i}_{|a_i^m}) = u_i(a_i^m, \hat{\vsigma}_{-i}) - u_i(a'_i, \hat{\vsigma}_{-i})$. Let $\vsmin, \vsmax\in \Nash(G)$ such that $u_i(\vsmin) = \min (V_i)$ and $u_i(\vsmax) = \max (V_i)$, so $u_i(\vsmax) > u_i(\vsmin)$. 
By using $\vsigma, \vsigma'$ given in (1), we can construct SPEs that achieve a continuous range of values of $U_i$ for all $i\notin I(G)$. Denote $j\in I(G)$ such that $\sigma_j\neq \sigma'_j$, and $a_j \in A_j$ such that $\sigma_j(a_j)\neq \sigma'_j(a_j)$. WLOG, let $\sigma_j(a_j) > \sigma'_j(a_j)$. We can construct SPEs $\vmu(\lambda)$ parameterized by $\lambda$ as:
\begin{itemize}
    \item In the first round, play $\lambda\vsigma + (1- \lambda)\vsigma'$.
    \item If the first round play by player $j$ is $a_j$, players play $\vsmax$ in all later rounds; otherwise, players play $\vsmin$ in all later rounds.
\end{itemize}
By varying $\lambda$ from 0 to 1, we can obtain a continuous range of values for $U_i(\vmu(\lambda))$. And by setting the number of rounds larger, the value range can be arbitrarily large. Then we can use $\vmu(\lambda)$ for $\vmu^{T_{i-1}+1:T_i}_{|a_i}$ for each $a_i\in S_{\hat{\sigma}_i}$ with separately and appropriately assigned $\lambda$'s, such that $U_i(\vmu^{T_{i-1}+1:T_i}_{|a_i}) + u_i(a_i, \hat{\vsigma}_{-i})$ is a constant across all $a_i\in S_{\hat{\sigma}_i}$.
Furthermore, we can include a segment in $\vmu^{T_{i-1}+1:T_i}$ where if some $a_i\in S_{\hat{\sigma}_i}$ is played by player $i$ in the first round, players play according to $\vsmax$; otherwise, players play according to $\vsmin$.

The above construction ensures that $\vmu^*$ is an SPE. And the first round strategy profile in $\vmu^*$ does not form a stage game Nash equilibrium.
\\
\\
\textbf{(2) is satisfied.} 
We construct $\vmu^{T_{i-1}+1:T_i}$ based on which case of (a), (b), and (c) is satisfied. Again, let $\vsmin, \vsmax\in \Nash(G)$ such that $u_i(\vsmin) = \min (V_i)$ and $u_i(\vsmax) = \max (V_i)$.

If (a) is satisfied, i.e. $\hat{\sigma}_i$ is a pure strategy, denote $a_i$ as the support for $\hat{\sigma}_i$. $\vmu^{T_{i-1}+1:T_i}$ is then constructed as: if player $i$ plays $a_i$ in the first round of $\vmu^*$, players play according to $\vsmax$ in all rounds in $\vmu^{T_{i-1}+1:T_i}$; otherwise, players play according to $\vsmin$ in all rounds in $\vmu^{T_{i-1}+1:T_i}$.

If (b) is satisfied, i.e. $V_i$ contains a non-zero length continuous interval, we can use a similar construction as the case when (1) is satisfied, replacing $\vmu(\lambda)$ with repetitions of stage game NE that achieves appropriate values of $V_i$ in the continuous interval, such that $U_i(\vmu^{T_{i-1}+1:T_i}_{|a_i}) + u_i(a_i, \hat{\vsigma}_{-i})$ is a constant across all $a_i\in S_{\hat{\sigma}_i}$.

If (c) is satisfied, we let $\vmu^{T_{i-1}+1:T_i}$ depend on the play of the set of players $i\cup I(G)$ in the first round. Taking $a$ as the chosen action from the support of $\hat{\sigma}_i$ in condition (c). We further divide $\vmu^{T_{i-1}+1:T_i}$ into $|S_{\hat{\sigma}_i}|-1$ segments and denote $\vmu^{T_{i-1}+1:T_i}[a']$ as the segment corresponding to $a'\in S_{\hat{\sigma}_i}\setminus a$. For each $\va_{I(G)}\in \times_{i\in I(G)} S_{\hat{\sigma}_i}$ and $a'\in S_{\hat{\sigma}_i}\setminus a$, we construct the segment $\vmu^{T_{i-1}+1:T_i}[a']$ by setting $\vmu^{T_{i-1}+1:T_i}_{|(a, \va_{I(G)})}[a']$ and $\vmu^{T_{i-1}+1:T_i}_{|(a', \va_{I(G)})}[a']$ using corresponding sequences of stage game NEs as specified by $\{d_k^{\va_{I(G)}}\}_{k=1}^{n_{\va_{I(G)}}}$ given in (c). $\vmu^{T_{i-1}+1:T_i}_{|(a'', \va_{I(G)})}[a'] = \vmu^{T_{i-1}+1:T_i}_{|(a, \va_{I(G)})}[a']$ for $a'' \in S_{\hat{\sigma}_i}\setminus \{a'\}$. This construction ensures that $U_i(\vmu^{T_{i-1}+1:T_i}_{|a_i}) + u_i(a_i, \hat{\vsigma}_{-i})$ is a constant across all $a_i\in S_{\hat{\sigma}_i}$.

The above construction ensures that $\vmu^*$ is an SPE. And the first round strategy profile in $\vmu^*$ does not form a stage game Nash equilibrium.
    
\end{proof}

\begin{theorem}[$n$-player, necessary condition]
\label{thm:n-player-nec}
    For general $n$-player games (mixed strategies allowed), a necessary condition on the stage game $G$ for there exists some $T$ and some SPE of $G(T)$ where \phenom{} occurs is: 

    There exists a strategy profile $\hat{\vsigma} = (\hat{\sigma}_1, \dots, \hat{\sigma}_n)$ where $I(G)\subseteq B(\vsigma)$ and $B(\vsigma) \neq [n]$.
\end{theorem}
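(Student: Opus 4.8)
The plan is to prove the contrapositive: assuming that no strategy profile $\hat{\vsigma}$ satisfies both $I(G)\subseteq B(\hat{\vsigma})$ and $B(\hat{\vsigma})\neq[n]$, I will show that \phenom{} can never occur in any SPE of any $G(T)$. The negated condition is logically equivalent to the statement that \emph{every} strategy profile $\vsigma$ with $I(G)\subseteq B(\vsigma)$ is in fact a stage-game Nash equilibrium (that is, $I(G)\subseteq B(\vsigma)$ already forces $B(\vsigma)=[n]$). Under this assumption, I would prove by backward induction on the number of rounds that in any SPE $\vmu$ of any $G(T)$, the behavior strategy profile at every round lies in $\Nash(G)$, which is exactly the negation of \phenom{} in the sense of \Cref{def:local-sub}.

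For the induction, the base case is the final round: the subgame played there is just the stage game $G$, so the restriction of any SPE to it is a stage-game NE. For the inductive step, suppose that for every SPE of every $k$-round subgame, all $k$ rounds form stage-game NEs. Consider the $(k{+}1)$-to-last round of an SPE $\vmu$ and let $\hat{\vsigma}$ be the behavior strategy profile played there. The key claim is that every player $i\in I(G)$ plays a best response in $\hat{\vsigma}$. Since $|V_i|=1$, all stage-game NEs yield the same payoff $v_i$ to player $i$; by the induction hypothesis, after \emph{any} history reaching this point --- including any history produced by a unilateral deviation of player $i$ in the current round --- the continuation is an SPE of the remaining $k$-round subgame and hence consists entirely of NEs, so player $i$'s continuation payoff equals $k\cdot v_i$ independently of their current action. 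Consequently player $i$'s total payoff from this round onward is $u_i(\cdot,\hat{\vsigma}_{-i})+k\cdot v_i$, and the requirement that the conditioned profile $\vmu_{|h}$ be an NE of the $(k{+}1)$-round subgame forces $\hat{\sigma}_i$ to maximize $u_i(\cdot,\hat{\vsigma}_{-i})$, i.e.\ to be a best response. This is precisely the $n$-player analogue of the backward-induction arguments used for case 4 in the proofs of \Cref{thm:pure,thm:pure-vs-mixed}.

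Having established $I(G)\subseteq B(\hat{\vsigma})$, I would then invoke the negated condition to conclude $\hat{\vsigma}\in\Nash(G)$, which completes the inductive step and hence the induction. The main obstacle, and the only step that genuinely exploits the unique-payoff structure, is the key claim that the players in $I(G)$ are pinned to best responses: it hinges on the continuation payoff being \emph{constant} across all current-round actions of player $i$, which requires both halves of the setup --- the induction hypothesis (continuations are all-NE) and $|V_i|=1$ (every NE pays player $i$ the same value). The remaining pieces --- the base case, the reduction to the contrapositive, and the final appeal to the negated condition --- are routine.
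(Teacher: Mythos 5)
Your proof is correct, but it is organized differently from the paper's. The paper argues directly rather than via the contrapositive: given an SPE $\vmu^*$ of some $G(T^*)$ in which \phenom{} occurs, it takes $k^*$ to be the \emph{last} round at which an off-Nash behavior profile is played, and observes that in the subgame starting at that history all play after the first round consists of stage-game NEs; hence, exactly as in your key claim, every player $i\in I(G)$ has a continuation payoff that is constant (equal to $(T^*-k^*-1)\cdot v_i$) regardless of their current action, so each such player must best-respond in that round. This gives $I(G)\subseteq B\bigl(\vmu(h^*(k^*))\bigr)$ while $B\bigl(\vmu(h^*(k^*))\bigr)\neq[n]$ because that profile is off-Nash --- i.e., the off-Nash profile itself is the required witness, with no induction and no appeal to a negated hypothesis. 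Your contrapositive-plus-backward-induction argument is the mirror image of this: the paper's ``last off-Nash round'' is precisely the round at which your induction would first break down, and the mathematical core --- that $|V_i|=1$ combined with all-NE continuations pins the players in $I(G)$ to best responses --- is identical in both. The paper's version is slightly more economical; yours has the mild advantage of matching the structure of the necessity arguments in \Cref{thm:pure} and \Cref{thm:pure-vs-mixed} and of yielding, under the negated condition, the stronger structural conclusion that every SPE of every $G(T)$ is on-Nash in every round, rather than merely the nonexistence of a counterexample.
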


\begin{proof}

We prove the above condition is necessary by showing that if there exists some $T$ and some SPE of $G(T)$ where \phenom{} occurs, there exists a strategy profile $\hat{\vsigma} = (\hat{\sigma}_1, \dots, \hat{\sigma}_n)$ where $I(G)\subseteq B(\vsigma)$ and $B(\vsigma) \neq [n]$.

Let $T^*$ and some SPE $\vmu^*$ of $G(T^*)$ to be an instance where \phenom{} occurs. Let $k^* = \max \condSet{k}{\exists h(k) \textrm{ s.t. } \vmu(h(k))\notin \Nash(G)}$ be the last round where off-Nash play occurs and let $\vmu(h^*(k^*)) \notin \Nash(G)$ be a behavior strategy profile that does not form a stage-game Nash equilibrium. Denote $G_{|h^*(k^*)}$ as the subgame starting from $h^*(k^*)$. Consider $\vmu_{|h^*(k^*)}$, the strategy profile in the subgame $G_{|h^*(k^*)}$. By the above construction, all the behavior strategy profiles in $\vmu_{|h^*(k^*)}$ after the first round belong to $\Nash(G)$. Therefore, for every player $i\in I(G)$, their total payoff in $G_{|h^*(k^*)}$ after the first round does not depend on what is played in the first round. So they must play their best responses in the first round, i.e. in $\vmu(h^*(k^*))$. So $I(G)\subseteq B(\vmu(h^*(k^*)))$. And since $\vmu(h^*(k^*))\notin\Nash(G)$, $B(\vmu(h^*(k^*))) \neq [n]$. So $\vmu(h^*(k^*))$ is a strategy profile that satisfies our target condition. 

\end{proof}

\paragraph{Remark} 
In the 2-player case, we are able to prove some properties that hold for 2-player games (\Cref{lemma:matrix} and the subsequent arguments in the proof of \Cref{thm:general} that uses \Cref{lemma:matrix} to show there exists a connected component of off-Nash strategy profiles), which allows the proof of the sufficient and necessary condition for the general case where mixed strategies are allowed. It is not clear whether similar properties hold for $n$-player games. Therefore, the questions of 1) what is a sufficient and necessary condition for $n$-player games, and 2) is \Cref{ques:comp} decidable for $n$-player games, remain open problems.

\section{Related Work}
\label{sec:suboptimal-related}

Under the theme of analyzing equilibrium solutions in repeated games, a large body of work focuses on Folk Theorems, where the property of interest is: all feasible and individually rational payoff profiles can be attained in equilibria of the repeated game. 
In the context of infinitely repeated games, the original Folk Theorem asserts that all feasible and individually rational (see \Cref{sec:example-diff} for the definitions) payoff profiles can be attained in Nash equilibria of infinitely repeated games with sufficiently little discounting. 
This result is widely known in the field but not formally published, which is why it is called Folk Theorem.
\cite{aumann1994long,rubinstein1979equilibrium} show that the same result holds when we consider subgame-perfect equilibria and assume no discounting. \cite{fudenberg1986folk} proves a sufficient condition for Folk Theorem for subgame-perfect equilibria in infinitely repeated games with discounting. \cite{friedman1971non,friedman1982oligopoly} consider a variation of Folk Theorem where they show that any feasible payoff profile that Pareto dominates a Nash equilibrium of the stage game can be attained in a subgame-perfect equilibrium of the infinitely repeated game with discounting.
In the context of finitely repeated games, \cite{Benoit1985} obtained sufficient conditions for Folk Theorem for subgame-perfect equilibria, and later \cite{Smith1995} establishes necessary and sufficient conditions for subgame-perfect equilibria. Both results rely on mixed strategies are observable, meaning that players can directly observe the mixed strategies (i.e., probability distributions) used by other players in previous rounds of the game, not just the realized actions in the previous rounds; \cite{gossner1995} establishes sufficient conditions for subgame-perfect equilibrium without this assumption. 
\cite{benoit1987} obtained sufficient conditions for Nash equilibria, and \cite{gonzalez2006finitely} establishes sufficient and necessary conditions for Nash equilibria. Folk Theorem has also been studied in a broader class of repeated game models. \cite{fudenberg1986folk} considers Folk Theorem for finitely repeated game with incomplete information. \cite{fudenberg1994folk} considers infinitely repeated game with imperfect monitoring. \cite{dutta1995folk} considers infinite horizon stochastic games with perfect monitoring, and later \cite{fudenberg2011folk} considers infinite horizon stochastic games with imperfect monitoring.

A major difference between the above line of work and this work is that Folk Theorems consider the set of payoffs attainable, whereas this work considers the occurrence of off-(stage-game)-Nash play. As we demonstrate in \Cref{sec:example-diff}, the property considered in Folk Theorems and the \phenom{} property considered in this work do not have direct implications in either direction. Therefore, unlike this research, none of the above research establishes a sufficient and necessary condition for off-(stage-game)-Nash play to occur in finitely repeated games.

Several works in the literature establish additional characterizations on the equilibrium value set in repeated games. When the preconditions of Folk Theorems do not hold, these results provide some characterizations on the equilibrium value set. 
\cite{demeze2020complete} provides a complete characterization of the set of pure strategy SPE payoff profiles in the limit as the time horizon increases for finitely repeated games with perfect monitoring. 
\cite{radner1985repeated,radner1986example} characterize limiting behavior of the equilibrium value set of infinitely repeated games with imperfect monitoring as the discount factor approaches 1.
\cite{abreu1990toward} further proves properties of the equilibrium value set in infinitely repeated games with discounting and imperfect monitoring. Again, this line of work considers the set of payoffs attainable, whereas our work considers the occurrence of off-(stage-game)-Nash play. Unlike our research, none of the above research establishes a sufficient and necessary condition for off-(stage-game)-Nash play to occur in finitely repeated games.



\section*{Acknowledgment}

We thank Kai Jia and Idan Orzech for the helpful discussion and proofreading. 

%
%
%
\newpage
\bibliographystyle{splncs04nat}
\bibliography{refs}

\end{document}